\renewcommand{\backref}[1]{}
\renewcommand{\backrefalt}[4]{%
\ifcase #1 %
\or
[p.\ #2]%
\else
[pp.\ #2]%
\fi}
\let\hbar\undefined
\DeclareMathOperator*{\argmax}{arg\,max}
\definecolor{myurlcolor}{rgb}{0,0,0.9}
\newcommand{\ep}[1]{\langle #1 \rangle}
\DeclareMathOperator{\trace}{Tr}
\DeclareMathOperator*{\Expect}{\mathop{\mathbb{E}}}
\newcommand{\Ptr}[2]{\trace_{#1}\Pa{#2}}
\newcommand{\Tr}[1]{\Ptr{}{#1}}
\newcommand{\Pa}[1]{\left[#1\right]}
\theoremstyle{plain}
\newtheorem{thm}{Theorem}
\newtheorem{lem}[thm]{Lemma}
\newtheorem{cor}[thm]{Corollary}
\theoremstyle{definition}
\newtheorem{Rem}[thm]{Remark}
\newtheorem{Examp}[thm]{Example}
\newcommand*{\myproofname}{Proof}
\newcommand{\CDD}{\mathcal D}
\newcommand{\CEE}{\mathcal E}
\newcommand{\BFF}{\mathbb F}
\newcommand{\CFF}{\mathcal F}
\newcommand{\Buu}{\textbf{u}}
\newcommand{\Bxx}{\textbf{x}}
\newcommand{\be}{\begin{equation}}
\newcommand{\ee}{\end{equation}}
\renewcommand{\ge}{\geqslant}
\renewcommand{\geq}{\geqslant}
\renewcommand{\leq}{\leqslant}
\renewcommand{\le}{\leqslant}
\def\maxlinsat{\textsc{Max-LinSAT}}
\def\maxxorsat{\textsc{Max-XORSAT}}
\def\OPI{\textsc{OPI}}
\newcommand{\mm}{\parbox{1.8cm}{\centering \footnotesize matrix\\multiplication}
}
\newcommand{\sd}{\parbox{1.4cm}{\centering \footnotesize syndrome\\decoding}
}
\renewcommand{\le}{\leqslant}
\newcommand{\qinc}{Quantum Innovation Centre (Q.InC), Agency for Science, Technology and Research (A*STAR), 2 Fusionopolis Way, Innovis \#08-03, Singapore 138634, Republic of Singapore}
\newcommand{\sutd}{Science, Mathematics and Technology Cluster, Singapore University of Technology and Design, 8 Somapah Road, Singapore 487372, Republic of Singapore}
\newcommand{\ihpc}{Institute of High Performance Computing (IHPC), Agency for Science, Technology and Research (A*STAR), 1 Fusionopolis Way, \#16-16 Connexis, Singapore 138632, Republic of Singapore}
\DeclareMathAlphabet{\mathcal}{OMS}{cmsy}{m}{n}
\title{Decoded Quantum Interferometry Under Noise}
\author{Kaifeng Bu$^{1,2}$\,\orcidlink{0000-0001-8419-8736}}
\email{\href{mailto:bu.115@osu.edu}{bu.115@osu.edu} (K.Bu)}
 \address{$^1$\textnormal{Department of Mathematics, The Ohio State University, Columbus, Ohio 43210, USA}}
\address{$^2$\textnormal{Department of Physics, Harvard University, Cambridge, Massachusetts 02138, USA}}
\author{Weichen Gu$^1$}
\email{\href{mailto:gu.1213@osu.edu}{gu.1213@osu.edu} (W.Gu)}
\author{Dax Enshan Koh$^{3,4,5}$\,\orcidlink{0000-0002-8968-591X}}
\email{\href{mailto:dax_koh@ihpc.a-star.edu.sg}{dax\_koh@ihpc.a-star.edu.sg} (D.E.Koh)}
\address{$^3$\textnormal{\qinc}}
\address{$^4$\textnormal{\ihpc}}
\address{$^5$\textnormal{\sutd}}
\author{Xiang Li$^1$}
\email{\href{mailto:li.15497@osu.edu}{li.15497@osu.edu} (X.Li)}
\begin{document}

\begin{abstract}
Decoded Quantum Interferometry (DQI) is a recently proposed quantum optimization algorithm that exploits sparsity in the Fourier spectrum of objective functions, with the potential for exponential speedups over classical algorithms on suitably structured problems. While highly promising in idealized settings, its resilience to noise has until now been largely unexplored. To address this, we conduct a rigorous analysis of DQI under noise, focusing on local depolarizing noise. For the maximum linear satisfiability problem, we prove that, in the presence of noise, performance is governed by a noise-weighted sparsity parameter of the instance matrix, with solution quality decaying exponentially as sparsity decreases. We demonstrate this decay through numerical simulations on two special cases: the Optimal Polynomial Intersection problem and the Maximum XOR Satisfiability problem. The Fourier-analytic methods we develop can be readily adapted to other classes of random Pauli noise, making our framework applicable to a broad range of noisy quantum settings and offering guidance on preserving DQI's potential quantum advantage under realistic noise.
\end{abstract}

\maketitle

\setcounter{tocdepth}{1}
\tableofcontents

\section{Introduction}

Quantum optimization \cite{abbas2024challenges}---the task of using quantum algorithms to find optimal or near-optimal solutions from a space of feasible configurations---has emerged as a prominent approach in the pursuit of practical quantum advantage \cite{leng2025sub,pirnay2024principle,huang2025vast}. Several classes of algorithms within this approach have been extensively explored, including Grover's algorithm \cite{grover1996fast}, which offers a quadratic speedup for unstructured search over solution spaces; quantum adiabatic algorithms \cite{farhi2000quantum,albash2018adiabatic}, which gradually evolve a Hamiltonian whose ground state at the end of the evolution encodes the optimal solution; and variational methods \cite{moll2018quantum,cerezo2021variational} such as the quantum approximate optimization algorithm (QAOA) \cite{farhi2014quantum} and its low-depth variants \cite{herrman2022multi,vijendran2023expressive,shi2022multiangle,zhao2025symmetry}, which encode the cost function into a problem Hamiltonian and seek to approximate its ground state by minimizing the Hamiltonian's expectation value with respect to a parameterized quantum state optimized via a classical feedback loop \cite{pellow2024effect}.

Despite their promise, these approaches face critical challenges. Grover's speedup often vanishes once the oracle's internal structure is accessible classically \cite{stoudenmire2024opening}; adiabatic methods require evolution times that scale inversely with the minimum spectral gap, yielding exponential runtimes when the gap is exponentially small \cite{farhi2011quantum}; and variational algorithms lack general performance guarantees \cite{zhou2020quantum}, suffer from barren plateaus \cite{mcclean2018barren,wang2021noise,cerezo2023does, larocca2025barren} and reachability deficits \cite{akshay2020reachability}, and incur significant classical tuning overhead \cite{bittel2021training,rajakumar2024trainability}.

Decoded Quantum Interferometry (DQI), recently introduced by Jordan et al.~\cite{jordan2024optimization}, offers a fresh, non-variational alternative for quantum optimization. It harnesses quantum interference as its core resource, using a quantum Fourier transform to concentrate amplitudes on symbol strings associated with large objective values---thereby increasing the likelihood of sampling high-quality solutions. DQI leverages the sparsity that frequently characterizes the Fourier spectra of objective functions for combinatorial optimization problems, and can additionally exploit more intricate spectral structure when present. These features suggest a scalable approach with the potential for exponential speedups in specific classes of problems.

Since its introduction, subsequent work has begun to deepen DQI's theoretical and practical foundations. Patamawisut et al.\ developed explicit quantum circuit constructions for all components of DQI, including a decoder based on reversible Gauss--Jordan elimination using controlled-not and Toffoli gates, and performed a detailed resource analysis (covering depth, gate count, and qubit overhead) validated through simulations on maximum cut (MaxCut) instances with up to 30 qubits \cite{patamawisut2025quantum}. Meanwhile, Chailloux and Tillich improved the DQI-based optimal polynomial interpolation (OPI) algorithm by incorporating the Koetter--Vardy soft decoder for Reed--Solomon codes, broadening the class of structured problems for which DQI may offer advantage \cite{chailloux2025quantum}. More recently, Ralli et al.\  proposed incorporating DQI into self-consistent field (SCF) algorithms, introducing DQI-SCF as a hybrid quantum-classical strategy for optimizing Slater determinants, with potential applications in quantum chemistry workflows
\cite{ralli2025bridging}.

While these studies advance DQI under idealized assumptions, a key open question remains: How resilient is DQI to noise? Imperfections such as decoherence and gate infidelity---especially prevalent on near-term quantum devices \cite{preskill2018quantum,cheng2023noisy,preskill2025beyond}---can distort the interference patterns that DQI relies on to amplify high-quality solutions. Understanding how noise impacts DQI is therefore critical to assessing its practical viability.

In this work, we address this question by rigorously analyzing the performance of DQI under noise, focusing on the case of local depolarizing noise acting on the output state. Our analysis adopts a standard noise model satisfying the GTM (gate-independent, time-stationary, and Markovian) assumptions, where the noisy channel is modeled as an ideal unitary followed by a noise channel. This abstraction---widely used, for example, in shadow tomography \cite{koh2022classical,chen2020robust,Bunpj22,wu2024error-mitigated} and channel estimation \cite{flammia2020efficient,chen2022quantum}---captures key features of realistic noise while enabling analytical tractability.

For concreteness, we work with the maximum linear satisfiability (\maxlinsat) problem over a finite field $\mathbb{F}_p$, where the goal is to satisfy as many linear constraints as possible. Each instance is specified by a matrix $B$ whose rows encode the coefficients of the constraints, and we consider the effect of noise with strength $\epsilon$---the local depolarizing rate acting on each qudit of the output state. We show that the expected number of satisfied constraints after measurement is governed by the noise-weighted sparsity $\tau_1(B, \epsilon)$, a parameter that also determines the associated dual code. 

Our main result establishes that, in the presence of noise, the algorithm's performance decays exponentially as the sparsity of the matrix $B$ decreases, revealing a quantitative link between structural properties of the instance and robustness to noise. We illustrate these findings with numerical simulations on two special cases of \maxlinsat: the Optimal Polynomial Intersection (\OPI) problem and the Maximum XOR Satisfiability (\maxxorsat) problem over $\mathbb{F}_2$ \cite{jordan2024optimization}. In both cases, the results display the expected decay in performance under noise. The Fourier-analytic techniques underlying our analysis extend directly to other classes of random Pauli noise, making the framework readily adaptable to a broad range of noisy quantum scenarios.

The rest of the paper is structured as follows. In \cref{sec:preliminaries}, we review the DQI algorithm and its application to the \maxlinsat{} problem. In \cref{sec:noisy_DQI_with_code_distance_constraints}, we analyze the effects of noise on the behavior of the DQI algorithm under a minimum distance assumption on the underlying code. In \cref{sec:noisy_dqi_without_code_distance_constraints}, we relax this assumption and address the resulting challenges, including the non-orthogonality of certain states and a nonzero decoding failure rate. Finally, in \cref{sec:conclusion}, we summarize our main results on the noise resilience of DQI and outline promising directions for future research, including error mitigation strategies, extensions to other noise models, and comparisons with other quantum optimization algorithms.

\section{Preliminaries}
\label{sec:preliminaries}

We begin by reviewing the Decoded Quantum Interferometry (DQI) algorithm~\cite{jordan2024optimization}, with a focus on its application to the maximum linear satisfiability (\maxlinsat) constraint satisfaction problem.

We start by defining \maxlinsat{} over the field $\mathbb F_p$, where $p$ is prime. An instance consists of a matrix $B \in \mathbb F_p^{m\times n}$ and subsets $F_1,\ldots,F_m \subseteq \mathbb F_p$, and the task is to find an assignment $\mathbf{x^*} \in \mathbb F_p^n$ that satisfies as many constraints $(B\mathbf{x})_i \in F_i$
 as possible, i.e.,
    \begin{align}
    \label{eq:opt_solution}
        \mathbf{x}^* \in \argmax_{\mathbf{x} \in \mathbb F_p^n} \bigg| 
        \big\{ i \in [m]: (B\mathbf{x})_i \in F_i \big\}
        \bigg|.
    \end{align}

This optimization can be expressed equivalently in terms of an objective function $f: \mathbb F_p^n \to \mathbb Z$ defined by
\[f(\Bxx) = \sum_{i=1}^m f_i\left((B\mathbf{x})_i \right)
= \sum_{i=1}^m f_i\left(\sum_{j=1}^n B_{ij} x_j \right),\]
where $f_i: \mathbb F_p \to \{-1,1\}$ is a $\pm 1$-valued indicator function:
\[ f_i(x) = \left\{ 
\begin{aligned}
&1, && \text{ if } x\in F_i;\\
&-1, && \text{ otherwise.}
\end{aligned}\right.\]

The DQI algorithm~\cite{jordan2024optimization} uses the quantum Fourier transform to reduce such optimization problems to decoding problems, with the goal of recovering the optimal solution $\mathbf{x}^* \in \mathbb{F}_p^n$ satisfying \cref{eq:opt_solution}. 
Its core idea is to encode the problem into a quantum state, called the DQI state, which has the form:
\begin{align}
    |P(f)\rangle=\sum_{\mathbf{x} \in \mathbb{F}_p^n} P(f(\mathbf{x}))|\mathbf{x}\rangle,
    \label{eq:DQI_state}
\end{align}
where $P(f)$ is a polynomial of $f$. Measuring this state in the computational basis yields a candidate solution $\mathbf x$. By choosing $P(f)$ appropriately, the measurement outcomes can be biased towards $\mathbf{x}^*$, making it highly probable to find the correct answer in only a few measurements. 

To construct $P(f)$, we first introduce some notation. Let $\omega_p=e^{i 2 \pi / p}$, and assume that the sets $F_1,...,F_m$ all have the same cardinality $r:= |F_i| \in \{1,...,p-1\}$. Define $g_i(x):=\frac{f_i(x)-\bar{f_i}}{\varphi}$,
where $\bar{f_i}:=\frac{1}{p} \sum_{x \in \mathbb{F}_p} f_i(x)$ and $\varphi:=\left(\sum_{y \in \mathbb{F}_p}\left|f_i(y)-\bar{f_i}\right|^2\right)^{1 / 2}$. The Fourier transform of $g_i$ is given by
$
\tilde{g}_i(y)=\frac{1}{\sqrt{p}} \sum_{x \in \mathbb{F}_p} \omega_p^{y x} g_i(x),
$
which is equal to $0$ at $y=0$ and is normalized:
$\sum_{x \in \mathbb{F}_p}\left|g_i(x)\right|^2=\sum_{y \in \mathbb{F}_p}\left|\tilde{g}_i(y)\right|^2=1$.

Let $\mathbf{b}_i$ be the $i$-th row of $B$. For $k\ge 1$, define
\begin{align}\label{47_1}
P^{(k)}\left(g_1\left(\mathbf{b}_1 \cdot \mathbf{x}\right), \ldots, g_m\left(\mathbf{b}_m \cdot \mathbf{x}\right)\right)=\sum_{\substack{i_1, \ldots, i_k \\ \text { distinct }}} \prod_{i \in\left\{i_1, \ldots, i_k\right\}} g_i\left(\mathbf{b}_i \cdot \mathbf{x}\right),
\end{align}
and the corresponding normalized state
\begin{align}\label{P^(k)}
    \left|P^{(k)}\right\rangle = \frac{1}{\sqrt{p^{n-k} \binom{m}{k}}} \sum_{\Bxx \in \BFF_p^n} P^{(k)}\left(g_1\left(\mathbf{b}_1 \cdot \mathbf{x}\right), \ldots, g_m\left(\mathbf{b}_m \cdot \mathbf{x}\right)\right) \ket{\Bxx}.
\end{align} 
Then, the DQI state (\cref{eq:DQI_state}) can be expressed as
\begin{align}\label{51}
|P(f)\rangle=\sum_{k=0}^{l} w_k\left|P^{(k)}\right\rangle,
\end{align}
where $w_0,...,w_l$ are coefficients that satisfy the normalization condition $\sum_k |w_k|^2 =1$.
Further background and derivations can be found in \cref{appen:background}.

A high-level summary of DQI is shown in the circuit diagram of \Cref{fig:dqi_circuit}, where the main steps of matrix multiplication and syndrome decoding are depicted. In the noiseless setting, the subsequent measurements directly yield the solution with high probability. In the noisy setting we consider, however, local depolarizing noise acts on the output state just before measurement, effectively modeling measurement errors. The next section analyzes how such noise impacts DQI's performance.

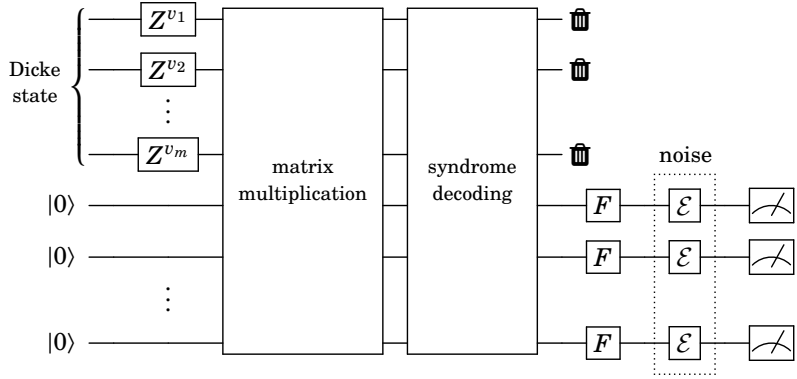
\begin{figure}
\hspace*{0.08\textwidth}
\Qcircuit @C=1em @R=.7em {
& \qw 
& \gate{Z^{v_1}} 
& \multigate{9}{\mm} 
& \multigate{9}{\sd}  
& \qw
& \rstick{\mbox{\hspace{-0.7cm} \faTrash}}
\\
& \qw 
& \gate{Z^{v_2}} 
& \ghost{\mm} 
& \ghost{\sd}
& \qw
& 
\rstick{\mbox{\hspace{-0.7cm} \faTrash}}
\\
&
& \vdots
& 
&
&
\\
&
&
&
&
&
\\
& \qw 
& \gate{Z^{v_m}} 
& \ghost{\mm}
& \ghost{\sd}
& \qw
&
\rstick{\mbox{\hspace{-0.7cm} \faTrash}}
& \mbox{\hspace{1cm} \small noise}
\\
\lstick{\ket 0}
& \qw 
& \qw
& \ghost{\mm}
& \ghost{\sd}
& \qw
& \gate{F}
& \qw
& \gate{\mathcal E}
& \qw
& \meter
\\
\lstick{\ket 0}
& \qw 
& \qw 
& \ghost{\mm}
& \ghost{\sd}
& \qw
& \gate{F}
& \qw
& \gate{\mathcal E}
& \qw
& \meter
\\
&
& \vdots
&
&
&
&
&
&
&
&
\\
&
&
&
&
&
&
&
&
\\
\lstick{\ket 0}
& \qw 
& \qw 
& \ghost{\mm}
& \ghost{\sd}
& \qw 
& \gate{F}
& \qw
& \gate{\mathcal E}
& \qw
& \meter
\inputgroupv{1}{5}{.8em}{.8em}{}
\inputgrouph{2}{5}{0.5em}{\parbox{1cm}{\centering \footnotesize Dicke\\state}}{2.2em}
\gategroup{6}{9}{10}{9}{1.2em}{.}
}

\caption{An example of a quantum circuit for Decoded Quantum Interferometry (DQI), subject to local noise at the output.
}
    \label{fig:dqi_circuit}
\end{figure}

\section{Noisy DQI with Code Distance Constraints}
\label{sec:noisy_DQI_with_code_distance_constraints}

In this section, we examine the effects of noise on the performance of the DQI algorithm, focusing on local depolarizing noise acting on the output state just before measurement. In this setting, the expected number of constraints satisfied depends on the noise level and the sparsity of the instance matrix $B$. The theorem below quantifies this dependence under a minimum code distance assumption on $B$.

\begin{thm}\label{thm:main_1}
Let $f(\mathbf{x})=\sum_{i=1}^m f_i\left(\sum_{j=1}^n B_{i j} x_j\right)$ be a \maxlinsat{} objective function with matrix $B \in \BFF_p^{m \times n}$ for a prime $p$ and positive integers $m$ and $n$ such that $m>n$. 
Suppose that $\left|f_i^{-1}(+1)\right|=r$ for some $r \in\{1, \ldots, p-1\}$. 
Let $P$ be a degree-$l$ polynomial determined by coefficients $w_0,...,w_l$ such that the DQI state $\ket{P(f)}$ satisfies \eqref{51}.
Let $\left\langle s_D^{(m, l)}\right\rangle$ be the expected number of satisfied constraints for the symbol string obtained upon measuring the errored DQI state $\CEE^{\otimes n} (\ket{P(f)}\bra{P(f)})$ in the computational basis, where $\CEE(\rho) =  (1-\varepsilon)\rho + \varepsilon\Tr{\rho}I/p$ denotes the depolarizing channel.
If $2 l+1<d^{\perp}$, where $d^{\perp}$ is the minimum distance of the code $C^{\perp}=\left\{\mathbf{v} \in \mathbb{F}_p^m: B^T \mathbf{v}=\mathbf{0}\right\}$, then
\begin{align}\label{res:main_1}
    \left\langle s_D^{(m, l)}\right\rangle=\frac{m r}{p}+\tau_1(B,\varepsilon) \frac{\sqrt{r(p-r)}}{p} \mathbf{w}^{\dagger} A^{(m, l, d)} \mathbf{w},
\end{align}
where
\begin{align}\label{eq:noise_P}
    \tau_1(B,\varepsilon)=\mathbb{E}_i\tau(B,\varepsilon,i),\quad
    \tau(B,\varepsilon,i) =  (1-\varepsilon)^{|\mathbf{b}_i|},
\end{align}
with $|\mathbf{b}_i|$
denoting the number of non-zero entries of the $i$-th row of matrix $B$,
$\mathbf{w}=\left(w_0, \ldots, w_{l}\right)^T$ is a unit vector and $A^{(m, l, d)}$ is the $(l+1) \times(l+1)$ symmetric tridiagonal matrix
\begin{align}\label{111}
A^{(m, l, d)}=\left[\begin{array}{ccccc}
0 & a_1 & & & \\
a_1 & d & a_2 & & \\
& a_2 & 2 d & \ddots & \\
& & \ddots & & a_{l} \\
& & & a_{l} & l d
\end{array}\right]
\end{align}
with $a_k=\sqrt{k(m-k+1)}$ and $d=\frac{p-2 r}{\sqrt{r(p-r)}}$.
Hence, if the matrix $B$ satisfies the following sparsity condition: 
$L_1\leq |\mathbf{b}_i|\leq L_2, \forall i\in [m]$, then 
\begin{align}\label{res:main_11}
  (1-\varepsilon)^{L_2}\frac{\sqrt{r(p-r)}}{p} \mathbf{w}^{\dagger} A^{(m, l, d)} \mathbf{w}\leq  \left\langle s_D^{(m, l)}\right\rangle- \frac{m r}{p}
   \leq (1-\varepsilon)^{L_1}\frac{\sqrt{r(p-r)}}{p} \mathbf{w}^{\dagger} A^{(m, l, d)} \mathbf{w}.
\end{align}
\end{thm}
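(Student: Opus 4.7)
The plan is to express $\langle s_D^{(m,l)}\rangle$ as a sum of expectations of single-constraint observables $G_i$, decompose each $G_i$ into generalized Pauli $Z$-strings via the discrete Fourier transform, and exploit the fact that local depolarizing noise is diagonal in the Pauli basis. Since each $f_i \in \{\pm 1\}$ has $|f_i^{-1}(1)|=r$, the substitution $f_i = \varphi g_i + \bar f_i$ with $\varphi = 2\sqrt{r(p-r)/p}$ gives $s(\mathbf{x}) = mr/p + (\varphi/2)\sum_i g_i(\mathbf{b}_i\cdot\mathbf{x})$ and hence
\[\langle s_D^{(m,l)}\rangle = \frac{mr}{p} + \frac{\varphi}{2}\sum_{i=1}^m \Tr{G_i\, \CEE^{\otimes n}(\proj{P(f)})},\]
where $G_i = \sum_{\mathbf{x}} g_i(\mathbf{b}_i\cdot\mathbf{x})\proj{\mathbf{x}}$. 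Inverse Fourier then writes $G_i = \frac{1}{\sqrt p}\sum_{y\in\mathbb{F}_p^\ast}\tilde g_i(y)\, Z(-y\mathbf{b}_i)$, with $Z(\mathbf{c})=\bigotimes_j Z^{c_j}$ supported exactly on the nonzero entries of $\mathbf{b}_i$.

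The next step is to compute the Heisenberg action of $\CEE^{\otimes n}$ on each such $Z$-string. Writing $\CEE$ as a Pauli channel and using the finite-field character identity $\sum_a \omega_p^{-y b_{ij}a} = p\,\mathbf{1}[y b_{ij}=0]$ (valid for $y\ne 0$), a qudit-wise calculation shows that the adjoint scales a non-identity $Z$ by $1-\varepsilon$ and leaves the identity fixed, so $(\CEE^{\otimes n})^\dagger(Z(-y\mathbf{b}_i)) = (1-\varepsilon)^{|\mathbf{b}_i|} Z(-y\mathbf{b}_i)$ and therefore $\Tr{G_i\,\CEE^{\otimes n}(\rho)} = (1-\varepsilon)^{|\mathbf{b}_i|}\Tr{G_i\rho}$. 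Evaluating the resulting noiseless expectations relies on the DQI structural identity of \cite{jordan2024optimization}: the pointwise relation $g_i(y)^2 = 1/p + (d/\sqrt p)\, g_i(y)$ with $d = (p-2r)/\sqrt{r(p-r)}$, together with the orthonormality of $\{\ket{P^{(k)}}\}_{k\le l}$ under the distance hypothesis $2l+1<d^\perp$ (which follows from the injectivity of the syndrome map $\mathbf{e} \mapsto \sum_j e_j \mathbf{b}_j$ on errors of weight $\le 2l$), yields the tridiagonal recursion
\[\sum_i G_i\ket{P^{(k)}} = \frac{1}{\sqrt p}\!\left[\sqrt{(k+1)(m-k)}\ket{P^{(k+1)}} + \sqrt{k(m-k+1)}\ket{P^{(k-1)}} + kd\ket{P^{(k)}}\right]\!,\]
identifying the matrix $A^{(m,l,d)}$ and the noiseless amplitude $(\sqrt{r(p-r)}/p)\,\mathbf{w}^\dagger A^{(m,l,d)}\mathbf{w}$.

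The hardest step is then pulling the row-weight factors out of $\sum_i (1-\varepsilon)^{|\mathbf{b}_i|}\Tr{G_i\proj{P(f)}}$ to recover the average $\tau_1(B,\varepsilon)$. Working in the syndrome picture, one expands $\ket{P^{(k)}}$ as a superposition of states $\ket{\sum_{j\in S} y_j \mathbf{b}_j}$ weighted by $\prod_j \tilde g_j(y_j)$ and splits the action of $G_i$ into three cases according to whether the shift $-y\mathbf{b}_i$ increases, preserves, or decreases the support size; the distance hypothesis forces a unique matching in each case, and the resulting expression for $\Tr{G_i\proj{P(f)}}$ depends on $i$ only through a combinatorial quantity symmetric in the constraint labels, so the weighted sum factors as $\tau_1(B,\varepsilon)\sum_j \Tr{G_j\proj{P(f)}}$ and \eqref{res:main_1} follows. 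The sparsity bounds \eqref{res:main_11} are then immediate from $(1-\varepsilon)^{L_2}\le(1-\varepsilon)^{|\mathbf{b}_i|}\le(1-\varepsilon)^{L_1}$, averaging to the same bound on $\tau_1(B,\varepsilon)$, and multiplying by the non-negative factor $(\sqrt{r(p-r)}/p)\,\mathbf{w}^\dagger A^{(m,l,d)}\mathbf{w}$.
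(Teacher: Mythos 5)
Your proposal is correct, and it reaches \eqref{res:main_1} by a genuinely different route from the paper's. The paper works with the uncentered observable $S_f$, replaces the depolarizing channel acting on $S_f$ by an equivalent random-$X$ channel, expands $\bra{P(f)} X^{\vec \alpha} S_f X^{-\vec \alpha} \ket{P(f)}$ over all shifts $\vec\alpha \in \BFF_p^n$, and recovers the factor $(1-\varepsilon)^{|\mathbf{b}_i|}$ only at the end, through two counting lemmas about the probability that a random weight-$t$ vector is orthogonal to $\mathbf{b}_i$. You instead center the observable, writing $s(\mathbf{x}) = mr/p + (\varphi/2)\sum_i g_i(\mathbf{b}_i\cdot\mathbf{x})$, pass to the Heisenberg picture, and use that the single-qudit depolarizing channel is self-adjoint with $\CEE^{\dagger}(Z^c) = (1-\varepsilon)Z^c$ for $c\neq 0$, so each string $Z(-y\mathbf{b}_i)$, $y\neq 0$, is simply rescaled by $(1-\varepsilon)^{|\mathbf{b}_i|}$. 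This collapses the paper's entire $\vec\alpha$-expansion and both probabilistic lemmas into one line, and it makes transparent why the result extends to arbitrary Pauli channels. Your noiseless ingredient --- the recursion for $\sum_i G_i \ket{P^{(k)}}$ obtained from $g_i^2 = 1/p + (d/\sqrt{p})\,g_i$ together with orthonormality under $2l+1 < d^{\perp}$ --- is organized per constraint rather than via the paper's global four-case computation, but the coefficients $\tfrac{1}{\sqrt p}\sqrt{(k+1)(m-k)}$, $\tfrac{1}{\sqrt p}\sqrt{k(m-k+1)}$, $\tfrac{kd}{\sqrt p}$ check out and reproduce exactly the quadratic form $\tfrac{\sqrt{r(p-r)}}{p}\,\mathbf{w}^{\dagger}A^{(m,l,d)}\mathbf{w}$.

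The one step you should make explicit is the factorization $\sum_i (1-\varepsilon)^{|\mathbf{b}_i|}\Tr{G_i \proj{P(f)}} = \tau_1(B,\varepsilon)\sum_j \Tr{G_j\proj{P(f)}}$: this needs $\Tr{G_i\proj{P(f)}}$ to be \emph{identical} for all $i$, not merely symmetric in a loose sense. It is true, but the reason is partly analytic, not only combinatorial: under the distance hypothesis the three matchings contribute, per fixed $i$, the support count $\binom{m-1}{k}$ (independent of $i$) multiplied by moments of $g_i$, namely $\sum_{a\neq 0}|\tilde g_i(a)|^2 = 1$ for the off-diagonal blocks and $\sum_{x} g_i(x)^3 = d/\sqrt{p}$ for the diagonal block, and these are $i$-independent precisely because every $|F_i| = r$. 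Spelling this out closes the argument, and it is the analogue of the $i$-uniform quantities $\sum_{v\in F_i} g_i(v)$ and $p-2r$ appearing in the paper's cases (I)--(III). Finally, your derivation of \eqref{res:main_11} multiplies the bounds on $\tau_1$ by $\tfrac{\sqrt{r(p-r)}}{p}\mathbf{w}^{\dagger}A^{(m,l,d)}\mathbf{w}$ assuming this quantity is non-negative; that is not automatic for an arbitrary unit $\mathbf{w}$ (e.g.\ $A^{(m,l,d)}$ is not positive semidefinite), but the theorem's own statement carries the same implicit assumption, so you are on equal footing with the paper there.
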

When the noise parameter $\epsilon =0$, the theorem above reduces to the result in \cite{jordan2024optimization}.
The proof of the above theorem is presented in Appendix~\ref{appen:A}. 
Based on the results of \eqref{res:main_1} and \eqref{res:main_11}, we find that high sparsity of the matrix $B$---that is, a large proportion of zero entries---is necessary to improve the expected number of satisfied constraints in this noisy case.

\begin{Examp}\label{exam:OPI}
    Consider the Optimal Polynomial Intersection (OPI) problem, an example that highlights the potential quantum speedup of the DQI algorithm on certain structured tasks \cite{jordan2024optimization}. The problem may be stated as follows: given an integer $n<p$ and subsets $F_1,...,F_{p-1} \subseteq \BFF_p$, the task is to find a polynomial $Q\in \BFF_p[y]$ of degree at most $n-1$ that maximizes the function $f_{\mathrm{OPI}}(Q) = |\{ y\in \{1,...,p-1\}: Q(y) \in F_y\}|$, which counts the number of subsets it intersects.

    Note that the OPI problem is a special case of the \maxlinsat{} problem,
    where the corresponding matrix $B=(B_{ij})$ is a $(p-1)\times n$ matrix with entries $B_{ij} = i^{j-1} $. Hence,  $\tau_1(B,\varepsilon) =(1- \varepsilon)^n$,  which has exponential decay.
     Figure \ref{fig:2} shows the exponential decay of the $\tau_1(B,\varepsilon)$ in the OPI problem for local dimension $p=97$.

\begin{figure}[htbp]
    \centering
    \begin{minipage}{0.48\textwidth}
        \centering
        \includegraphics[width=\textwidth]{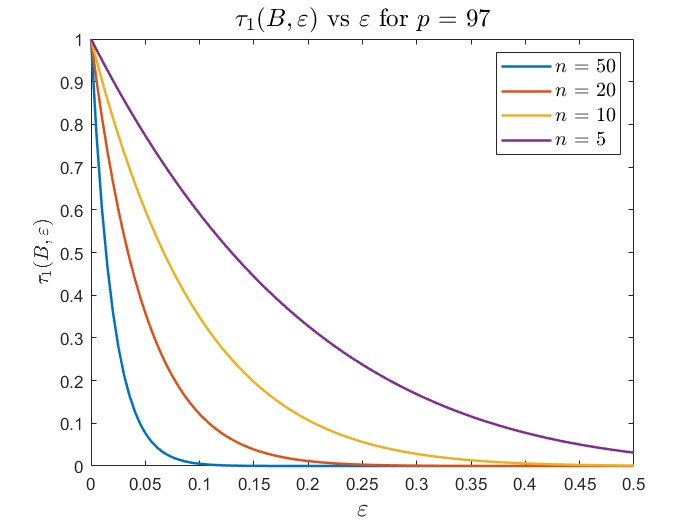}
    \end{minipage}
    
    \caption{  Here is the diagram of  $\tau_1(B,\varepsilon)$ for   the OPI problems with $0\le \varepsilon< 0.5$ and local dimension $p=97$,  as presented in Example \ref{exam:OPI}. }
    \label{fig:2}
\end{figure}

\end{Examp}

\begin{Examp}\label{exam:XORSAT}
    Here we discuss a class of sparse max-XORSAT problems considered in \cite{jordan2024optimization}.
Given a max-XORSAT problem $B\mathbf{x}  = \mathbf{v}$, 
the number of nonzero entries in the $i$-th row of $B$ is denoted as $D_i$, called the degree of the $i$-th constraint.
For convenience, we also denote 
$\kappa_j$ as the fraction of constraints that have degree $j$;
hence, $\sum_{j}  \kappa_j =1$.
By Lemma \ref{250627lem2},
$\tau(B,\varepsilon,i) =  (1-\varepsilon)^{D_i}$,
so 
\begin{align*} 
\tau_1(B,\varepsilon) =  \sum_{j} \kappa_j (1-\epsilon)^j, \quad \text{ and } \quad 
\tau_\infty(B,\varepsilon) =  \max
\left\{(1-\epsilon)^j : \kappa_j >0 \right\}.
\end{align*}
See Figure \ref{250717fig2} for a plot of the behavior of the example in \cite{jordan2024optimization}.
\begin{figure}[htbp]
\centering
    \begin{minipage}{0.48\textwidth}
        \centering
        \includegraphics[width=\textwidth]{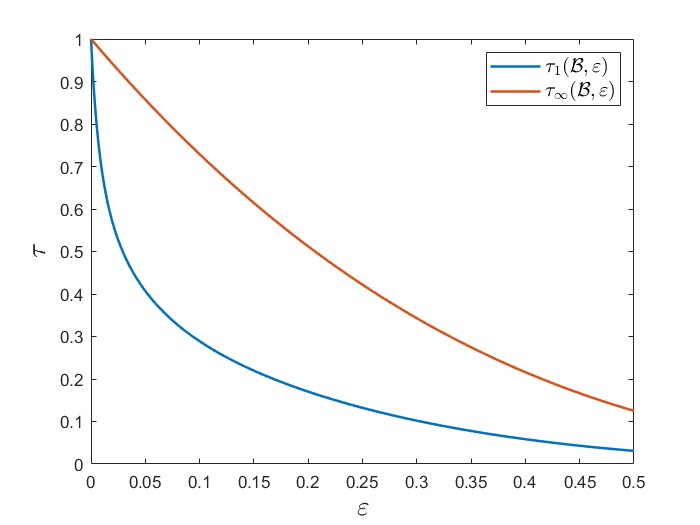}
    \end{minipage}\hfill

\caption{The functions   
$\tau_1(B,\varepsilon)$ and $\tau_\infty(B,\varepsilon)$ for the constraint degree distribution, as presented in Example~\ref{exam:XORSAT}.
    }
    \label{250717fig2}
\end{figure}

\end{Examp}

\begin{Rem}
In the above calculation, we have focused on the effect of the depolarizing channel on the expected number of satisfied constraints in the DQI algorithm. More generally, by applying the same Pauli-basis Fourier analysis~\cite{montanaro2010quantum,BD19,Bucomplexity22,BGJ23a,BGJ23b,BGJ23c,Bu2025quantum,BJPRL25}, our results extend to any random Pauli channel.
\end{Rem}

Therefore, to optimize the performance of the DQI algorithm, we choose $\mathbf{w}$  to be the principal eigenvector of $A^{(m,l,d)}$.
The following lemma from \cite{jordan2024optimization} provides an estimate of the largest eigenvalue of 
 $A^{(m,l,d)}$.

\begin{lem} [\mbox{Jordan et al.~\cite[Lemma 9.3]{jordan2024optimization}}] \label{lem9.3}Let $\lambda_{\max }^{(m, l)}$ denote the maximum eigenvalue of the symmetric tridiagonal matrix $A^{(m, l, d)}$ defined in \eqref{111}. If $l \leq m / 2$ and $d \geq-\frac{m-2 l}{\sqrt{l(m-l)}}$, then
$$
\lim _{\substack{m, l \rightarrow \infty \\ l / m=\mu}} \frac{\lambda_{\max }^{(m, l, d)}}{m}=\mu d+2 \sqrt{\mu(1-\mu)},
$$
where the limit is taken in the regime where both $m$ and $l$ tend to infinity, with the ratio $\mu=l / m$ fixed.
\end{lem}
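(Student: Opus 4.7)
The plan is to bracket $\lambda_{\max}^{(m,l,d)}/m$ from above and below by the claimed asymptotic value $\mu d + 2\sqrt{\mu(1-\mu)}$ using variational arguments adapted to the tridiagonal structure of $A^{(m,l,d)}$.

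For the upper bound, I would start from the Rayleigh characterization $\lambda_{\max}^{(m,l,d)} = \max_{\|\mathbf{w}\|=1}\mathbf{w}^\dagger A^{(m,l,d)}\mathbf{w}$ and expand
\[
\mathbf{w}^\dagger A^{(m,l,d)}\mathbf{w} = \sum_{k=0}^{l}(kd)\,w_k^2 + 2\sum_{k=1}^{l}a_k\, w_{k-1}w_k.
\]
Applying the elementary inequality $2w_{k-1}w_k \leq w_{k-1}^2 + w_k^2$ absorbs the off-diagonal contribution into a weighted diagonal, giving $\lambda_{\max}^{(m,l,d)} \leq \max_{0\leq k\leq l}(kd + a_k + a_{k+1})$ with the convention $a_0 = a_{l+1} = 0$. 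For interior $k$ this is essentially $kd + 2\sqrt{k(m-k)}$, a strictly concave function of $k$ on $(0,m)$. The hypothesis $d \geq -(m-2l)/\sqrt{l(m-l)}$ is precisely the statement that the derivative of this function at $k=l$ is nonnegative, so on $[0,l]$ its maximum is attained at the endpoint $k=l$, producing the asymptotic upper bound $ld + 2\sqrt{l(m-l)}$; after dividing by $m$, this matches the claimed limit.

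For the lower bound, I would construct an explicit trial vector concentrated near $k=l$: set $w_k = 1/\sqrt{L+1}$ for $k \in \{l-L, \ldots, l\}$ and $w_k = 0$ otherwise, where $L = L(m)$ is chosen so that $L \to \infty$ but $L/m \to 0$ as $m \to \infty$ (for instance $L = \lceil m^{1/2}\rceil$). On the support of $\mathbf{w}$ both $k/m$ and $\sqrt{k(m-k+1)}/m$ are essentially constant, equal to $\mu$ and $\sqrt{\mu(1-\mu)}$ up to an $O(L/m)$ error, so a direct computation gives $\mathbf{w}^\dagger A^{(m,l,d)}\mathbf{w}/m \to \mu d + 2\sqrt{\mu(1-\mu)}$, which combined with the Rayleigh characterization yields the matching lower bound.

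The hard part will be controlling the error terms in a mutually consistent way: the inequality $2w_{k-1}w_k \leq w_{k-1}^2 + w_k^2$ used in the upper bound is saturated precisely when $\mathbf{w}$ is essentially constant across consecutive indices, which is why the trial vector in the lower bound must be slowly varying, and one has to verify that both the boundary error of size $O(1/L)$ and the bulk variation of size $O(L/m)$ vanish under the chosen scaling. The hypothesis on $d$ is equally essential at this step: without it, the concavity analysis would push the optimizer into the interior of $[0,l]$ and the limit would instead become the unconstrained maximum $(d + \sqrt{d^2+4})/2$ of $\alpha d + 2\sqrt{\alpha(1-\alpha)}$ on $[0,1]$.
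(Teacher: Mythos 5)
The paper itself contains no proof of this lemma: it is imported verbatim from Jordan et al.\ (their Lemma 9.3), so there is no in-paper argument to compare against, and your proposal stands as an independent, self-contained derivation. On its own merits it is correct. The upper bound is sound: since $a_k\ge 0$, the bound $2w_{k-1}w_k\le w_{k-1}^2+w_k^2$ gives $\lambda_{\max}^{(m,l,d)}\le\max_{0\le k\le l}\bigl(kd+a_k+a_{k+1}\bigr)$ regardless of the sign of $d$; the only bookkeeping you still owe is the uniform estimate $a_k+a_{k+1}\le 2\sqrt{k(m-k)}+O(\sqrt{m})$ for $0\le k\le l\le m/2$ (which follows from $a_k^2-k(m-k)=k$ and $a_{k+1}^2-k(m-k)=m-k$, and is where the hypothesis $l\le m/2$ earns its keep), after which concavity of $k\mapsto kd+2\sqrt{k(m-k)}$ plus nonnegativity of its derivative at $k=l$ — which is exactly the hypothesis on $d$ — puts the maximum at the endpoint up to $o(m)$; the fact that the $k=l$ row has only one off-diagonal neighbour only lowers that term, so it cannot hurt the bound. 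The lower bound with the flat window of width $L=\lceil m^{1/2}\rceil$ ending at $k=l$ indeed loses only $O(1/L)$ at the window edges and $O(L/m)$ from the drift of $k/m$ and $a_k/m$ across the window, both vanishing under your scaling, so the two bounds meet at $\mu d+2\sqrt{\mu(1-\mu)}$; your side remark that without the hypothesis the limit would instead be the unconstrained maximum $(d+\sqrt{d^2+4})/2$ is also accurate. Compared with the cited source, which analyzes $A^{(m,l,d)}$ through its specific algebraic structure, your route is elementary Rayleigh-quotient bookkeeping; it is easier to verify line by line, at the price of yielding only the leading-order asymptotics, which is all the lemma requires.
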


\begin{cor}
Under the same assumption as Theorem~\ref{thm:main_1}, 
in the limit as $m\to \infty$,
with $l/m$ fixed such that $ \frac lm \ge 1- \frac rp$,
the optimal choice of degree-$l$ polynomial $P$ to maximize $\left\langle s_D^{(m, l)}\right\rangle$  yields
\begin{align}
\lim _{\substack{m, l \rightarrow \infty \\ l / m=\mu}} \frac{\left\langle s_D^{(m, l)}\right\rangle_{\mathrm{opt}}}{m}
=  \frac{r}{p}+ 
\tau_1(B,\varepsilon)\left(\mu - 2   \mu \frac{r}{p} +  2  
\sqrt{\frac{r}{p} \left( 1 - \frac{r}{p}\right)}
\sqrt{\mu(1-\mu)}\right).
\end{align}
Hence, if the matrix $B$ satisfies the following sparsity condition: 
$L_1\leq |B_i|\leq L_2, \forall i\in [m]$, then 
\begin{align*}
  &(1-\varepsilon)^{L_2}\left(\mu - 2   \mu \frac{r}{p} +  2  
\sqrt{\frac{r}{p} \left( 1 - \frac{r}{p}\right)}
\sqrt{\mu(1-\mu)}\right)\\
&\leq \lim _{\substack{m, l \rightarrow \infty \\ l / m=\mu}} \frac{\left\langle s_D^{(m, l)}\right\rangle_{\mathrm{opt}}}{m}
-  \frac{r}{p}\\
   &\leq (1-\varepsilon)^{L_1}\left(\mu - 2   \mu \frac{r}{p} +  2  
\sqrt{\frac{r}{p} \left( 1 - \frac{r}{p}\right)}
\sqrt{\mu(1-\mu)}\right).
\end{align*}
\end{cor}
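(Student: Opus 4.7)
The plan is to start from the exact identity \eqref{res:main_1} in \cref{thm:main_1}, maximize the right-hand side over polynomial coefficients using the Rayleigh--Ritz principle, pass to the $m \to \infty$ limit via \cref{lem9.3}, and finally apply the sparsity bounds on $\tau_1(B,\varepsilon)$ coming from its definition in \eqref{eq:noise_P}.

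First, I would observe that the only $\mathbf{w}$-dependence in \eqref{res:main_1} is through the quadratic form $\mathbf{w}^{\dagger} A^{(m, l, d)} \mathbf{w}$, and that the prefactor $\tau_1(B,\varepsilon)\sqrt{r(p-r)}/p$ is strictly positive (since $r \in \{1,\ldots,p-1\}$ and $\varepsilon \in [0,1)$). Maximizing over unit vectors $\mathbf{w}$---equivalently, over degree-$l$ polynomials $P$---therefore reduces by Rayleigh--Ritz to computing the top eigenvalue $\lambda_{\max}^{(m,l,d)}$ of the symmetric tridiagonal matrix in \eqref{111}, attained at its principal eigenvector, yielding
\begin{align*}
\left\langle s_D^{(m, l)}\right\rangle_{\mathrm{opt}} = \frac{m r}{p} + \tau_1(B,\varepsilon)\,\frac{\sqrt{r(p-r)}}{p}\,\lambda_{\max}^{(m,l,d)}.
\end{align*}

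Next, dividing by $m$ and invoking \cref{lem9.3} under the assumption $\mu = l/m \geq 1 - r/p$ (which, together with the explicit value $d = (p-2r)/\sqrt{r(p-r)}$, is what guarantees that the lemma's hypotheses hold in the relevant regime) gives $\lim \lambda_{\max}^{(m,l,d)}/m = \mu d + 2\sqrt{\mu(1-\mu)}$. Substituting $d$ and absorbing the prefactor $\sqrt{r(p-r)}/p$ is then a short algebraic simplification that produces exactly $\mu - 2\mu\,r/p + 2\sqrt{(r/p)(1-r/p)}\sqrt{\mu(1-\mu)}$, matching the claimed bracketed expression.

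Finally, the two-sided sparsity-weighted bound follows from the elementary inequality $(1-\varepsilon)^{L_2} \leq (1-\varepsilon)^{|\mathbf{b}_i|} \leq (1-\varepsilon)^{L_1}$ valid for every $i$, so that $\tau_1(B,\varepsilon) = \mathbb{E}_i (1-\varepsilon)^{|\mathbf{b}_i|}$ lies in the same interval; multiplying the asymptotic formula through by these bounds gives the desired inequality, oriented correctly because the bracketed factor is nonnegative in the stated parameter regime. The main obstacle I anticipate is verifying that $\mu \geq 1 - r/p$ indeed implies the hypotheses of \cref{lem9.3}---specifically the inequality $d \geq -(m-2l)/\sqrt{l(m-l)}$---which requires a small case split on the sign of $d$ and exploits the antisymmetry of the map $x \mapsto (1-2x)/\sqrt{x(1-x)}$ about $x = 1/2$; the remainder of the argument is purely mechanical.
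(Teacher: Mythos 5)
Your proposal follows the paper's own proof essentially step for step: maximize the quadratic form in \eqref{res:main_1} by taking $\mathbf{w}$ to be the principal eigenvector of $A^{(m,l,d)}$ (Rayleigh--Ritz), pass to the limit via \cref{lem9.3}, substitute $d=(p-2r)/\sqrt{r(p-r)}$ and simplify, and finally sandwich $\tau_1(B,\varepsilon)=\mathbb{E}_i(1-\varepsilon)^{|\mathbf{b}_i|}$ between $(1-\varepsilon)^{L_2}$ and $(1-\varepsilon)^{L_1}$. All of that matches the paper, and the algebra and the sparsity sandwich are fine.

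The one substantive caveat is precisely the step you flag as the main obstacle, namely checking the hypothesis $d\ge -(m-2l)/\sqrt{l(m-l)}$ of \cref{lem9.3}. The paper does this in one line, writing $d=\sqrt{(p-r)/r}-\sqrt{r/(p-r)}\ \ge\ \sqrt{l/(m-l)}-\sqrt{(m-l)/l}$. Since $t\mapsto \sqrt{t}-1/\sqrt{t}$ is strictly increasing, that middle inequality is \emph{equivalent} to $(p-r)/r\ge l/(m-l)$, i.e.\ to $l/m\le 1-r/p$ --- the reverse of the stated hypothesis $l/m\ge 1-r/p$. Equivalently, writing $d=\phi(r/p)$ and $-(m-2l)/\sqrt{l(m-l)}=\phi(1-l/m)$ with $\phi(x)=(1-2x)/\sqrt{x(1-x)}$ strictly decreasing, the lemma's condition holds if and only if $\mu\le 1-r/p$. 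So the implication you promise to verify ("$\mu\ge 1-r/p$ implies the hypotheses of \cref{lem9.3}") is false in general: it fails whenever $r>p/2$ and $\mu>1-r/p$, and when $r\le p/2$ the condition is automatic but only because $\mu\le 1/2\le 1-r/p$, where $l\le m/2$ is an additional hypothesis of \cref{lem9.3} that neither you nor the corollary state explicitly. This defect is inherited from the paper (its corollary's hypothesis appears to have the inequality reversed, and its own one-line verification actually uses $l/m\le 1-r/p$), so your attempt is faithful to the paper's argument; but the case split you sketch cannot close under the hypothesis as literally stated, and you should either prove the corollary under $\mu\le 1-r/p$ (plus $l\le m/2$) or flag the discrepancy rather than assert the implication.
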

\begin{proof}
Due to the fact that $d = \frac{p-2r}{\sqrt{r(p-r)}} = \sqrt{\frac{p-r}{r}} - \sqrt{\frac{r}{p-r}} 
\ge \sqrt{\frac{l}{m-l}} - \sqrt{\frac{m-l}{l}} = - \frac{m-2l}{\sqrt{l(m-l)}}$,
the condition specified in Lemma~\ref{lem9.3}  is satisfied.
Hence,
by Theorem \ref{thm:main_1}, we have
\begin{align*}
\lim _{\substack{m, l \rightarrow \infty \\ l / m=\mu}} \frac{\left\langle s_D^{(m, l)}\right\rangle_{\mathrm{opt}}}{m}
&= \frac{r}{p}+\tau_1(B,\varepsilon) 
\sqrt{\frac{r}{p} \left( 1 - \frac{r}{p}\right)} 
\lim _{\substack{m, l \rightarrow \infty \\ l / m=\mu}} \frac{\lambda_{\max }^{(m, l, d)}}{m}\\
&= \frac{r}{p}+\tau_1(B,\varepsilon) 
\sqrt{\frac{r}{p} \left( 1 - \frac{r}{p}\right)} 
(\mu d+2 \sqrt{\mu(1-\mu)})\\
&= \frac{r}{p}+ 
\tau_1(B,\varepsilon)\left(\mu - 2   \mu \frac{r}{p} +  2  
\sqrt{\frac{r}{p} \left( 1 - \frac{r}{p}\right)}
\sqrt{\mu(1-\mu)}\right),
\end{align*} 
where the last line follows from the fact that $d = \frac{p-2r}{\sqrt{r(p-r)}}$.
\end{proof}

\section{Noisy DQI without Code Distance Constraints}
\label{sec:noisy_dqi_without_code_distance_constraints}

In this section, we remove the assumption that the 
minimum distance satisfies $2l+1<d^\perp$. 
This relaxation introduces two problems that need consideration. First, the states  
$ \left|\widetilde{P}^{(0)}\right\rangle, \ldots,$ $\left|\widetilde{P}^{(l)}\right\rangle$ 
are no longer orthogonal to each other.  The general relation of the norms of these states will be discussed in
Lemma \ref{norm of FT of P(f)}.
Second, when preparing the DQI state, the decoding process may have a nonzero failure rate, preventing the exact realization of the ideal DQI state. In this section, we focus on this problem under depolarizing noise.


We assume that the imperfect decoder partitions the set $\BFF_p^m$ of errors  into $\BFF_p^m = \CDD\cup \CFF$, where $\CDD$ denotes the set of errors $\mathbf{y}$ correctly identified by the decoder based on its syndrome $B^T \mathbf{y}$,
and $\CFF$ denotes the set of errors misidentified.
When restricted to the set $E_k$ of all errors with Hamming weight $k$,
we denote $\CDD_k = \CDD \cap E_k$ and $\CFF_k = \CFF \cap E_k$,
then $E_k = \CDD_k \cup \CFF_k$.
The quantum state after the error decoding step of the DQI algorithm using an imperfect decoder is
\begin{align}
\sum_{k=0}^{l} \frac{w_k}{\sqrt{\binom{m}{k}}} 
\left( 
\sum_{\substack{\mathbf{y} \in \CDD_k}}
\tilde{g}(\mathbf{y})
\ket{\mathbf{0}}\left|B^T \mathbf{y}\right\rangle +
\sum_{\substack{\mathbf{y} \in \CFF_k}}
\tilde{g}(\mathbf{y})
\ket{\mathbf{y} \oplus \mathbf{y}'}\left|B^T \mathbf{y}\right\rangle 
\right),
\end{align}
where by $\mathbf{y}' $ we denote the error identified by the decoder based on the syndrome  $B^T \mathbf{y}$ and $\mathbf{y}' \neq \mathbf{y} $.
Then the DQI algorithm postselect on the register being $\ket{\mathbf{0}}$,
and get the following unnormalized state
\begin{align}\label{171}
\ket{ \widetilde{P}_\CDD (f)} := \sum_{k=0}^{l} \frac{w_k}{\sqrt{\binom{m}{k}}}  
\sum_{\substack{\mathbf{y} \in \CDD_k}}
\widetilde{g}(\mathbf{y})
\left|B^T \mathbf{y}\right\rangle .
\end{align}
Then the DQI algorithm provides an output $\mathbf{x}$ by measuring the state $\ket{ {P}_\CDD (f)}$ in the computational basis,
where $\ket{ {P}_\CDD (f)}$ is the inverse quantum Fourier transform of $\ket{ \widetilde{P}_\CDD (f)}$, i.e., $\ket{ \widetilde {P}_\CDD (f)} = F^{\otimes n} \ket{  {P}_\CDD (f)}$ for $F_{i,j} = \omega_{p}^{ij} /\sqrt p$, $i,j =0,...,p-1$.

To quantify the failure rate of the decoder (for a given \maxlinsat{} problem), for each Hamming weight $k$ we define
\begin{align}\label{250610eq1}
\gamma_k :=\frac{|\CFF_k|}{|E_k|} = \frac{|\CFF_k|}{(p-1)^k\binom{m}{k}}
\end{align}
and $\gamma_{\max} := \max_{0\le k\le l} \gamma_k$.
In particular,
when $p=2$,
$\gamma_k = |\CFF_k|/\binom{m}{k}$.

Now, let us estimate the  expected number of satisfied constraints for the symbol string obtained upon measuring the errored imperfect DQI state $\CEE^{\otimes n} (\ket{P_\CDD(f)}\bra{P_\CDD(f)})$ in the computational basis. Our next lemma gives an expression for the square norm of the noisy DQI state.


\begin{lem}\label{lem10.4}
The squared norm of $ \ket{ {P}_\CDD (f)}$ is
\[\ep{ {P}_\CDD (f) | {P}_\CDD (f)} =
\sum_{k=0}^{l} \frac{|w_k|^2}{ {\binom{m}{k}}}  
\sum_{\substack{\mathbf{y} \in \CDD_k}}
|\widetilde{g}(\mathbf{y})|^2
.\]
\end{lem}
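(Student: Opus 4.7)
\textbf{Proof plan for Lemma \ref{lem10.4}.} The plan is to use that the inverse quantum Fourier transform $F^{\otimes n}$ is unitary, so $\ep{P_\CDD(f)|P_\CDD(f)} = \ep{\widetilde{P}_\CDD(f)|\widetilde{P}_\CDD(f)}$, and then compute the right-hand side directly from the expansion \eqref{171}. The heart of the argument is to show that all cross terms in the resulting double sum vanish, leaving only the diagonal contribution.

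First I would expand the squared norm as
\begin{align*}
\ep{\widetilde{P}_\CDD(f)|\widetilde{P}_\CDD(f)}
=\sum_{k,k'=0}^{l} \frac{w_k^{*} w_{k'}}{\sqrt{\binom{m}{k}\binom{m}{k'}}}
\sum_{\substack{\mathbf{y}\in \CDD_k \\ \mathbf{y}'\in \CDD_{k'}}}
\widetilde{g}(\mathbf{y})^{*}\widetilde{g}(\mathbf{y}')\,
\iinner{B^T \mathbf{y}}{B^T \mathbf{y}'}.
\end{align*}
Since the syndrome registers $\ket{B^T\mathbf{y}}$ are computational basis states, $\iinner{B^T\mathbf{y}}{B^T\mathbf{y}'}$ is $1$ when $B^T\mathbf{y} = B^T\mathbf{y}'$ and $0$ otherwise. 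Thus the surviving terms are indexed by pairs $(\mathbf{y},\mathbf{y}')\in\CDD_k\times\CDD_{k'}$ with equal syndromes.

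Next I would argue that for $\mathbf{y},\mathbf{y}' \in \CDD$ with $B^T\mathbf{y} = B^T\mathbf{y}'$, we must have $\mathbf{y} = \mathbf{y}'$. This is the key observation and follows directly from the definition of $\CDD$: the imperfect decoder is a function of the syndrome, so whenever $B^T\mathbf{y} = B^T\mathbf{y}'$, it returns the same guess for both, and it cannot correctly identify both $\mathbf{y}$ and $\mathbf{y}'$ unless they are equal. In particular, $\mathbf{y} = \mathbf{y}'$ forces $k = k'$, collapsing the double sum to a diagonal sum:
\begin{align*}
\ep{\widetilde{P}_\CDD(f)|\widetilde{P}_\CDD(f)}
=\sum_{k=0}^{l} \frac{|w_k|^2}{\binom{m}{k}}
\sum_{\mathbf{y}\in \CDD_k} |\widetilde{g}(\mathbf{y})|^2,
\end{align*}
which is the stated identity.

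The only conceptual obstacle is the syndrome-injectivity step in the second paragraph; once that is in place, the rest is a straightforward application of the unitarity of $F^{\otimes n}$ and of the orthonormality of the computational basis. Notably, this argument does not require the minimum-distance hypothesis $2l+1<d^\perp$ used in \Cref{thm:main_1}; the role previously played by the code distance (which guaranteed distinct syndromes for all low-weight errors) is now taken over by the definition of $\CDD$ itself.
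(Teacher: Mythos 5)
Your proposal is correct and follows essentially the same route as the paper's proof: the key step in both is that the decoder's action depends only on the syndrome, so distinct errors in $\CDD$ must have distinct syndromes, making the states $\ket{B^T\mathbf{y}}$ orthogonal and killing all cross terms. Your write-up is merely a bit more explicit in spelling out the unitarity of $F^{\otimes n}$ and the double-sum expansion, which the paper leaves implicit.
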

\begin{proof}
Since the decoder can identify any error $\mathbf{y}\in \CDD$ only based on its syndrome $B^T \mathbf{y}$,
the syndromes $\ket{B^T \mathbf{y}}$ must be distinct for $\mathbf{y}\in \CDD$, and therefore $\ket{B^T \mathbf{y}}$ are orthogonal states.
Hence,  by the equation \eqref{171},  we have
\begin{align}
    \ep{ \widetilde{P}_\CDD (f) |\widetilde{P}_\CDD (f)}= \sum_{k=0}^{l} \frac{|w_k|^2}{ {\binom{m}{k}}}  
\sum_{\substack{\mathbf{y} \in \CDD_k}}
|\widetilde{g}(\mathbf{y})|^2.
\end{align}
Combining this with \cref{250610eq1} yields the stated result.
\end{proof}

For simplicity,
in the following we will consider the special case where $p=2$ and $r=1$. Note that the noiseless case with $p=2$ and $r=1$
 was previously considered in \cite{jordan2024optimization}.
In this case, the \maxlinsat{} problem reduces to the \maxxorsat{} problem, which may be stated as follows: given a matrix $B\in \mathrm{M}_{m\times n}(\BFF_2)$ and a vector $\mathbf{v}\in \BFF_2^m$, find an $n$-bit string $\mathbf{x}\in \BFF_2^n$ satisfying as many as possible of the $m$ linear equations modulo 2, $B\mathbf{x} =  \mathbf{v}$.

\begin{thm}\label{thm:main_3}
Let $f(\mathbf{x})=\sum_{i=1}^m f_i\left(\sum_{j=1}^n B_{i j} x_j\right)$ be a \maxlinsat{} objective function with matrix $B \in \BFF_2^{m \times n}$ for positive integers $m$ and $n$ such that $m>n$. 
Suppose that $\left|f_i^{-1}(+1)\right|=1$ for every $i$.
Let 
$P(f) $ be the degree-$l$ polynomial determined by coefficients $w_0,...,w_l$ such that the perfect DQI state $\ket{P(f)}$ satisfies \eqref{51} (note that $\ket{P(f)}$ is not normalized).
Let $\left\langle s_D^{(m, l)}\right\rangle$ denote the expected number of satisfied constraints obtained by measuring, in the computational basis, the errored imperfect DQI state $\CEE^{\otimes n} (\ket{P_\CDD(f)}\bra{P_\CDD(f)})$. Suppose the sets $F_1,...,F_m$ are chosen independently uniformly at random from  $\{\{0\},\{1\}\}$.
Then,
$$
\Expect_{F_1,...,F_m} \left\langle s_D^{(m, l)}\right\rangle\ge \frac m2 + \frac12\tau_1(B,\varepsilon)  
\frac{\mathbf{w}^{\dagger} {A}^{(m, l,0)} \mathbf{w}}{\|  \mathbf{w}\|^2} - \tau_\infty(B,\varepsilon)
\frac{ (m+1)  \gamma_{\max} }{1-\gamma_{\max} }
 ,
$$
where ${A}^{(m, l,0)}$ is the tridiagonal matrix defined in \eqref{111}, $\tau_1(B,\varepsilon)=\mathbb{E}_i\tau(B,\varepsilon,i)$, and
$\tau_\infty(B,\varepsilon)=\max_i\tau(B,\varepsilon,i)$, with $\tau(B,\varepsilon,i)$ defined in 
\eqref{eq:noise_P}.
\end{thm}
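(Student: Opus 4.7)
The plan is to adapt the Fourier-basis proof of Theorem~\ref{thm:main_1} to an imperfect decoder by treating the decoder failures as a perturbation. Writing $F_i=\{v_i\}$ (so that averaging over the $F_i$ amounts to averaging $\mathbf{v}=(v_1,\dots,v_m)$ uniformly on $\BFF_2^m$), and using $s(\mathbf{x})=(m+f(\mathbf{x}))/2$ together with the trace-preservation of $\CEE$, one has
\[
\langle s_D^{(m,l)}\rangle=\frac{m}{2}+\frac{1}{2\|\ket{P_\CDD(f)}\|^2}\bra{P_\CDD(f)}\CEE^{\otimes n\dagger}(f(X))\ket{P_\CDD(f)}.
\]
For $p=2$, $r=1$ the Pauli expansion reads $f(X)=\sum_{i=1}^m(-1)^{v_i}Z^{\mathbf{b}_i}$, and a per-qudit calculation gives $\CEE^{\otimes n\dagger}(Z^{\mathbf{b}_i})=(1-\varepsilon)^{|\mathbf{b}_i|}Z^{\mathbf{b}_i}=\tau(B,\varepsilon,i)Z^{\mathbf{b}_i}$. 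I would split $\ket{P_\CDD(f)}=\ket{P(f)}-\ket{P_\CFF(f)}$, where $\ket{P_\CFF(f)}$ is the analogue of~\eqref{171} with the sum restricted to $\CFF$, and expand the resulting bilinear form into an ``ideal'' piece $T_{PP}$, two cross pieces $T_{FP}$, and a ``failure'' piece $T_{FF}$. The normalization is non-random: Lemma~\ref{lem10.4} with $|\widetilde g(\mathbf{y})|^2=1$ gives $\|\ket{P_\CDD(f)}\|^2=\sum_k|w_k|^2(1-\gamma_k)$, depending only on the decoder, so it factors outside the $\mathbf{v}$-average.

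For $T_{PP}$ I would pass to the Fourier basis via $\ket{\widetilde P(f)}=F^{\otimes n}\ket{P(f)}$, use $F^{\otimes n}Z^{\mathbf{b}_i}F^{\otimes n\dagger}=X^{\mathbf{b}_i}$ (so each Pauli shifts $\ket{B^T\mathbf{y}}\mapsto\ket{B^T(\mathbf{y}+\mathbf{e}_i)}$), and insert $\widetilde g(\mathbf{y})=(-1)^{\mathbf{v}\cdot\mathbf{y}}$ on binary weight-$k$ strings. The averaging
\[
\Expect_{\mathbf{v}}(-1)^{v_i+\mathbf{v}\cdot(\mathbf{y}+\mathbf{y}')}=\mathbf{1}[\mathbf{y}+\mathbf{y}'=\mathbf{e}_i]
\]
plays precisely the role of the code-distance hypothesis of Theorem~\ref{thm:main_1}: it collapses the Fourier double sum to the single constraint $\mathbf{y}'=\mathbf{y}+\mathbf{e}_i$, and the binomial identity $\binom{m-1}{k}/\sqrt{\binom{m}{k}\binom{m}{k+1}}=\sqrt{(k+1)(m-k)}/m=a_{k+1}/m$ reassembles the surviving terms as $\Expect_{\mathbf{v}}T_{PP}=\tau_1(B,\varepsilon)\,\mathbf{w}^{\dagger}A^{(m,l,0)}\mathbf{w}$, recovering the $d=0$ specialization of Theorem~\ref{thm:main_1}. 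The same Fourier expansion applies to $T_{FP}$ and $T_{FF}$, but now each count is restricted to $\CFF_k$, contributing an extra factor $\gamma_k\le\gamma_{\max}$; bounding each $\tau(B,\varepsilon,i)\le\tau_\infty(B,\varepsilon)$ and using the Gershgorin--AM--GM estimate $\lambda_{\max}(A^{(m,l,0)})\le 2\max_k\sqrt{k(m-k+1)}\le m+1$ gives $|\Expect_{\mathbf{v}}T_{FP}|,\ |\Expect_{\mathbf{v}}T_{FF}|\le\tau_\infty(B,\varepsilon)(m+1)\gamma_{\max}\|\mathbf{w}\|^2$.

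Combining the main term and the three error contributions in $\Expect_{\mathbf{v}}T$, dividing by $2\|\ket{P_\CDD(f)}\|^2$, and using the deterministic bound $\|\ket{P_\CDD(f)}\|^2\ge(1-\gamma_{\max})\|\mathbf{w}\|^2$ yields the stated lower bound; the $1/(1-\gamma_{\max})$ factor comes from the normalization and the linear-in-$m$ factor consolidates into the claimed $\tau_\infty(B,\varepsilon)(m+1)\gamma_{\max}/(1-\gamma_{\max})$ form. The main obstacle is executing the Fourier-basis combinatorics of the second paragraph carefully: one must verify both that the $\mathbf{v}$-average of $T_{PP}$ assembles into $\tau_1(B,\varepsilon)\mathbf{w}^\dagger A^{(m,l,0)}\mathbf{w}$ with exactly the prefactor $\tau_1$ (rather than some other weighted average of the $\tau(B,\varepsilon,i)$), and that the $\CFF_k$-restricted sums in $T_{FP}$ and $T_{FF}$ telescope cleanly into the $(m+1)\gamma_{\max}$ bound rather than acquiring extra factors from the fine structure of $\CFF$.
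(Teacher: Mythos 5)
Your overall architecture is sound and is essentially the paper's: a deterministic normalization $\|\ket{P_\CDD(f)}\|^2=\sum_k|w_k|^2(1-\gamma_k)\ge(1-\gamma_{\max})\|\mathbf{w}\|^2$, a Fourier-basis expansion in which the average over the random $F_i$ (your $\mathbf{v}$-average) kills all pairs except $\mathbf{y}'=\mathbf{y}+\mathbf{e}_i$ — exactly the role played in the paper by Lemma~\ref{250612lem1} — and a binomial reassembly giving the main term $\tau_1(B,\varepsilon)\,\mathbf{w}^\dagger A^{(m,l,0)}\mathbf{w}$. Your Heisenberg-picture treatment of the noise, $\CEE^{\otimes n\dagger}(Z^{\mathbf{b}_i})=(1-\varepsilon)^{|\mathbf{b}_i|}Z^{\mathbf{b}_i}$, is a genuinely cleaner substitute for the paper's detour through Lemma~\ref{250627lem2}, and pulling out $\tfrac m2 I$ before splitting the state handles the constant term exactly, as the paper's cancellation of the $\tfrac m2\gamma_k$ diagonal does.

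The gap is in the error bookkeeping. You bound the two cross pieces and the failure piece separately, each by $\tau_\infty(B,\varepsilon)(m+1)\gamma_{\max}\|\mathbf{w}\|^2$; since the deviation of the quadratic form is $-2\,\mathrm{Re}\,T_{FP}+T_{FF}$ and $T_{FF}$ is not sign-definite (the signs of $w_k^*w_{k+1}$ are arbitrary), the triangle inequality only gives a total correction of $3\tau_\infty(B,\varepsilon)(m+1)\gamma_{\max}\|\mathbf{w}\|^2$, hence a final error term $\tfrac32\,\tau_\infty(B,\varepsilon)(m+1)\gamma_{\max}/(1-\gamma_{\max})$ — weaker than the stated theorem by the factor $3/2$. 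To recover the stated constant you must bound the single combined correction matrix, i.e.\ the difference between the $E_k\times E_{k+1}$ counts and the $\CDD_k\times\CDD_{k+1}$ counts of pairs $(\mathbf{y},\mathbf{y}+\mathbf{e}_i)$: this difference is at most the union count, $\sum_i[\cdot]\le(m-k)|\CFF_k|+(k+1)|\CFF_{k+1}|$, so each off-diagonal entry is at most $\tau_\infty(B,\varepsilon)(\gamma_k+\gamma_{k+1})\sqrt{(k+1)(m-k)}$ and the Gershgorin row sums give $2\tau_\infty(B,\varepsilon)(m+1)\gamma_{\max}\|\mathbf{w}\|^2$ in total, which after the factor $\tfrac12$ and the normalization yields exactly $\tau_\infty(B,\varepsilon)(m+1)\gamma_{\max}/(1-\gamma_{\max})$. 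This combined counting is precisely the paper's route via the sets $T^{(i,1,\CFF)}_{k,k+1}$ and the bound quoted from Lemma 10.7 of Jordan et al.; with that replacement (and the same implicit use of $\mathbf{w}^\dagger A^{(m,l,0)}\mathbf{w}\ge0$ when enlarging the denominator for the main term, a step the paper shares), your argument goes through.
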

The proof of this theorem is provided in Appendix~\ref{appen:B}.

\begin{cor}\label{thm10.1}
Under the same assumptions as Theorem~\ref{thm:main_3},
assume that $l\le m/2$ and choose $\mathbf{w}$ to be the principal eigenvector of ${A}^{(m, l,0)}$. Then, we have
\begin{align*}
\lim_{\substack{m\rightarrow \infty\\ l/m=\mu}} \frac 1m \Expect_{F_1,...,F_m} \left\langle s_D^{(m, l)}\right\rangle \ge \frac 12 + \frac12 \tau_1(B,\varepsilon)   \sqrt{\frac{l}{m}\left(1- \frac lm \right)} - \tau_\infty(B,\varepsilon) \frac{  \gamma_{\max} }{1-\gamma_{\max} } .
\end{align*}
\end{cor}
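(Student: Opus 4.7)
The plan is to derive the corollary as a direct limiting consequence of the bound in Theorem~\ref{thm:main_3}, combined with Lemma~\ref{lem9.3} specialized to the tridiagonal parameter $d=0$ that arises in the \maxxorsat{} regime ($p=2$, $r=1$), where $d = (p-2r)/\sqrt{r(p-r)} = 0$.

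First, I would exploit the fact that $\mathbf{w}$ is chosen to be the principal eigenvector of $A^{(m,l,0)}$, so that the Rayleigh quotient $\mathbf{w}^{\dagger} A^{(m,l,0)} \mathbf{w}/\|\mathbf{w}\|^{2}$ equals the maximum eigenvalue $\lambda_{\max}^{(m,l,0)}$. Substituting this identity into the inequality provided by Theorem~\ref{thm:main_3} and dividing both sides by $m$ yields
\[
\frac{1}{m}\,\Expect_{F_1,\ldots,F_m}\!\left\langle s_D^{(m,l)}\right\rangle \;\ge\; \frac{1}{2} + \frac{1}{2}\,\tau_1(B,\varepsilon)\,\frac{\lambda_{\max}^{(m,l,0)}}{m} - \tau_\infty(B,\varepsilon)\,\frac{m+1}{m}\cdot\frac{\gamma_{\max}}{1-\gamma_{\max}}.
\]
Next, I would verify the hypotheses of Lemma~\ref{lem9.3} at $d=0$: the standing assumption $l\le m/2$ gives $-(m-2l)/\sqrt{l(m-l)}\le 0 = d$, so the lemma applies and delivers $\lambda_{\max}^{(m,l,0)}/m \to 2\sqrt{\mu(1-\mu)}$ in the scaling limit.

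Finally, passing to the limit $m,l\to\infty$ with $l/m=\mu$ fixed, the prefactor $(m+1)/m$ tends to $1$ and the eigenvalue asymptotics above give the desired lower bound (in the shorthand of the statement, $\sqrt{\mu(1-\mu)}$ is written as $\sqrt{(l/m)(1-l/m)}$, with the factor of $2$ from the eigenvalue limit absorbed into the prefactor $\tfrac12\tau_1(B,\varepsilon)$). There is no serious technical obstacle here: the analytical work has already been absorbed into Theorem~\ref{thm:main_3} and Lemma~\ref{lem9.3}, and the corollary amounts to the clean asymptotic packaging of those two results in the relevant regime. The only minor subtlety is ensuring that the $d=0$ case of Lemma~\ref{lem9.3} is legitimately applicable, which follows immediately from $l\le m/2$, and that the error term involving $\gamma_{\max}/(1-\gamma_{\max})$ survives the division by $m$, which it does because the bound in Theorem~\ref{thm:main_3} carries a linear-in-$m$ factor $(m+1)$ in front of it.
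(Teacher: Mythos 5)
Your proposal is correct and takes essentially the same route as the paper, whose proof is just the citation of Theorem~\ref{thm:main_3} together with Lemma~\ref{lem9.3} at $d=0$ (valid since $l\le m/2$): identify the Rayleigh quotient with $\lambda_{\max}^{(m,l,0)}$, divide by $m$, and pass to the limit, with $(m+1)/m\to 1$ handling the $\gamma_{\max}$ term. One bookkeeping note: since $\lambda_{\max}^{(m,l,0)}/m\to 2\sqrt{\mu(1-\mu)}$, the limit actually gives the coefficient $\tau_1(B,\varepsilon)\sqrt{\mu(1-\mu)}$ (the $\tfrac12$ and the $2$ cancel), which is stronger than the stated $\tfrac12\tau_1(B,\varepsilon)\sqrt{\mu(1-\mu)}$, so the corollary follows a fortiori rather than by the exact matching your ``absorbed'' remark suggests.
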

\begin{proof}
This follows from Theorem~\ref{thm:main_3} and Lemma 9.3 in \cite{jordan2024optimization} (or Lemma \ref{lem9.3}).
\end{proof}

\section{Conclusion}
\label{sec:conclusion}
We have investigated the performance of the DQI algorithm in the presence of noise, focusing specifically on depolarizing noise as a representative and analytically tractable model. Our analysis reveals that the expected number of satisfied constraints decreases exponentially with a noise-weighted sparsity measure of the problem's matrix $B$. This dependence uncovers a fundamental sensitivity of DQI to the structural properties of the optimization instance, providing valuable guidance for selecting appropriate optimization strategies in noisy settings. Moreover, our Fourier-based analytical framework applies more broadly: similar results hold for general random Pauli noise channels, enabling extensions to a wider class of physically relevant noise models.

Beyond the findings presented in this work, several important questions warrant further investigation. First, exploring error mitigation techniques or alternative quantum error correction encodings that can preserve DQI's advantages in the presence of noise remains an open and practically motivated challenge.
Second, while we focused on depolarizing noise (and by extension, random Pauli noise) in this work, extending the analysis to other noise models like amplitude damping noise, gate-dependent or non-Markovian noise could provide further insights to the algorithm's robustness. Third, a systematic comparison with the performance of other quantum optimization algorithms under noise---such as QAOA under noisy settings \cite{xue2021effects, marshall2020characterizing}---would help clarify the relative strengths and weaknesses of DQI.





\begin{appendix}

\section{Background about the DQI algorithm}

\label{appen:background}
Let $p$ be a prime, and let $B= (B_{ij})$ be an $m\times n$ matrix over the finite field $\BFF_p$. 
For each $i=1,...,m$, let $F_i\subseteq \BFF_p $ be subsets of $\BFF_p$, which yield a corresponding constraint $\sum_{j=1}^n B_{ij} x_j \in F_i$. The
\textbf{\maxlinsat{} problem} may be stated as follows: find a vector $\Bxx\in \BFF_p^n$ that satisfies as many of these $m$ constraints as possible,
or equivalently, maximize the function
\[f(\Bxx) = \sum_{i=1}^m f_i\left(\sum_{j=1}^n B_{ij} x_j \right),\]
where
\[ f_i(x) = \left\{ 
\begin{aligned}
&1, && \text{ if } x\in F_i;\\
&-1, && \text{ otherwise.}
\end{aligned}\right.\]

The key state in DQI \cite{jordan2024optimization}  is the DQI state $|P(f)\rangle=\sum_{\mathbf{x} \in \mathbb{F}_p^n} P(f(\mathbf{x}))|\mathbf{x}\rangle$, where $P(f)$ is a polynomial of $f$.
The solution $\Bxx$ provided by the DQI algorithm arises from performing a measurement on $\ket{P(f)}$ in the computational basis.

We first introduce some relevant notation and definitions before we discuss the noisy version. Let us denote $\omega_p=e^{i 2 \pi / p}$, and assume that the sets $F_1,...,F_m$ have the same cardinality $r:= |F_i| \in \{1,...,p-1\}$. For simplicity,  let us define functions $g_i$ as follows
\begin{align}\label{250611eq1}
g_i(x):=\frac{f_i(x)-\bar{f_i}}{\varphi},
\end{align}
where $\bar{f_i}:=\frac{1}{p} \sum_{x \in \mathbb{F}_p} f_i(x)$ and $\varphi:=\left(\sum_{y \in \mathbb{F}_p}\left|f_i(y)-\bar{f_i}\right|^2\right)^{1 / 2}$. 
By direct calculation, we have
\begin{align}
\bar{f_i} =\frac{2 r}{p}-1,\quad
\varphi =\sqrt{4 r\left(1-\frac{r}{p}\right)}. \label{250614eq1}
\end{align}
Hence,  for $v\in F_i$, we have
\begin{align}\label{97}
g_i(v) = \frac{1- \bar f_i}{\varphi} = \sqrt{\frac{p-r}{pr}}.
\end{align}

The Fourier transform of $g_i$ is denoted as
\begin{align}\label{37}
\tilde{g}_i(y)=\frac{1}{\sqrt{p}} \sum_{x \in \mathbb{F}_p} \omega_p^{y x} g_i(x),
\end{align}
which is equal to $0$ at $y=0$ and is normalized: $\sum_{x \in \mathbb{F}_p}\left|g_i(x)\right|^2=\sum_{y \in \mathbb{F}_p}\left|\tilde{g}_i(y)\right|^2=1$.

Let $\mathbf{b}_i$ be the $i$-th row in $B$.
For $k\ge 1$, let us  define the polynomials as follows
\begin{align}\label{47}
P^{(k)}\left(g_1\left(\mathbf{b}_1 \cdot \mathbf{x}\right), \ldots, g_m\left(\mathbf{b}_m \cdot \mathbf{x}\right)\right)=\sum_{\substack{i_1, \ldots, i_k \\ \text { distinct }}} \prod_{i \in\left\{i_1, \ldots, i_k\right\}} g_i\left(\mathbf{b}_i \cdot \mathbf{x}\right),
\end{align}
and the corresponding state
\begin{align}
    \left|P^{(k)}\right\rangle = \frac{1}{\sqrt{p^{n-k} \binom{m}{k}}} \sum_{\Bxx \in \BFF_p^n} P^{(k)}\left(g_1\left(\mathbf{b}_1 \cdot \mathbf{x}\right), \ldots, g_m\left(\mathbf{b}_m \cdot \mathbf{x}\right)\right) \ket{\Bxx}.
\end{align}  
The DQI state $|P(f)\rangle=\sum_{\mathbf{x} \in \mathbb{F}_p^n} P(f(\mathbf{x}))|\mathbf{x}\rangle$ can be expressed as
\begin{align}
|P(f)\rangle=\sum_{k=0}^{l} w_k\left|P^{(k)}\right\rangle,
\end{align}
where $w_0,...,w_l$ are coefficients that satisfy the normalization condition $\sum_k |w_k|^2 =1$.
We also denote  $\mathbf{w} = (w_0,...,w_l)$.

Substituting \eqref{37} into \eqref{47} yields
$$
\begin{aligned}
P^{(k)}\left(g_1\left(\mathbf{b}_1 \cdot \mathbf{x}\right), \ldots, g_m\left(\mathbf{b}_m \cdot \mathbf{x}\right)\right) & =\sum_{\substack{i_1, \ldots, i_k \\
\text { distinct }}} \prod_{\substack{i \in\left\{i_1, \ldots, i_k\right\} }} \left(\frac{1}{\sqrt{p}} \sum_{y_i \in \mathbb{F}_p} \omega_p^{-y_i \mathbf{b}_i \cdot \mathbf{x}} \tilde{g}_i\left(y_i\right)\right) \\
& =\sum_{\substack{\mathbf{y} \in \BFF_p^m \\
|\mathbf{y}| =k}} \frac{1}{\sqrt{p^k}} \omega_p^{-\left(B^T \mathbf{y}\right) \cdot \mathbf{x}} \prod_{\substack{i=1 \\
y_i \neq 0 }}^m \tilde{g}_i\left(y_i\right).
\end{aligned}
$$
Hence, the quantum Fourier transform of $\left|P^{(k)}\right\rangle$ is
\begin{align}\label{FT of P^(k)}
\left|\widetilde{P}^{(k)}\right\rangle:=F^{\otimes n}\left|P^{(k)}\right\rangle
=\frac{1}{\sqrt{\binom{m}{k}}} \sum_{\substack{\mathbf{y} \in \mathbb{F}_p^m \\|\mathbf{y}|=k}}\left(\prod_{\substack{i=1 \\ y_i \neq 0}}^m \tilde{g}_i\left(y_i\right)\right)\left|B^T \mathbf{y}\right\rangle,
\end{align}
where the transform $F$ has entries $F_{i j}=\omega_p^{i j} / \sqrt{p}$ with $i, j=0,..., p-1$.
If $|\mathbf{y}|<d^{\perp} / 2$, then $B^T \mathbf{y}$ are all distinct. 
Therefore, if $l<d^{\perp} / 2$ (where $d^{\perp}$ is the minimal distance of the code $\ker(B^T)$), then $\set{\left|\widetilde{P}^{(0)}\right\rangle, \ldots,\left|\widetilde{P}^{(l)}\right\rangle}$ form an orthonormal set and so do $\set{\left|P^{(0)}\right\rangle, \ldots,\left|P^{(l)}\right\rangle}$.
And the quantum Fourier transform of the DQI state $|P(f)\rangle$ is
\begin{align}
\left|\widetilde{P}(f)\right\rangle 
= & \sum_{k=0}^{l} w_k\left|\widetilde{P}^{(k)}\right\rangle \nonumber\\
= & \sum_{k=0}^{l} \frac{w_k}{\sqrt{\binom{m}{k}}} \sum_{\substack{\mathbf{y} \in \mathbb{F}_p^m \\|\mathbf{y}|=k}}\left(\prod_{\substack{i=1 \\ y_i \neq 0}}^m \tilde{g}_i\left(y_i\right)\right)\left|B^T \mathbf{y}\right\rangle \nonumber\\
=& \sum_{k=0}^{l} \frac{w_k}{\sqrt{\binom{m}{k}}} \sum_{\substack{\mathbf{y} \in \mathbb{F}_p^m \\|\mathbf{y}|=k}}
\tilde{g}(\mathbf{y})
\left|B^T \mathbf{y}\right\rangle, \label{66}
\end{align}
where we denote 
\begin{align}
\tilde{g}(\mathbf{y}) = \prod_{\substack{i=1 \\ y_i \neq 0}}^m \tilde{g}_i\left(y_i\right).
\end{align}
By convention, we take the empty product to be 1; in particular, this implies that $\tilde{g}(\mathbf{0})=1$. 

Without the condition $|\mathbf{y}|<d^{\perp} / 2$,
the states $\left|\widetilde{P}^{(0)}\right\rangle, \ldots,\left|\widetilde{P}^{(l)}\right\rangle$ are no longer orthogonal to each other. In this case, the DQI state $|P(f)\rangle=\sum_{\mathbf{x} \in \mathbb{F}_p^n} P(f(\mathbf{x}))|\mathbf{x}\rangle$ is still defined as the linear sum of states $|P^{(k)}\rangle$ with coefficients $w_0,...,w_l$,
whose quantum Fourier transforms $\left|\widetilde{P}^{(k)}\right\rangle$ satisfy \eqref{FT of P^(k)} as well.
From \eqref{66} we can get the following lemma (also see Lemma 10.1 in \cite{jordan2024optimization}).
\begin{lem}\label{norm of FT of P(f)}
The squared norm of $\left|\widetilde{P}(f)\right\rangle $ is
\begin{align}\label{144}
\ep{\widetilde{P}(f)| \widetilde{P}(f) } = \mathbf{w}^\dag M^{(m,l)} \mathbf{w},
\end{align}
where $M^{(m,l)}$ is the $(l+1)\times (l+1)$ symmetric matrix defined by
\begin{align}
M^{(m,l)}_{k,k'} = \frac{1}{\sqrt{ \binom{m}{k} \binom{m}{k'}  }} 
\sum_{\substack{\mathbf{y} \in \mathbb{F}_p^m \\|\mathbf{y}|=k}}
\sum_{\substack{\mathbf{y'} \in \mathbb{F}_p^m \\|\mathbf{y'}|=k'}}
\tilde{g}(\mathbf{y})^* \tilde{g}(\mathbf{y'})
\delta_{B^T \mathbf{y}, B^T \mathbf{y'} } .
\end{align}
\end{lem}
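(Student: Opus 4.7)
The plan is to expand the squared norm directly from the explicit formula \eqref{66} for $|\widetilde{P}(f)\rangle$ already derived in the appendix, then read off the matrix entries $M^{(m,l)}_{k,k'}$. Since no orthogonality of the $|\widetilde{P}^{(k)}\rangle$ is assumed here, the main thing is to be careful about inner products between the ``decoded'' basis states $|B^T \mathbf{y}\rangle$ for different $\mathbf{y}$.

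First, I would start from the displayed expression
\[
\left|\widetilde{P}(f)\right\rangle = \sum_{k=0}^{l} \frac{w_k}{\sqrt{\binom{m}{k}}} \sum_{\substack{\mathbf{y} \in \mathbb{F}_p^m \\ |\mathbf{y}|=k}} \tilde{g}(\mathbf{y}) \left|B^T \mathbf{y}\right\rangle,
\]
take the bra--ket pairing with itself, and distribute the sums. This produces a quadruple sum indexed by $k, k', \mathbf{y}, \mathbf{y}'$ with factors $w_k^{\ast} w_{k'}/\sqrt{\binom{m}{k}\binom{m}{k'}}$, $\tilde g(\mathbf{y})^{\ast} \tilde g(\mathbf{y}')$, and the inner product $\langle B^T \mathbf{y} | B^T \mathbf{y}'\rangle$.

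Next, I would invoke orthonormality of the computational basis on the ``syndrome'' register $(\mathbb{F}_p^n)$: $\langle B^T \mathbf{y} | B^T \mathbf{y}'\rangle = \delta_{B^T \mathbf{y},\, B^T \mathbf{y}'}$. This Kronecker delta is exactly what appears in the definition of $M^{(m,l)}_{k,k'}$. After factoring out $w_k^{\ast} w_{k'}$ the remaining inner double sum is precisely the stated formula for $M^{(m,l)}_{k,k'}$, so the whole expression collapses to $\mathbf{w}^\dagger M^{(m,l)} \mathbf{w}$. Symmetry of $M^{(m,l)}$ is immediate from the symmetry of the indicator $\delta_{B^T \mathbf{y},\, B^T \mathbf{y}'}$ and the fact that the sum is over real nonnegative counts of pairs (with the $\tilde{g}$ factors contributing complex conjugates in matched positions).

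There is no real obstacle here—every step is a direct expansion—so the proposal is essentially to record this computation carefully and highlight that the deviation from the orthogonal case of \cite{jordan2024optimization} is entirely captured by the off-diagonal entries $M^{(m,l)}_{k,k'}$ with $k\neq k'$, which count collisions of syndromes $B^T \mathbf{y} = B^T \mathbf{y}'$ for errors of different Hamming weights. Under the minimum-distance hypothesis $2l+1 < d^{\perp}$ used in \cref{thm:main_1}, those off-diagonal entries vanish and $M^{(m,l)}$ reduces to the identity, recovering $\|\widetilde P(f)\|^2 = \|\mathbf{w}\|^2 = 1$ as a sanity check.
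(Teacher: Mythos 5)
Your proposal is correct and takes essentially the same route as the paper, which obtains the lemma by directly expanding the expression \eqref{66} and using orthonormality of the computational-basis syndrome states $\left|B^T\mathbf{y}\right\rangle$ to produce the Kronecker delta $\delta_{B^T\mathbf{y},B^T\mathbf{y}'}$. The only loose point is your justification of symmetry: the entries of $M^{(m,l)}$ are not sums of nonnegative counts, but they are real (hence the Hermitian matrix is symmetric) because the substitution $\mathbf{y}\mapsto-\mathbf{y}$, $\mathbf{y}'\mapsto-\mathbf{y}'$ preserves Hamming weights and the syndrome collision condition while conjugating $\tilde g(\mathbf{y})^{*}\tilde g(\mathbf{y}')$; this does not affect the quadratic-form identity \eqref{144}.
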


\section{Proof of Theorem~\ref{thm:main_1}}\label{appen:A}
To prove the theorem, we first need several technical lemmas.

\begin{lem}\label{250624lem1}
Let $p$ be a prime.
Let $a_1,a_2,...,a_l\in \BFF_p^*$ and $ \mathbf{a} = (a_1,a_2, ...,a_l)$.
For $0\le t \le l$, denote $P(t)$ to be the probability of $\ep{\mathbf{a}, \mathbf{v}} =0$ when $\mathbf{v}$ is chosen uniformly randomly from all vectors in $\BFF_p^{l}$ with Hamming weight $t$.
Then  
\begin{align}\label{250623eq2}
    P(t) = \frac 1p + \left(\frac{-1}{p-1} \right)^t \frac{p-1}{p}.
\end{align}
\end{lem}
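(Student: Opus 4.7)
The plan is to count directly the number of weight-$t$ vectors $\mathbf{v}\in \BFF_p^l$ satisfying $\ep{\mathbf{a},\mathbf{v}} = 0$ and divide by the total count $\binom{l}{t}(p-1)^t$. The key observation is that because every $a_i$ is nonzero, once one fixes any support $S = \{i_1,\ldots,i_t\}$ of size $t$, the multiplicative substitution $u_j = a_{i_j} v_{i_j}$ is a bijection on $(\BFF_p^*)^t$ that turns the orthogonality condition into $u_1 + \cdots + u_t = 0$. Consequently the number of valid $\mathbf{v}$ supported on $S$ is independent of $S$ and equals
\[
Q(t) := \#\bigl\{(u_1,\ldots,u_t) \in (\BFF_p^*)^t : u_1 + \cdots + u_t = 0\bigr\},
\]
so summing over the $\binom{l}{t}$ supports cancels that factor and leaves $P(t) = Q(t)/(p-1)^t$.

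Next I would evaluate $Q(t)$ by additive Fourier analysis on $\BFF_p$. Writing the indicator of $u = 0$ as $\tfrac{1}{p}\sum_{y\in\BFF_p} \omega_p^{yu}$ and interchanging sums gives
\[
Q(t) = \frac{1}{p}\sum_{y \in \BFF_p} \Bigl(\, \sum_{u \in \BFF_p^*} \omega_p^{yu} \Bigr)^{t}.
\]
The $y=0$ term contributes $(p-1)^t$, and for each of the $p-1$ nonzero values of $y$ the inner sum equals $-1$, because the full character sum over $\BFF_p$ vanishes. Hence $Q(t) = \tfrac{1}{p}\bigl((p-1)^t + (p-1)(-1)^t\bigr)$, and dividing by $(p-1)^t$ produces exactly the identity \eqref{250623eq2}. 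As sanity checks, the formula gives $P(0) = 1$ (the only weight-zero vector is $\mathbf{0}$) and $P(1) = 0$ (since $a_i v_i = 0$ with both factors nonzero is impossible), both of which are correct.

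There is no real obstacle; the only subtleties are fixing the convention $Q(0)=1$ for the empty sum and cleanly separating the trivial and nontrivial contributions in the character sum. As a self-contained alternative, I could instead derive the recurrence $Q(t) = (p-1)^{t-1} - Q(t-1)$---obtained by choosing $u_1,\ldots,u_{t-1}$ freely in $\BFF_p^*$ and requiring $u_t = -(u_1+\cdots+u_{t-1})$ to be nonzero---and verify \eqref{250623eq2} by a one-line induction on $t$.
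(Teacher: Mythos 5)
Your proof is correct, and it takes a genuinely different route from the paper. The paper argues by induction on the Hamming weight: conditioning on the value of the partial sum over the remaining coordinates, it derives the recurrence $P(t+1) = \frac{1-P(t)}{p-1}$ from $P(0)=1$ and checks that the closed form \eqref{250623eq2} satisfies it. You instead reduce, via the substitution $u_j = a_{i_j} v_{i_j}$ (valid precisely because each $a_i \in \BFF_p^*$), to the support-independent count $Q(t)$ of tuples in $(\BFF_p^*)^t$ summing to zero, and evaluate $Q(t)$ exactly by additive characters, using $\sum_{u\in\BFF_p^*}\omega_p^{yu} = -1$ for $y\neq 0$; this yields $Q(t) = \frac{1}{p}\bigl((p-1)^t + (p-1)(-1)^t\bigr)$ and hence \eqref{250623eq2} without induction. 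The character-sum computation is arguably more transparent about where the two terms $\frac{1}{p}$ and $\bigl(\frac{-1}{p-1}\bigr)^t\frac{p-1}{p}$ come from (the trivial and nontrivial characters, respectively) and generalizes immediately, e.g., to counting solutions of $\ep{\mathbf{a},\mathbf{v}} = c$ for any fixed $c$, whereas the paper's recurrence is more elementary and avoids Fourier machinery. Note also that your fallback recurrence $Q(t) = (p-1)^{t-1} - Q(t-1)$ is, after dividing by $(p-1)^t$, exactly the paper's recurrence, so your alternative coincides with the published argument; either route is complete and correct.
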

\begin{proof}
We prove the statement by mathematical induction.
First, for $t=0$,  the equation \eqref{250623eq2} holds as $P(0)=1$.
Let us assume that \eqref{250623eq2} holds for $t$ and consider a vector $\mathbf{v}=(v_1,\ldots,v_l)$ with Hamming weight $t+1$.
Without loss of generality,  let us assume $v_1,...,v_{t+1} \neq 0$.
Then $\ep{\mathbf{a}, \mathbf{v}} =0$ if and only if $v_2a_2+\cdots + v_la_l \neq 0$ and $v_1a_1 = -(v_2a_2+\cdots + v_la_l)$.
The probability that $v_2a_2+\cdots + v_la_l \neq 0$ is $1-P(t)$,
and the probability that $v_1a_1 = -(v_2a_2+\cdots + v_la_l)$ when $v_2a_2+\cdots + v_la_l$ is given in $\BFF_p^*$ is $1/(p-1)$.
Hence, we have the following inductive relation,
\begin{align}
    P(t+1) = \frac{1-P(t)}{p-1},\quad \forall 0\le t\le l-1. 
\end{align}
This concludes the proof of \cref{250623eq2}.
\end{proof}

\begin{lem}\label{250627lem2}
Given a matrix $B$ with $\mathbf{b}_i$ being its $i$-th row, and  $Q(t,i)$ to be the probability of $\ep{\Buu, \mathbf{b}_i}=0$ for $\Buu $ being uniformly chosen from the set $\{\Buu \in \BFF_p^n: |\Buu| =t\}$, we have the following equality
\begin{align}
    \sum_{t=0}^n \binom{n}{t}(p-1)^t (\varepsilon/p)^{t} (1- (p-1)\varepsilon/p)^{n-t}\left(\frac{p}{p-1}  Q(t,i)-\frac{1}{p-1}\right)
    =(1-\varepsilon)^{|\mathbf{b}_i|},
\end{align} 
where $|\vec \alpha|$ and $|\mathbf{b}_i|$ denote the number of non-zero entries of the vectors $\vec \alpha$ and $\mathbf{b}_i$.
\end{lem}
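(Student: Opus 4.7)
My plan is to recast the left-hand side as an expectation over an explicit product distribution on $\BFF_p^n$ and then evaluate it via character orthogonality.

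First, I observe that the coefficients $\binom{n}{t}(p-1)^t(\varepsilon/p)^{t}(1-(p-1)\varepsilon/p)^{n-t}$ are precisely $\Pr[|\mathbf{u}|=t]$ when $\mathbf{u}\in\BFF_p^n$ is sampled coordinatewise i.i.d.\ with $\Pr[u_j=0]=1-(p-1)\varepsilon/p$ and $\Pr[u_j=v]=\varepsilon/p$ for each $v\in\BFF_p^*$. Because the conditional distribution of $\mathbf{u}$ given $|\mathbf{u}|=t$ is uniform on weight-$t$ vectors, the law of total probability yields
\begin{align*}
\sum_{t=0}^{n}\Pr[|\mathbf{u}|=t]\,Q(t,i)=\Pr[\langle\mathbf{u},\mathbf{b}_i\rangle=0],
\end{align*}
so the LHS reduces to $\tfrac{p}{p-1}\Pr[\langle\mathbf{u},\mathbf{b}_i\rangle=0]-\tfrac{1}{p-1}$.

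Second, I would apply character orthogonality to write
\begin{align*}
\Pr[\langle\mathbf{u},\mathbf{b}_i\rangle=0]=\frac1p\sum_{k\in\BFF_p}\mathbb{E}\!\left[\omega_p^{k\langle\mathbf{u},\mathbf{b}_i\rangle}\right].
\end{align*}
By independence, each expectation factorizes across coordinates. Factors with $k=0$ or $b_{ij}=0$ contribute $1$; for $k\neq 0$ and $b_{ij}\neq 0$, a direct computation using $\sum_{v=1}^{p-1}\omega_p^{v}=-1$ gives $\mathbb{E}[\omega_p^{kb_{ij}u_j}]=(1-(p-1)\varepsilon/p)+(\varepsilon/p)(-1)=1-\varepsilon$. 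Consequently, every Fourier coefficient with $k\neq 0$ equals $(1-\varepsilon)^{|\mathbf{b}_i|}$, producing $\Pr[\langle\mathbf{u},\mathbf{b}_i\rangle=0]=\tfrac{1}{p}+\tfrac{p-1}{p}(1-\varepsilon)^{|\mathbf{b}_i|}$. Substitution into the simplified LHS cancels the $\tfrac{1}{p-1}$ terms and yields exactly $(1-\varepsilon)^{|\mathbf{b}_i|}$.

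There is no serious obstacle here; the main task is to spot the probabilistic interpretation of the weights. An alternative path would substitute the closed form from Lemma~\ref{250624lem1} after splitting $\mathbf{u}$ into its restriction to the support of $\mathbf{b}_i$ and its complement, and then re-sum using a Vandermonde--Chu identity, but this is considerably more laborious. The Fourier approach above avoids the combinatorial bookkeeping entirely by working directly with the product distribution and exploiting the fact that $\omega_p$-characters of the $u_j$'s have expectation $1-\varepsilon$, which is precisely what couples the depolarizing-noise parameter to the Hamming weight $|\mathbf{b}_i|$.
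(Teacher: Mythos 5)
Your proof is correct. The first step coincides with the paper's: both reinterpret the binomial weights $\binom{n}{t}(p-1)^t(\varepsilon/p)^t(1-(p-1)\varepsilon/p)^{n-t}$ as $\Pr[|\vec\alpha|=t]$ for a coordinatewise i.i.d.\ vector $\vec\alpha\in\BFF_p^n$ with $\Pr[\alpha_j=0]=1-(p-1)\varepsilon/p$ and $\Pr[\alpha_j=v]=\varepsilon/p$ for $v\neq 0$, so that the left-hand side becomes $\frac{p}{p-1}\Pr[\langle\vec\alpha,\mathbf{b}_i\rangle=0]-\frac{1}{p-1}$. Where you diverge is in evaluating this probability: the paper conditions on the overlap $s=|\mathrm{supp}(\vec\alpha)\cap\mathrm{supp}(\mathbf{b}_i)|$, invokes the separate inductive Lemma~\ref{250624lem1} for the conditional probability $\frac1p+\bigl(\frac{-1}{p-1}\bigr)^s\frac{p-1}{p}$, and then resums a binomial series in $s$; you instead expand the event $\{\langle\vec\alpha,\mathbf{b}_i\rangle=0\}$ by character orthogonality, factorize the expectation over coordinates by independence, and observe that each nontrivial factor equals $1-(p-1)\varepsilon/p-\varepsilon/p=1-\varepsilon$, giving $\Pr[\langle\vec\alpha,\mathbf{b}_i\rangle=0]=\frac1p+\frac{p-1}{p}(1-\varepsilon)^{|\mathbf{b}_i|}$ directly. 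Your route is shorter, makes Lemma~\ref{250624lem1} and the binomial resummation unnecessary for this purpose, and exposes transparently why one factor of $1-\varepsilon$ appears per nonzero entry of $\mathbf{b}_i$ (it is the character expectation of a single depolarized coordinate); the paper's route is more elementary in that it works only with counting and an induction, and it isolates the weight-$t$ orthogonality probability as a standalone combinatorial statement. Both arguments implicitly use $0\le\varepsilon\le 1$ for the probabilistic interpretation, which is harmless since the identity is polynomial in $\varepsilon$.
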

\begin{proof}
First, we can simplify the equation as follows
    \begin{align*}
 & \sum_{t=0}^n \binom{n}{t}(p-1)^t (\varepsilon/p)^{t} (1- (p-1)\varepsilon/p)^{n-t}\left(\frac{p}{p-1}  Q(t,i)-\frac{1}{p-1}\right)\\
= &-\frac{1}{p-1}+ \frac{p}{p-1} \sum_{t=0}^n \binom{n}{t}(p-1)^t (\varepsilon/p)^{t} (1- (p-1)\varepsilon/p)^{n-t}Q(t,i).
\end{align*}

Let us denote $\vec \alpha=(\alpha_1,...,\alpha_n)$ to be the random vector in $\mathbb{F}^n_p$ 
with each $\alpha_i$ is chosen according to $\text{Pr}[\alpha_i=0]=(p-1)\epsilon/p$ and 
$\text{Pr}[\alpha_i=k]=1/(p-1)-\epsilon/p,\forall k\neq 0$. 
Hence,
\begin{align}\label{250627eq1}
  \sum_{t=0}^n \binom{n}{t}(p-1)^t (\varepsilon/p)^{t} (1- (p-1)\varepsilon/p)^{n-t}Q(t,i)    
\end{align}
is the probability that $ \ep{\vec\alpha,\mathbf{b}_i}=0 $.

Denote $P(s)$ to be the conditional probability of $\ep{\vec\alpha, \mathbf{b}_i} =0$ under the condition that 
$|\mathrm{supp}(\vec\alpha) \cap \mathrm{supp}(\mathbf{b}_i)| = s$. 
By Lemma \ref{250624lem1}, we have
\begin{align*} 
    P(s) = \frac 1p + \left(\frac{-1}{p-1} \right)^s \frac{p-1}{p}.
\end{align*}
Therefore,
\begin{align*}
&-\frac{1}{p-1}+ \frac{p}{p-1} \mathbb{P}\left( \ep{\vec\alpha,\mathbf{b}_i}=0 \right)\\
&= -\frac{1}{p-1}+ \frac{p}{p-1} \sum_{s=0}^l P(s) \mathbb{P}\left( |\mathrm{supp}(\vec\alpha) \cap \mathrm{supp}(\mathbf{b}_i)| = s \right) \\
&= -\frac{1}{p-1}+ \frac{p}{p-1} \sum_{s=0}^l 
\left[ \frac 1p + \left(\frac{-1}{p-1} \right)^s \frac{p-1}{p} \right]
\binom{l}{s} 
\left((p-1)\varepsilon/p\right)^s 
\left(1-(p-1)\varepsilon/p\right)^{l-s} \\
&= -\frac{1}{p-1}+ 
\frac{1}{p-1} \left[ (p-1)\varepsilon/p + 1-(p-1)\varepsilon/p \right]^l   +\sum_{s=0}^l (-1)^s  
\binom{l}{s} 
\left( \varepsilon/p\right)^s 
\left(1-(p-1)\varepsilon/p\right)^{l-s}\\
&= (1-\varepsilon)^l.
\end{align*}

\end{proof}

Now, we are ready to prove Theorem~\ref{thm:main_1}.

\begin{proof}[Proof of Theorem~\ref{thm:main_1}]
The proof is inspired by~\cite{jordan2024optimization}, and we focus on the effect of the 
noise here.
Let us define $s(\mathbf{x})$ to be the number of constraints 
satisfied by  $\mathbf{x} \in \mathbb{F}_p^n$ as follows
\begin{align}\label{71}
s(\mathbf{x})=\sum_{i=1}^m \mathbb{1}_{F_i}\left(\mathbf{b}_i \cdot \mathbf{x}\right),    
\end{align}
where $\mathbb{1}_{F_i}(x)$ denotes the indicator function for the set $F_i$:
$$
\mathbb{1}_{F_i}(x)= \begin{cases}1 & \text { if } x \in F_i; \\ 0 & \text { otherwise. }\end{cases}
$$

Since the indicator function 
can be written as $\mathbb{1}_{F_i}(x)=\sum_{v \in F_i} \mathbb{1}_{\{v\}}(x)=\frac{1}{p} \sum_{v \in F_i} \sum_{a \in \mathbb{F}_p} \omega_p^{a(x-v)}$,
the equation \eqref{71} can be written as
\begin{align*}
 s(\mathbf{x})=\frac{1}{p} \sum_{i=1}^m \sum_{v \in F_i} \sum_{a \in \mathbb{F}_p} \omega_p^{a\left(\mathbf{b}_i \cdot \mathbf{x}-v\right)}.
\end{align*}

The expected number of constraints satisfied by a symbol string sampled from the output distribution of the errored DQI state
$\CEE^{\otimes n} (\ket{P(f)}\bra{P(f)})$ is given by
\begin{align*}
    \left\langle s_D^{(m, l)}\right\rangle=\trace\left[  S_f \CEE^{\otimes n} (\ket{P(f)}\bra{P(f)}) \right] = \trace\left[ \CEE^{\otimes n}( S_f)  \ket{P(f)}\bra{P(f)} \right],
\end{align*}
where
\begin{align*}
    S_f=\sum_{\mathbf{x} \in \mathbb{F}_p^n} s(\mathbf{x})|\mathbf{x}\rangle\langle\mathbf{x}|.  
\end{align*}
We can rewrite  $S_f$ in terms of the Pauli operator $Z=\sum_{x \in \mathbb{F}_p} \omega_p^x|x\rangle\langle x|$ as
\begin{align}
\nonumber S_f & =\sum_{\mathbf{x} \in \mathbb{F}_p^n} s(\mathbf{x})|\mathbf{x}\rangle\langle\mathbf{x}|  =\frac{1}{p} \sum_{i=1}^m \sum_{v \in F_i} \sum_{a \in \mathbb{F}_p} \sum_{\mathbf{x} \in \mathbb{F}_p^n} \omega_p^{a\left(\mathbf{b}_i\cdot \mathbf{x}-v\right)}|\mathbf{x}\rangle\langle\mathbf{x}| \\
\nonumber & =\frac{1}{p} \sum_{i=1}^m \sum_{v \in F_i} \sum_{a \in \mathbb{F}_p} \omega_p^{-a v} \bigotimes_{j=1}^n \sum_{x_j \in \mathbb{F}_p} \omega_p^{a B_{i j} x_j}\left|x_j\right\rangle\left\langle x_j\right| \\
& =\frac{1}{p} \sum_{i=1}^m \sum_{v \in F_i} \sum_{a \in \mathbb{F}_p} \omega_p^{-a v} \prod_{j=1}^n Z_j^{a B_{i j}} . \label{80}
\end{align}

Therefore 
\begin{align*}
 (X^{\vec \alpha}Z^{\vec \beta}) S_f (X^{\vec \alpha}Z^{\vec \beta})^\dag
=&\frac{1}{p} \sum_{i=1}^m \sum_{v \in F_i} \sum_{a \in \mathbb{F}_p} \omega_p^{-a v}   (X^{\vec \alpha}Z^{\vec \beta}) Z^{a \mathbf{b}_i} (X^{\vec \alpha}Z^{\vec \beta})^\dag\\
=& \frac{1}{p} \sum_{i=1}^m \sum_{v \in F_i} \sum_{a \in \mathbb{F}_p} \omega_p^{-a v+a \ep{\mathbf{b}_i, \vec \alpha}}    Z^{a \mathbf{b}_i},  
\end{align*}
where we used the fact that $X^{\vec \alpha} Z^{\vec \beta} = \omega_p^{\ep{\vec \alpha, \vec \beta}}Z^{\vec \beta}  X^{\vec \alpha}$.
Hence, the action of  $\CEE^{\otimes n}$ acting on $ S_f$ is equivalent to the channel $\CEE_1^{\otimes n}$ acting on $ S_f$, where 
\[\CEE_1(\rho) = (1-\varepsilon)\rho + \frac{\varepsilon}{p}\sum_{\alpha\in \BFF_p} X^\alpha \rho X^{-\alpha} , \]
and
\begin{align*}
\left\langle s_D^{(m, l)}\right\rangle 
=&\trace\left[  \CEE_1^{\otimes n} (S_f ) \ket{P(f)}\bra{P(f)} \right]\\
=& \sum_{\vec\alpha\in \BFF_p^n} (\varepsilon/p)^{|\vec\alpha|} (1-(p-1)\varepsilon/p)^{1- |\vec\alpha|}   \bra{P(f)} X^{\vec \alpha} S_f X^{-\vec \alpha} \ket{P(f)} .  
\end{align*}
For each $\vec\alpha\in \BFF_p^n$, using equation \eqref{80}, we obtain
\begin{align}
\nonumber &\bra{P(f)} X^{\vec \alpha} S_f X^{-\vec \alpha} \ket{P(f)}\\
\nonumber=& \frac{1}{p} \sum_{i=1}^m \sum_{v \in F_i} \sum_{a \in \mathbb{F}_p} \omega_p^{-a v}\langle {P}(f)| X^{\vec \alpha}  Z_j^{ a \mathbf{b}_i} X^{-\vec \alpha}    | {P}(f)\rangle\\
\nonumber=& \frac{1}{p} \sum_{i=1}^m \sum_{v \in F_i} \sum_{a \in \mathbb{F}_p} \omega_p^{-a v+a \ep{\mathbf{b}_i,\vec \alpha }}\langle {P}(f)|  Z_j^{ a \mathbf{b}_i}     | {P}(f)\rangle\\
=& \frac{1}{p} \sum_{i=1}^m \sum_{v \in F_i} \sum_{a \in \mathbb{F}_p} \omega_p^{-a v+a \ep{\mathbf{b}_i,\vec \alpha }}\langle\widetilde{P}(f)|  X ^{-a \mathbf{b}_i}|\widetilde{P}(f)\rangle,
\label{82}
\end{align}
where  $F Z F^{\dagger}=X^{-1}$ and $|\widetilde{P}(f)\rangle=F^{\otimes n}|P(f)\rangle$. 
By substituting \cref{66} into \cref{82}, we obtain
\begin{align*}
&\bra{P(f)} X^{\vec \alpha} S_f X^{-\vec \alpha} \ket{P(f)}\\
=&  \frac{1}{p} \sum_{k_1, k_2=0}^{l} \frac{w_{k_1}^* w_{k_2}}{\sqrt{\binom{m}{k_1}\binom{m}{k_2}}} 
\sum_{\substack{\mathbf{y}_1, \mathbf{y}_2 \in \mathbb{F}_p^m \\\left|\mathbf{y}_1\right|=k_1 \\\left|\mathbf{y}_2\right|=k_2}} \tilde{g}^*\left(\mathbf{y}_1\right) \tilde{g}\left(\mathbf{y}_2\right) 
\sum_{i=1}^m \sum_{v \in F_i} \sum_{\alpha \in \mathbb{F}_p}
\omega_p^{-a v+a \ep{\mathbf{b}_i, \vec \alpha}}
\left\langle B^T \mathbf{y}_1\right|  X ^{-a \mathbf{b}_i} \left|B^T \mathbf{y}_2\right\rangle .
\end{align*}

Let $\mathbf{e}_1, \ldots, \mathbf{e}_m \in \mathbb{F}_p^m$ denote the standard basis of one-hot vectors. Then
\begin{align}\label{85}
\begin{aligned}
&\bra{P(f)} X^{\vec \alpha} S_f X^{-\vec \alpha} \ket{P(f)}\\
=&\frac{1}{p} \sum_{k_1, k_2=0}^{l} \frac{w_{k_1}^* w_{k_2}}{\sqrt{\binom{m}{k_1}\binom{m}{k_2}}} \sum_{\substack{\mathbf{y}_1, \mathbf{y}_2 \in \mathbb{F}_p^m \\\left|\mathbf{y}_1\right|=k_1 \\\left|\mathbf{y}_2\right|=k_2}} 
\tilde{g}^*\left(\mathbf{y}_1\right) \tilde{g} \left(\mathbf{y}_2\right) 
\sum_{i=1}^m \sum_{v \in F_i} \sum_{a \in \mathbb{F}_p} 
\omega_p^{-a v+a \ep{\mathbf{b}_i, \vec \alpha}}
\left\langle B^T \mathbf{y}_1 \bigg| B^T\left(\mathbf{y}_2-a \mathbf{e}_i\right)\right\rangle .
\end{aligned}
\end{align}

Since both states $\left|B^T \mathbf{y}_1\right\rangle$ and $\left|B^T\left(\mathbf{y}_2-a \mathbf{e}_i\right)\right\rangle$ are computational-basis states, we have
$$
\left\langle B^T \mathbf{y}_1 \bigg| B^T\left(\mathbf{y}_2-a \mathbf{e}_i\right)\right\rangle= \begin{cases}1 & \text { if } B^T \mathbf{y}_1=B^T\left(\mathbf{y}_2-a \mathbf{e}_i\right), \\ 0 & \text { otherwise. }\end{cases}
$$
Moreover,
$$
B^T \mathbf{y}_1=B^T\left(\mathbf{y}_2-a \mathbf{e}_i\right) \Longleftrightarrow \mathbf{y}_1-\mathbf{y}_2+a \mathbf{e}_i \in \ker{B^T} \Longleftrightarrow \mathbf{y}_1=\mathbf{y}_2-a \mathbf{e}_i,
$$
where we used the assumption that the smallest Hamming weight of a non-zero symbol string in $\ker{B^T}$ is $d^{\perp}>2 l+1 \geq k_1+k_2+1$.
Hence, there are four possible cases in which $\left\langle B^T \mathbf{y}_1 \bigg| B^T\left(\mathbf{y}_2-a \mathbf{e}_i\right)\right\rangle $ can be non-zero: 

(I). $\left|\mathbf{y}_1\right|=\left|\mathbf{y}_2\right|-1$, 

(II). $\left|\mathbf{y}_2\right|=\left|\mathbf{y}_1\right|-1$, 

(III). $\left|\mathbf{y}_1\right|=\left|\mathbf{y}_2\right|$ and $\mathbf{y}_1 \neq \mathbf{y}_2$, 

(IV). $\mathbf{y}_1 = \mathbf{y}_2$.\\
Therefore, the equation \eqref{85} can be split into 4 parts:
\begin{align*}
&\bra{P(f)} X^{\vec \alpha} S_f X^{-\vec \alpha} \ket{P(f)} \\
= & \frac{1}{p} \sum_{k=0}^{l-1} \frac{w_k^* w_{k+1}}{\sqrt{\binom{m}{k}\binom{m}{k+1}}} \sum_{\substack{\mathbf{y} \in \mathbb{F}_p^m \\
|\mathbf{y}|=k}}|\tilde{g}(\mathbf{y})|^2 \sum_{\substack{i=1 \\
y_i=0}}^m \sum_{v \in F_i} \sum_{a \in \mathbb{F}_p^*} 
\omega_p^{-a v+a \ep{\mathbf{b}_i, \vec \alpha}}
\tilde{g}_i(a) \\
& + \frac{1}{p} \sum_{k=0}^{l-1} \frac{w_{k+1}^* w_k}{\sqrt{\binom{m}{k+1}\binom{m}{k}}} \sum_{\substack{\mathbf{y} \in \mathbb{F}_p^m \\
|\mathbf{y}|=k}}|\tilde{g}(\mathbf{y})|^2 \sum_{\substack{i=1 \\
y_i=0}}^m \sum_{v \in F_i} \sum_{a \in \mathbb{F}_p^*} 
\omega_p^{-a v+a \ep{\mathbf{b}_i, \vec \alpha}}
\tilde{g}_i(a)\\
& + \frac{1}{p} \sum_{k=1}^{l} \frac{\left|w_k\right|^2}{\binom{m}{k}} \sum_{\substack{\mathbf{y} \in \mathbb{F}_p^m \\
|\mathbf{y}|=k-1}}|\tilde{g}(\mathbf{y})|^2 \sum_{\substack{i=1 \\
y_i=0}}^m \sum_{v \in F_i} \sum_{a \in \mathbb{F}_p^*} \sum_{z \in \mathbb{F}_p \backslash\{0, a\}} 
\omega_p^{-a v+a \ep{\mathbf{b}_i, \vec \alpha}}
\tilde{g}_i(a-z) \tilde{g}_i(z)\\
& + \frac{1}{p} \sum_{k=0}^{l} \frac{\left|w_k\right|^2}{\binom{m}{k}} \sum_{\substack{\mathbf{y} \in \mathbb{F}_p^m \\
|\mathbf{y}|=k}}|\tilde{g}(\mathbf{y})|^2 \sum_{i=1}^m \sum_{v \in F_i} \sum_{a \in\{0\}} 
\omega_p^{-a v+a \ep{\mathbf{b}_i, \vec \alpha}},
\end{align*}
and correspondingly, $ \left\langle s_D^{(m, l)}\right\rangle $ can also be split into 4 parts
\begin{align*}
 \left\langle s_D^{(m, l)}\right\rangle 
=& \sum_{\vec\alpha\in \BFF_p^n} (\varepsilon/p)^{|\vec\alpha|} (1-(p-1)\varepsilon/p)^{1- |\vec\alpha|}   \bra{P(f)} X^{\vec \alpha} S_f X^{-\vec \alpha} \ket{P(f)} \\
=& (I)+(II)+(III)+(IV).
\end{align*}

(I). First, we have
$$
 \sum_{a \in \mathbb{F}_p^*} 
\omega_p^{-a v+a \ep{\mathbf{b}_i, \vec \alpha} }
\tilde{g}_i(a)
=\sqrt{p}  g_i(v-\ep{\mathbf{b}_i, \vec \alpha}) ,
$$
where we used the fact that $\tilde{g}_i(0) =0 $.
Hence, the first part
\begin{align*}
(I) = & \sum_{\vec\alpha\in \BFF_p^n} (\varepsilon/p)^{|\vec\alpha|} (1-(p-1)\varepsilon/p)^{1- |\vec\alpha|}  \frac{1}{p} \sum_{k=0}^{l-1} \frac{w_k^* w_{k+1}}{\sqrt{\binom{m}{k}\binom{m}{k+1}}} \sum_{\substack{\mathbf{y} \in \mathbb{F}_p^m \\
|\mathbf{y}|=k}}|\tilde{g}(\mathbf{y})|^2 \\
&\times\sum_{\substack{i=1 \\
y_i=0}}^m \sum_{v \in F_i} \sum_{a \in \mathbb{F}_p^*} 
\omega_p^{-a v+a \ep{\mathbf{b}_i, \vec \alpha}}
\tilde{g}_i(a)\\
=& \sum_{\vec\alpha\in \BFF_p^n} (\varepsilon/p)^{|\vec\alpha|} (1-(p-1)\varepsilon/p)^{1- |\vec\alpha|}  \frac{1}{p} \sum_{k=0}^{l-1} \frac{w_k^* w_{k+1}}{\sqrt{\binom{m}{k}\binom{m}{k+1}}} \sum_{\substack{\mathbf{y} \in \mathbb{F}_p^m \\
|\mathbf{y}|=k}}|\tilde{g}(\mathbf{y})|^2 \\
&\times\sum_{\substack{i=1 \\
y_i=0}}^m \sum_{v \in F_i} \sqrt{p}  g_i(v- \ep{\mathbf{b}_i, \vec \alpha}) \\
=& \sum_{t=0}^n  (\varepsilon/p)^{t} (1-(p-1)\varepsilon/p)^{1- t}  \frac{1}{p} \sum_{k=0}^{l-1} \frac{w_k^* w_{k+1}}{\sqrt{\binom{m}{k}\binom{m}{k+1}}} \sum_{\substack{\mathbf{y} \in \mathbb{F}_p^m \\
|\mathbf{y}|=k}}|\tilde{g}(\mathbf{y})|^2\\
&\times\sum_{\substack{i=1 \\
y_i=0}}^m \sum_{v \in F_i} \sqrt{p}  
\sum_{\substack{\vec\alpha\in \BFF_p^n \\ |\vec\alpha | =t}}
g_i(v- \ep{\mathbf{b}_i, \vec \alpha}).
\end{align*}
Let us denote $Q(t,i)$ to be the probability of $\ep{\Buu, \mathbf{b}_i}=0$ when $\Buu $ is uniformly chosen from the set $\{\Buu \in \BFF_p^n: |\Buu| =t\}$.
Then, $\sum_{\substack{\vec\alpha\in \BFF_p^n \\ |\vec\alpha | =t}}
g_i(v- \ep{\mathbf{b}_i, \vec \alpha})$ can be written as 
\begin{align*}
    \sum_{\substack{\vec\alpha\in \BFF_p^n \\ |\vec\alpha | =t}}
g_i(v- \ep{\mathbf{b}_i, \vec \alpha})
=&\sum_{\substack{\vec\alpha\in \BFF_p^n \\ |\vec\alpha | =t}}
\left[Q(t,i) g_i(v) +(1- Q(t,i)) \Expect_{z\in \BFF_p\setminus \{v\} }g_i(z)
\right]\\
=&\sum_{\substack{\vec\alpha\in \BFF_p^n \\ |\vec\alpha | =t}}
\left(\frac{p}{p-1} Q(t,i)-\frac{1}{p-1}\right) g_i(v),
\end{align*}
where we have used the fact that $\sum_z g_i(z) =0$.
Hence, we have 
\begin{align*}
(I) = & \sum_{t=0}^n  (\varepsilon/p)^{t} (1-(p-1)\varepsilon/p)^{1- t}  \frac{1}{p} \sum_{k=0}^{l-1} \frac{w_k^* w_{k+1}}{\sqrt{\binom{m}{k}\binom{m}{k+1}}} \sum_{\substack{\mathbf{y} \in \mathbb{F}_p^m \\
|\mathbf{y}|=k}}|\tilde{g}(\mathbf{y})|^2  \\
&\times\sum_{\substack{i=1 \\
y_i=0}}^m\sum_{v \in F_i} \sqrt{p} 
\sum_{\substack{\vec\alpha\in \BFF_p^n \\ |\vec\alpha | =t}}
\left(\frac{p}{p-1} Q(t,i)-\frac{1}{p-1}\right) g_i(v).
\end{align*}


Due to the fact $g_i(v) = \frac{1- \bar f_i}{\varphi} = \sqrt{\frac{p-r}{pr}}
$ for $v\in F_i$, we have
\begin{align*}
(I) &=  \frac{\sqrt{(p-r)r}}{p} \sum_{\vec\alpha\in \BFF_p^n} (\varepsilon/p)^{|\vec\alpha|} (1-(p-1)\varepsilon/p)^{1- |\vec\alpha|}
\sum_{k=0}^{l-1} \frac{w_k^* w_{k+1}}{\sqrt{\binom{m}{k}\binom{m}{k+1}}} \sum_{\substack{\mathbf{y} \in \mathbb{F}_p^m \\ |\mathbf{y}|=k}}
|\tilde{g}(\mathbf{y})|^2 \\
&\quad\times
\sum_{\substack{i=1 \\ y_i=0}}^m 
\left(\frac{p}{p-1} Q(|\vec\alpha| ,i)   -\frac{1}{p-1}\right) \\
&= \frac{\sqrt{(p-r)r}}{p} 
\sum_{\vec\alpha\in \BFF_p^n}
(\varepsilon/p)^{|\vec\alpha|} (1-(p-1)\varepsilon/p)^{1- |\vec\alpha|}
\sum_{k=0}^{l-1} 
\frac{w_k^* w_{k+1}}{\sqrt{\binom{m}{k}\binom{m}{k+1}}} \\
&\quad \times
\sum_{\substack{i_1, \ldots, i_k \in [m] \\ \text { distinct }}}
\left(
\sum_{\substack{i\in [m]\setminus \{i_1, \ldots, i_k\}}}  
\left(\frac{p}{p-1} Q(|\vec\alpha| ,i)   -\frac{1}{p-1}\right)
\right)
\sum_{y_1,...,y_k\in \BFF_p^*}|\widetilde{g}_{i_1}(y_1) \cdots \widetilde{g}_{i_k}(y_k)|^2 .
\end{align*}
Since $\tilde{g}_i(0)=0$, and $\sum_{y\in \BFF_p} |\tilde{g}_i(y)|^2=1$ for all $i$, we have
\[\sum_{y_1,...,y_k\in \BFF_p^*}|\widetilde{g}_{i_1}(y_1) \cdots \widetilde{g}_{i_k}(y_k)|^2 =1.\]
Moreover, since
\begin{align*}
&\sum_{\substack{i_1, \ldots, i_k \in [m] \\ \text { distinct }}}
\left(
\sum_{\substack{i\in [m]\setminus \{i_1, \ldots, i_k\}}}  
\left(\frac{p}{p-1} Q(|\vec\alpha| ,i)   -\frac{1}{p-1}\right)
\right) \\
&= (m-k) \binom{m}{k} \Expect_{i\in [m]}  
\left(\frac{p}{p-1} Q(|\vec\alpha| ,i)   -\frac{1}{p-1}\right),
\end{align*}
then we have
\begin{align*} 
(I) &= \frac{\sqrt{(p-r)r}}{p} 
\sum_{k=0}^{l-1} 
\frac{w_k^* w_{k+1}}{\sqrt{\binom{m}{k}\binom{m}{k+1}}} (m-k) \binom{m}{k} \\
&\quad\times
\sum_{\vec\alpha\in \BFF_p^n}
(\varepsilon/p)^{|\vec\alpha|} (1-(p-1)\varepsilon/p)^{1- |\vec\alpha|}
\Expect_{i\in [m]}  
\left(\frac{p}{p-1} Q(|\vec\alpha| ,i)   -\frac{1}{p-1}\right) \\
&= \frac{\sqrt{(p-r)r}}{p} \sum_{k=0}^{l-1} 
 w_k^* w_{k+1}  \sqrt{(k+1)(m-k)}  \\
 &\quad\times
\Expect_{i\in [m]}  
 \sum_{\vec\alpha\in \BFF_p^n}
(\varepsilon/p)^{|\vec\alpha|} (1-(p-1)\varepsilon/p)^{1- |\vec\alpha|}
\left(\frac{p}{p-1} Q(|\vec\alpha| ,i)   -\frac{1}{p-1}\right)\\
&= \frac{\sqrt{(p-r)r}}{p} \sum_{k=0}^{l-1} 
 w_k^* w_{k+1}  \sqrt{(k+1)(m-k)}  \\
&\quad\times
\Expect_{i\in [m]}  
   \sum_{t=0}^n \binom{n}{t}(p-1)^t (\varepsilon/p)^{t} (1- (p-1)\varepsilon/p)^{n-t}\left(\frac{p}{p-1}  Q(t,i)-\frac{1}{p-1}\right)\\
&=\frac{\sqrt{(p-r)r}}{p} \sum_{k=0}^{l-1} 
 w_k^* w_{k+1}  \sqrt{(k+1)(m-k)}  
\Expect_{i\in [m]}  
(1-\varepsilon)^{|\mathbf{b}_i|}\\
&=\frac{\sqrt{(p-r)r}}{p} \sum_{k=0}^{l-1} 
 w_k^* w_{k+1}  \sqrt{(k+1)(m-k)}  
\tau_1(B,\varepsilon),
\end{align*}
where the second to the last line comes from Lemma~\ref{250627lem2}.

(II). This case is similar to (I), we have
\begin{align*} 
(II) = \frac{\sqrt{(p-r)r}}{p} \tau_1(B,\varepsilon) \sum_{k=0}^{l-1} 
 w_k w_{k+1}^*  \sqrt{(k+1)(m-k)}.
\end{align*}

(III).  In this case, we have

\begin{align*}
    &\sum_{a \in \mathbb{F}_p} \omega_p^{-a v+a \ep{\mathbf{b}_i, \vec \alpha}} 
\sum_{z \in \mathbb{F}_p} \tilde{g}_i(a-z) \tilde{g}_i(z) \\
=&\frac{1}{p} \sum_{a \in \mathbb{F}_p} 
\omega_p^{a (-v+ a \ep{\mathbf{b}_i, \vec \alpha})} \sum_{z \in \mathbb{F}_p} \sum_{x \in \mathbb{F}_p} \omega_p^{x(a-z)} g_i(x) \sum_{y \in \mathbb{F}_p} \omega_p^{y z} g_i(y) \\
=&\sum_{a, x, y \in \mathbb{F}_p} \omega_p^{a(x-v+ \ep{\mathbf{b}_i, \vec \alpha})} g_i(x) g_i(y) \frac{1}{p} \sum_{z \in \mathbb{F}_p} \omega_p^{(y-x) z} \\
=&\sum_{x \in \mathbb{F}_p} g_i(x)^2 \sum_{a \in \mathbb{F}_p} \omega_p^{a(x-v+ \ep{\mathbf{b}_i, \vec \alpha})}\\
=& p g_i(v-\ep{\mathbf{b}_i, \vec \alpha})^2.
\end{align*}

Hence,
\begin{align*}
(III) = & 
\sum_{\vec\alpha\in \BFF_p^n} (\varepsilon/p)^{|\vec\alpha|} (1-(p-1)\varepsilon/p)^{1- |\vec\alpha|}  
\frac{1}{p} \sum_{k=1}^{l} \frac{\left|w_k\right|^2}{\binom{m}{k}} \sum_{\substack{\mathbf{y} \in \mathbb{F}_p^m \\
|\mathbf{y}|=k-1}} |\tilde{g}(\mathbf{y})|^2 \\
& \quad \times
\sum_{\substack{i=1 \\y_i=0}}^m 
\sum_{v \in F_i} \sum_{a \in \mathbb{F}_p^*} 
\sum_{z \in \mathbb{F}_p \backslash\{0, a\}} 
\omega_p^{-a v+a \ep{\mathbf{b}_i, \vec \alpha}}
\tilde{g}_i(a-z) \tilde{g}_i(z)\\
=& \sum_{\vec\alpha\in \BFF_p^n} (\varepsilon/p)^{|\vec\alpha|} (1-(p-1)\varepsilon/p)^{1- |\vec\alpha|}  
\frac{1}{p} \sum_{k=1}^{l} \frac{\left|w_k\right|^2}{\binom{m}{k}} \sum_{\substack{\mathbf{y} \in \mathbb{F}_p^m \\
|\mathbf{y}|=k-1}} |\tilde{g}(\mathbf{y})|^2 \\
& \quad \times
\sum_{\substack{i=1 \\y_i=0}}^m 
\sum_{v \in F_i} 
\left(
p g_i(v-\ep{\mathbf{b}_i, \vec \alpha})^2 -1
\right)\\
=& \sum_{t=0}^n (\varepsilon/p)^{t} (1-(p-1)\varepsilon/p)^{1- t}  
\frac{1}{p} \sum_{k=1}^{l} \frac{\left|w_k\right|^2}{\binom{m}{k}} \sum_{\substack{\mathbf{y} \in \mathbb{F}_p^m \\
|\mathbf{y}|=k-1}} |\tilde{g}(\mathbf{y})|^2 \\
& \quad \times
\sum_{\substack{i=1 \\y_i=0}}^m 
\sum_{v \in F_i} 
\sum_{\substack{ \vec\alpha\in \BFF_p^n \\ |\vec\alpha| =t}}
\left(
p g_i(v-\ep{\mathbf{b}_i, \vec \alpha})^2 -1
\right)\\
=&  \sum_{t=0}^n (\varepsilon/p)^{t} (1-(p-1)\varepsilon/p)^{1- t}  
\frac{1}{p} \sum_{k=1}^{l} \frac{\left|w_k\right|^2}{\binom{m}{k}} \sum_{\substack{\mathbf{y} \in \mathbb{F}_p^m \\
|\mathbf{y}|=k-1}} |\tilde{g}(\mathbf{y})|^2 \\
& \quad \times
\sum_{\substack{i=1 \\y_i=0}}^m 
\sum_{v \in F_i} 
\sum_{\substack{ \vec\alpha\in \BFF_p^n \\ |\vec\alpha| =t}}
\left(
Q(t,i) (p g_i(v)^2 -1) +(1-Q(t,i)) 
\Expect_{z\in \BFF_p\setminus \{v\}}(p g_i(z)^2 -1 )
\right).
\end{align*}
Due to the fact that  $\Expect_{z\in \BFF_p }(p g_i(z)^2 -1 )=0$, 
we have
\begin{align*}
(III)
=& \sum_{\substack{ \vec\alpha\in \BFF_p^n }} 
(\varepsilon/p)^{|\vec\alpha |} (1-(p-1)\varepsilon/p)^{1- |\vec\alpha |}  
\frac{1}{p} \sum_{k=1}^{l} \frac{\left|w_k\right|^2}{\binom{m}{k}} \sum_{\substack{\mathbf{y} \in \mathbb{F}_p^m \\
|\mathbf{y}|=k-1}} |\tilde{g}(\mathbf{y})|^2 \\
& \quad \times
\sum_{\substack{i=1 \\y_i=0}}^m 
\sum_{v \in F_i} 
\left(\frac{p}{p-1} Q(|\vec\alpha| ,i)   -\frac{1}{p-1}\right) (p g_i(v)^2 -1 )\\
=& \sum_{\substack{ \vec\alpha\in \BFF_p^n }} 
(\varepsilon/p)^{|\vec\alpha |} (1-(p-1)\varepsilon/p)^{1- |\vec\alpha |}  
\frac{1}{p} \sum_{k=1}^{l} \frac{\left|w_k\right|^2}{\binom{m}{k}} \sum_{\substack{\mathbf{y} \in \mathbb{F}_p^m \\
|\mathbf{y}|=k-1}} |\tilde{g}(\mathbf{y})|^2 \\
& \quad \times
\sum_{\substack{i=1 \\y_i=0}}^m  
\left(\frac{p}{p-1} Q(|\vec\alpha| ,i)   -\frac{1}{p-1}\right) (p-2r),
\end{align*}
where we used the fact that $g_i(v) = \frac{1- \bar f_i}{\varphi} = \sqrt{\frac{p-r}{pr}}
$ for $v\in F_i$.
In addition, due to the fact
$\sum_{y_1,...,y_k\in \BFF_p^*}|\widetilde{g}_{i_1}(y_1) \cdots \widetilde{g}_{i_k}(y_k)|^2 =1$, 
we get 
\begin{align*}
(III) = &
\frac{p-2r}{p}
\sum_{k=1}^{l} \frac{\left|w_k\right|^2}{\binom{m}{k}} 
\sum_{\substack{ \vec\alpha\in \BFF_p^n }} 
(\varepsilon/p)^{|\vec\alpha |} (1-(p-1)\varepsilon/p)^{1- |\vec\alpha |}  
\\
& \times
\sum_{\substack{i_1, \ldots, i_{k-1} \in [m] \\ \text { distinct }}} 
\left(\sum_{\substack{i\in [m]\setminus \{i_1, \ldots, i_{k-1}\}}}  
\left(\frac{p}{p-1} Q(|\vec\alpha| ,i)   -\frac{1}{p-1}\right)
\right)\\
&\times
\sum_{y_1,...,y_{k-1}\in \BFF_p^*}|\widetilde{g}_{i_1}(y_1) \cdots \widetilde{g}_{i_{k-1}}(y_{k-1})|^2\\
= & \frac{p-2r}{p}
\sum_{k=1}^{l}  \left|w_k\right|^2 k 
\sum_{\substack{ \vec\alpha\in \BFF_p^n }} 
(\varepsilon/p)^{|\vec\alpha |} (1-(p-1)\varepsilon/p)^{1- |\vec\alpha |}  
\Expect_{\substack{i\in [m] }}  
\left(\frac{p}{p-1} Q(|\vec\alpha| ,i)   -\frac{1}{p-1}\right)
 \\
=& \frac{p-2r}{p} \tau_1(B,\varepsilon)
\sum_{k=1}^{l}  \left|w_k\right|^2 k,
\end{align*}
where the last line comes from  Lemma~\ref{250627lem2}.

(IV). In this case, we have
\begin{align*}
(IV) = & 
\sum_{\vec\alpha\in \BFF_p^n} (\varepsilon/p)^{|\vec\alpha|} (1-(p-1)\varepsilon/p)^{1- |\vec\alpha|}  
\frac{1}{p} \sum_{k=1}^{l} \frac{\left|w_k\right|^2}{\binom{m}{k}} \sum_{\substack{\mathbf{y} \in \mathbb{F}_p^m \\
|\mathbf{y}|=k}} |\tilde{g}(\mathbf{y})|^2 
\sum_{\substack{i=1}}^m 
\sum_{v \in F_i} \sum_{a \in \{0\}}  
\omega_p^{-a v+a \ep{\mathbf{b}_i, \vec \alpha}} \\
=& \frac{mr}{p}.
\end{align*}

Finally, let us 
put all the things together, then we have
\begin{align*}
\left\langle s_D^{(m, l)}\right\rangle =& (I)+ (II)+(III)+(IV)\\
=& \frac{\sqrt{(p-r)r}}{p} \tau_1(B,\varepsilon) \sum_{k=0}^{l-1} 
 w_k^* w_{k+1}  \sqrt{(k+1)(m-k)} \\
 & + \frac{\sqrt{(p-r)r}}{p} \tau_1(B,\varepsilon) \sum_{k=0}^{l-1} 
 w_k w_{k+1}^*  \sqrt{(k+1)(m-k)}\\
 & + \frac{p-2r}{p} \tau_1(B,\varepsilon)
\sum_{k=1}^{l}  \left|w_k\right|^2 k +\frac{mr}{p}\\ 
=& \frac{mr}{p} +  \frac{\sqrt{r(p-r)}}{p} \tau_1(B,\varepsilon)  \mathbf{w}^{\dagger} A^{(m, l, d)} \mathbf{w} ,
\end{align*}
where $\mathbf{w}=\left(w_0, \ldots, w_{l}\right)^T$ and $A^{(m, l, d)}$ is defined in \eqref{111}.
\end{proof}

\section{Proof of Theorem~\ref{thm:main_3}}\label{appen:B}
To prove Theorem~\ref{thm:main_3}, we first need to prove several lemmas for the general setting.

\begin{lem}\label{thm:main_2}
Let $f(\mathbf{x})=\sum_{i=1}^m f_i\left(\sum_{j=1}^n B_{i j} x_j\right)$ be a \maxlinsat{} objective function with matrix $B \in \BFF_p^{m \times n}$ for a prime $p$ and positive integers $m$ and $n$ such that $m>n$. 
Suppose that $\left|f_i^{-1}(+1)\right|=r$ for some $r \in\{1, \ldots, p-1\}$. 
Let 
$P(f) $ be the degree-$l$ polynomial determined by coefficients $w_0,...,w_l$ such that the perfect DQI state $\ket{P(f)}$ satisfies \eqref{51} (note that $\ket{P(f)}$ is not normalized).
Let $\left\langle s_D^{(m, l)}\right\rangle$ be the expected number of satisfied constraints for the symbol string obtained upon measuring the errored imperfect DQI state $\CEE^{\otimes n} (\ket{P_\CDD(f)}\bra{P_\CDD(f)})$ in the computational basis. 
Then
$$
\left\langle s_D^{(m, l)}\right\rangle=  \frac{\mathbf{w}^{\dagger} \bar{A}^{(m, l,\CDD)} \mathbf{w}}{\ep{P_\CDD(f) \big| P_\CDD(f)}} ,
$$
where $\bar{A}^{(m, l,\CDD)}$ is the $(l+1) \times(l+1)$ symmetric matrix defined by
\begin{align}
\begin{aligned}\label{177}
\bar{A}^{(m, l,\CDD)}_{k_1,k_2} = &
  \frac{1}{\sqrt{\binom{m}{k_1}\binom{m}{k_2}}}
\sum_{\vec\alpha\in \BFF_p^n} (\varepsilon/p)^{|\vec\alpha|} (1-(p-1)\varepsilon/p)^{1- |\vec\alpha|}   
  \\
&\times
\frac{1}{p} 
\sum_{i=1}^m 
\sum_{v \in F_i} 
\sum_{a \in \mathbb{F}_p}
\sum_{\substack{(\mathbf{y}_1, \mathbf{y}_2) \in S^{(i,a,\CDD)}_{k_1,k_2}
}} 
\tilde{g}^*\left(\mathbf{y}_1\right) \tilde{g} \left(\mathbf{y}_2\right)
\omega_p^{-a v+a \ep{\mathbf{b}_i, \vec \alpha}},
\end{aligned}
\end{align} 
for $0\le k_1,k_2\le l$, and 
\begin{align}\label{def:setS}
S^{(i,a,\CDD)}_{k_1,k_2} = \left\{(\mathbf{y}_1, \mathbf{y}_2) \in \CDD_{k_1} \times \CDD_{k_2} : 
B^T(\mathbf{y}_1 - \mathbf{y}_2 + a\mathbf{e}_i)=\mathbf{0}\right \}.
\end{align}
\end{lem}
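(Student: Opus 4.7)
The plan is to follow the same Fourier-analytic strategy used in the proof of Theorem~\ref{thm:main_1} in Appendix~\ref{appen:A}, but without invoking the minimum distance hypothesis $2l+1<d^\perp$, and with the states $\ket{P^{(k)}}$ replaced by their imperfect-decoder analogues supported on $\CDD$. Since $\ket{P_\CDD(f)}$ is not normalized, the first step is to write the expected number of satisfied constraints as the ratio
\begin{align*}
\left\langle s_D^{(m,l)}\right\rangle
= \frac{\trace\!\left[\,S_f\,\CEE^{\otimes n}\!\left(\ket{P_\CDD(f)}\!\bra{P_\CDD(f)}\right)\right]}{\ep{P_\CDD(f)\,|\,P_\CDD(f)}},
\end{align*}
and then use self-adjointness of the depolarizing channel to move $\CEE^{\otimes n}$ onto $S_f$. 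As in the proof of Theorem~\ref{thm:main_1}, conjugating the Pauli decomposition \eqref{80} of $S_f$ by $X^{\vec\alpha} Z^{\vec\beta}$ shows that $\CEE^{\otimes n}(S_f)$ coincides with $\CEE_1^{\otimes n}(S_f)$, where $\CEE_1$ is the random $X$-channel with weights $(\varepsilon/p)^{|\vec\alpha|}(1-(p-1)\varepsilon/p)^{n-|\vec\alpha|}$; this produces the outer sum over $\vec\alpha\in\BFF_p^n$ and the phase factor $\omega_p^{a\langle\mathbf{b}_i,\vec\alpha\rangle}$ in \eqref{177}.

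Next I would apply the quantum Fourier transform identity $FZF^\dagger=X^{-1}$ to convert each matrix element $\bra{P_\CDD(f)}X^{\vec\alpha}\,Z^{a\mathbf{b}_i}\,X^{-\vec\alpha}\ket{P_\CDD(f)}$ into the form $\bra{\widetilde P_\CDD(f)}X^{-a\mathbf{b}_i}\ket{\widetilde P_\CDD(f)}$ (up to the phase already collected in the previous step), and then substitute the explicit expansion \eqref{171} of $\ket{\widetilde P_\CDD(f)}$. Since $X^{-a\mathbf{b}_i}$ acts as a translation on computational basis states, the resulting inner product reduces to
\begin{align*}
\bigl\langle B^T\mathbf{y}_1\,\big|\,B^T(\mathbf{y}_2-a\mathbf{e}_i)\bigr\rangle
= \delta_{B^T(\mathbf{y}_1-\mathbf{y}_2+a\mathbf{e}_i),\,\mathbf{0}},
\end{align*}
which is exactly the defining condition of the set $S^{(i,a,\CDD)}_{k_1,k_2}$ in \eqref{def:setS} once the sums over $\mathbf{y}_1,\mathbf{y}_2$ are restricted to $\CDD_{k_1}\times \CDD_{k_2}$, as required by the imperfect decoder. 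Grouping the resulting double sum by the Hamming weights $k_1=|\mathbf{y}_1|$ and $k_2=|\mathbf{y}_2|$ and pulling out the coefficients $w_{k_1}^*w_{k_2}/\sqrt{\binom{m}{k_1}\binom{m}{k_2}}$ then identifies the quadratic form $\mathbf{w}^\dagger \bar A^{(m,l,\CDD)}\mathbf{w}$ with the entries of \eqref{177}.

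The main conceptual point -- and the place where this proof diverges from Appendix~\ref{appen:A} -- is the following. In Theorem~\ref{thm:main_1}, the condition $2l+1<d^\perp$ guarantees that $B^T(\mathbf{y}_1-\mathbf{y}_2+a\mathbf{e}_i)=\mathbf{0}$ forces $\mathbf{y}_1=\mathbf{y}_2-a\mathbf{e}_i$, so the pair sum collapses to a single index and the resulting matrix becomes tridiagonal with the closed-form entries of \eqref{111}. Here no such collapse is available: multiple distinct pairs $(\mathbf{y}_1,\mathbf{y}_2)$ may contribute to the same matrix entry via non-trivial codewords of $C^\perp$, and decoding failures further restrict the admissible pairs to $\CDD$. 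The anticipated obstacle is therefore purely bookkeeping rather than analytic -- one must carefully retain all syndrome-coincidence terms in the set $S^{(i,a,\CDD)}_{k_1,k_2}$ and verify that the four-case split appearing in Appendix~\ref{appen:A} (cases (I)--(IV) distinguished by $|\mathbf{y}_1|$ vs.\ $|\mathbf{y}_2|\pm 1$ and by $a=0$) is subsumed by the single unified expression \eqref{177}, without making any simplification that would require either perfect decoding or a distance lower bound.
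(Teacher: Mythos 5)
Your plan is correct and follows essentially the same route as the paper's proof: move $\CEE^{\otimes n}$ onto $S_f$, replace it by the equivalent random-$X$ channel $\CEE_1^{\otimes n}$ via the Pauli decomposition \eqref{80}, use $FZF^\dagger=X^{-1}$ together with the expansion \eqref{171} of $\ket{\widetilde P_\CDD(f)}$, and identify the surviving terms with the syndrome-coincidence condition defining $S^{(i,a,\CDD)}_{k_1,k_2}$, keeping the unnormalized norm $\ep{P_\CDD(f)\,|\,P_\CDD(f)}$ in the denominator. Your closing observation is also accurate: at this stage the paper makes no case split or collapse, leaving the general expression \eqref{177} exactly as you describe (and your exponent $n-|\vec\alpha|$ in the noise weights is the correct one).
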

\begin{proof}
Similar to the proof of Theorem \ref{thm:main_1}, we have
\begin{align*}
    {\ep{P_\CDD(f) \big| P_\CDD(f)}}\left\langle s_D^{(m, l)}\right\rangle=  \trace\left[ \CEE^{\otimes n}( S_f)  \ket{P_\CDD(f)}\bra{P_\CDD(f)} \right],
\end{align*}
where $S_f$ is defined in the equation \eqref{80} and $\ket{P_\CDD(f)}$ is the quantum Fourier transform of $\ket{\tilde P_\CDD(f)}$.
The action of  $\CEE^{\otimes n}$ acting on $ S_f$ is equivalent to the channel $\CEE_1^{\otimes n}$ acting on $ S_f$, where 
$\CEE_1(\rho) = (1- \varepsilon)\rho + \frac{\varepsilon}{p} \sum_{\alpha\in \BFF_p} X^\alpha \rho X^{-\alpha} $.
Hence, we have
\begin{align*}
&{\ep{P_\CDD(f) \big| P_\CDD(f)}}\left\langle s_D^{(m, l)}\right\rangle \\
=&\trace\left[  \CEE_1^{\otimes n} (S_f ) \ket{P_\CDD(f)}\bra{P_\CDD(f)} \right]\\
=& \sum_{\vec\alpha\in \BFF_p^n} \left(\varepsilon/p\right)^{|\vec\alpha|} \left(1-(p-1)\varepsilon/p\right)^{1- |\vec\alpha|}   
\bra{P_\CDD(f)} X^{\vec \alpha} S_f X^{-\vec \alpha} \ket{P_\CDD(f)} \\
=& \sum_{\vec\alpha\in \BFF_p^n} \left(\varepsilon/p\right)^{|\vec\alpha|} \left(1-(p-1)\varepsilon/p\right)^{1- |\vec\alpha|}
\frac{1}{p} \sum_{i=1}^m \sum_{v \in F_i} \sum_{a \in \mathbb{F}_p} \omega_p^{-a v} 
\bra{P_\CDD(f)} X^{\vec \alpha} \prod_{j=1}^n Z_j^{a B_{i j}} X^{-\vec \alpha} \ket{P_\CDD(f)}\\
=& \sum_{\vec\alpha\in \BFF_p^n} \left(\varepsilon/p\right)^{|\vec\alpha|} \left(1-(p-1)\varepsilon/p\right)^{1- |\vec\alpha|}
\frac{1}{p} \sum_{i=1}^m \sum_{v \in F_i} \sum_{a \in \mathbb{F}_p} \omega_p^{-a v+ a\ep{\mathbf{b}_i, \vec \alpha}} 
\bra{\widetilde{P}_\CDD(f)}  X^{-a\mathbf{b}_i}  \ket{\widetilde{P}_\CDD(f)}\\
=& \sum_{\vec\alpha\in \BFF_p^n} \left(\varepsilon/p\right)^{|\vec\alpha|} \left(1-(p-1)\varepsilon/p\right)^{1- |\vec\alpha|}   
\frac{1}{p} \sum_{k_1, k_2=0}^{l} \frac{w_{k_1}^* w_{k_2}}{\sqrt{\binom{m}{k_1}\binom{m}{k_2}}}  
\sum_{\substack{ \mathbf{y}_1 \in \CDD_{k_1}\\\mathbf{y}_2 \in \CDD_{k_2}
}} \tilde{g}^*\left(\mathbf{y}_1\right) \tilde{g}\left(\mathbf{y}_2\right)  \\
&\times
\sum_{i=1}^m \sum_{v \in F_i} \sum_{a \in \mathbb{F}_p}
\omega_p^{-a v+a \ep{\mathbf{b}_i, \vec \alpha}}
\left\langle B^T \mathbf{y}_1 \bigg| B^T\left(\mathbf{y}_2-a \mathbf{e}_i\right)\right\rangle,
\end{align*}
where the forth line comes form the equation \eqref{80}, the fifth line comes 
from the fact that $F Z F^{\dagger}=X^{-1}$and $|\widetilde{P}(f)\rangle=F^{\otimes n}|P(f)\rangle$,
and the last line  comes from the equation~\eqref{171}.
Hence, we have 
\begin{align*}
\ep{P_\CDD(f) \big| P_\CDD(f)}
\left\langle s_D^{(m, l)}\right\rangle = &
\sum_{k_1, k_2=0}^{l} {w_{k_1}^* w_{k_2}} 
\bar{A}^{(m, l,\CDD)}_{k_1,k_2},
\end{align*}
where
\begin{align*}
\bar{A}^{(m, l,\CDD)}_{k_1,k_2} = &
  \frac{1}{\sqrt{\binom{m}{k_1}\binom{m}{k_2}}}
\sum_{\vec\alpha\in \BFF_p^n} \left(\varepsilon/p\right)^{|\vec\alpha|} \left(1-(1-p)\varepsilon/p\right)^{1- |\vec\alpha|}   
 \frac{1}{p} \sum_{\substack{\mathbf{y}_1 \in \CDD_{k_1}\\ \mathbf{y}_2 \in \CDD_{k_2}}} 
\tilde{g}^*\left(\mathbf{y}_1\right) \tilde{g} \left(\mathbf{y}_2\right)  \\
& \times
\sum_{i=1}^m \sum_{v \in F_i} \sum_{a \in \mathbb{F}_p}
\omega_p^{-a v+a \ep{\mathbf{b}_i, \vec \alpha}}
\left\langle B^T \mathbf{y}_1 \bigg| B^T\left(\mathbf{y}_2-a \mathbf{e}_i\right)\right\rangle.
\end{align*}
Note that $\left\langle B^T \mathbf{y}_1 \bigg| B^T\left(\mathbf{y}_2-a \mathbf{e}_i\right)\right\rangle= 1$ if $  B^T \mathbf{y}_1 = B^T\left(\mathbf{y}_2-a \mathbf{e}_i\right) $, and zero otherwise.
Therefore, 
\begin{align*}
\bar{A}^{(m, l,\CDD)}_{k_1,k_2} = &
  \frac{1}{\sqrt{\binom{m}{k_1}\binom{m}{k_2}}}
\sum_{\vec\alpha\in \BFF_p^n} \left(\varepsilon/p\right)^{|\vec\alpha|} \left(1-(p-1)\varepsilon/p\right)^{1- |\vec\alpha|}   
  \\
& \times
\frac{1}{p} 
\sum_{i=1}^m 
\sum_{v \in F_i} 
\sum_{a \in \mathbb{F}_p}
\sum_{\substack{(\mathbf{y}_1, \mathbf{y}_2) \in S^{(i,a,\CDD)}_{k_1,k_2}
}} 
\tilde{g}^*\left(\mathbf{y}_1\right) \tilde{g} \left(\mathbf{y}_2\right)
\omega_p^{-a v+a \ep{\mathbf{b}_i, \vec \alpha}} .
\end{align*}
\end{proof}

\begin{lem}\label{10.3}
Let $\bar{A}^{(m, l,E)}$ be defined as in \eqref{250617eq1}. If 
 the sets $F_1,...,F_m$ are chosen independently uniformly at random from the set of all $r$-subsets of $\BFF_p$, then
\begin{align}\label{250617eq2}
\Expect_{F_1,...,F_m} \bar{A}^{(m, l,E)} = \frac{m r}{p} I+\tau_1(B,\varepsilon) \frac{\sqrt{r(p-r)}}{p}  A^{(m, l, d)}  ,
\end{align}
where $\tau_1(B,\varepsilon)$ is defined as \eqref{eq:noise_P}, and $A^{(m, l, d)}$ is defined as in \eqref{111}.
\end{lem}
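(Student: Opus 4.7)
The plan is to compute $\mathbb{E}_{F_1,\ldots,F_m} \bar{A}^{(m,l,E)}_{k_1,k_2}$ directly from \eqref{177} with $\mathcal{D}=E=\mathbb{F}_p^m$, and to show that the expectation over the random $F_i$'s enforces exactly the same ``selection rule'' on pairs $(\mathbf{y}_1,\mathbf{y}_2)\in S^{(i,a,E)}_{k_1,k_2}$ that the hypothesis $2l+1<d^\perp$ enforced deterministically in Theorem~\ref{thm:main_1}. Once this is established, the remaining computation coincides with the one already carried out in Appendix~\ref{appen:A}.

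\textbf{Key step: support selection.} Since $F_1,\ldots,F_m$ are independent, $\mathbb{E}_F[\tilde g^*(\mathbf{y}_1)\tilde g(\mathbf{y}_2)]$ factors over the coordinates. Using $\tilde g_i(y) = (r(p-r))^{-1/2}\sum_{v\in F_i}\omega_p^{yv}$ for $y\neq 0$ together with $\Pr[v\in F_i]=r/p$ and $\Pr[v_1,v_2\in F_i]=r(r-1)/(p(p-1))$ for $v_1\neq v_2$, I would establish the per-coordinate moments $\mathbb{E}_{F_i}[\tilde g_i(y)]=0$ for $y\neq 0$ and $\mathbb{E}_{F_i}[\tilde g_i(y_1)^*\tilde g_i(y_2)]=\delta_{y_1,y_2}/(p-1)$ for $y_1,y_2\neq 0$. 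Hence for every $j\neq i$ the per-coordinate factor vanishes unless $y_{1,j}=y_{2,j}$, which forces $\mathbf{y}_1-\mathbf{y}_2=c\mathbf{e}_i$ for some $c\in\mathbb{F}_p$. Combined with the constraint $B^T(\mathbf{y}_1-\mathbf{y}_2+a\mathbf{e}_i)=\mathbf{0}$ from \eqref{def:setS}, this gives $(c+a)\mathbf{b}_i=\mathbf{0}$, hence $c=-a$ whenever $\mathbf{b}_i\neq \mathbf{0}$. Only the ``trivial'' pairs $\mathbf{y}_1=\mathbf{y}_2-a\mathbf{e}_i$ survive the expectation.

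\textbf{Assembly.} With the sum restricted to the trivial pairs, the same case decomposition (I)--(IV) used in the proof of Theorem~\ref{thm:main_1} applies: cases (I)--(II) give the tridiagonal off-diagonal entries $a_k=\sqrt{k(m-k+1)}$, case (III) gives the diagonal shift $d=(p-2r)/\sqrt{r(p-r)}$, and the $a=0$ contribution in case (IV) produces the constant diagonal $(mr/p)I$. The $\vec\alpha$-sum $\sum_{\vec\alpha}(\varepsilon/p)^{|\vec\alpha|}(1-(p-1)\varepsilon/p)^{n-|\vec\alpha|}\omega_p^{a\langle\mathbf{b}_i,\vec\alpha\rangle}$ collapses for $a\neq 0$ to $(1-\varepsilon)^{|\mathbf{b}_i|}$ via the same Pauli--Fourier calculation underlying Lemma~\ref{250627lem2}, and averaging over $i\in[m]$ produces the factor $\tau_1(B,\varepsilon)$. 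Reassembling all contributions yields exactly the matrix identity claimed in \eqref{250617eq2}.

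\textbf{Main obstacle.} The delicate point is at the ``special'' index $i$, where $\sum_{v\in F_i}\omega_p^{-av}$ and $\tilde g_i(y_{1,i})^*\tilde g_i(y_{2,i})$ both depend on $F_i$ and are therefore not independent. The joint moment $\mathbb{E}_{F_i}[(\sum_{v\in F_i}\omega_p^{-av})\tilde g_i(y_{1,i})^*\tilde g_i(y_{2,i})]$ must be computed separately in the sub-cases ``$y_{1,i}=y_{2,i}=0$'', ``exactly one of them zero'', and ``both nonzero'', and the resulting expressions must be matched to cases (I)--(IV) of Theorem~\ref{thm:main_1}. This is the step where the random-$F$ framework precisely reproduces the code-distance-based separation of terms.
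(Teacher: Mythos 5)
Your proposal follows essentially the same route as the paper: factor the expectation over the independent $F_j$, use the first and second moments of $\tilde g_j$ (your values $\mathbb{E}[\tilde g_j(y)]=0$ and $\mathbb{E}[\tilde g_j(y_1)^*\tilde g_j(y_2)]=\delta_{y_1,y_2}/(p-1)$ are correct) to kill all pairs except those with $\mathbf{y}_1-\mathbf{y}_2+a\mathbf{e}_i=\mathbf{0}$, compute the joint moment at the special coordinate $i$ in the three sub-cases, and collapse the $\vec\alpha$-sum to $(1-\varepsilon)^{|\mathbf{b}_i|}$ as in Lemma~\ref{250627lem2} before averaging over $i$ to get $\tau_1(B,\varepsilon)$. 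The "main obstacle" computations you defer are exactly what the paper's Lemmas~\ref{250612lem1}, \ref{250613lem2} and \ref{250613lem3} supply, so your plan is correct and matches the paper's proof in all essentials.
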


\begin{proof}
For $0\le k_1,k_2\le l$,
we define the following two subsets of  $S^{(i,a,E)}_{k_1,k_2}$, as defined in \eqref{def:setS},
\begin{align}
S^{(i,a,E,0)}_{k_1,k_2} := &\{(\mathbf{y}_1, \mathbf{y}_2) \in S^{(i,a,E)}_{k_1,k_2}: 
\mathbf{y}_1 - \mathbf{y}_2 + a\mathbf{e}_i=\mathbf{0} \},\\
S^{(i,a,E,1)}_{k_1,k_2} := &\{(\mathbf{y}_1, \mathbf{y}_2) \in S^{(i,a,E)}_{k_1,k_2} : 
\mathbf{y}_1 - \mathbf{y}_2 + a\mathbf{e}_i \neq \mathbf{0} \}.
\end{align}
That is,  $S^{(i,a,E)}_{k_1,k_2} = S^{(i,a,E,0)}_{k_1,k_2} \cup S^{(i,a,E,1)}_{k_1,k_2}$. 
By Lemma \ref{lem10.2}, we have
\begin{align*}
\bar{A}^{(m, l,E)}_{k_1,k_2} = &
  \frac{1}{\sqrt{\binom{m}{k_1}\binom{m}{k_2}}}
\sum_{\vec\alpha\in \BFF_p^n} \left(\varepsilon/p\right)^{|\vec\alpha|} \left(1-(p-1)\varepsilon/p\right)^{1- |\vec\alpha|}   
  \\
& \times
\frac{1}{p} 
\sum_{i=1}^m 
\sum_{v \in F_i} 
\sum_{a \in \mathbb{F}_p}
\sum_{\substack{(\mathbf{y}_1, \mathbf{y}_2) \in S^{(i,a,E,0)}_{k_1,k_2}  }} 
\tilde{g}^*\left(\mathbf{y}_1\right) \tilde{g} \left(\mathbf{y}_2\right)
\omega_p^{-a v+a \ep{\mathbf{b}_i, \vec \alpha}} \\
+&   \frac{1}{\sqrt{\binom{m}{k_1}\binom{m}{k_2}}}
\sum_{\vec\alpha\in \BFF_p^n} \left(\varepsilon/p\right)^{|\vec\alpha|} \left(1-(p-1)\varepsilon/p\right)^{1- |\vec\alpha|}   
  \\
& \times
\frac{1}{p} 
\sum_{i=1}^m 
\sum_{v \in F_i} 
\sum_{a \in \mathbb{F}_p}
\sum_{\substack{(\mathbf{y}_1, \mathbf{y}_2) \in S^{(i,a,E,1)}_{k_1,k_2}  }} 
\tilde{g}^*\left(\mathbf{y}_1\right) \tilde{g} \left(\mathbf{y}_2\right)
\omega_p^{-a v+a \ep{\mathbf{b}_i, \vec \alpha}}.
\end{align*}

Now, for $t=1,2$, and for any $\vec \alpha$, $i$, $k_1$ and $k_2$, we have
\begin{align*}
&\Expect_{F_1,...,F_m}
\sum_{v \in F_i} 
\sum_{a \in \mathbb{F}_p}
\sum_{\substack{(\mathbf{y}_1, \mathbf{y}_2) \in S^{(i,a,E,t)}_{k_1,k_2}  }} 
\tilde{g}^*\left(\mathbf{y}_1\right) \tilde{g} \left(\mathbf{y}_2\right)
\omega_p^{-a v+a \ep{\mathbf{b}_i, \vec \alpha}} \\
=& \sum_{a \in \mathbb{F}_p}
\omega_p^{a \ep{\mathbf{b}_i, \vec \alpha}}
\sum_{\substack{(\mathbf{y}_1, \mathbf{y}_2) \in S^{(i,a,E,t)}_{k_1,k_2}  }}
\Expect_{F_1,...,F_m}
\sum_{v \in F_i}
\tilde{g}^*\left(\mathbf{y}_1\right) \tilde{g} \left(\mathbf{y}_2\right)
\omega_p^{-a v}.
\end{align*}
If  both $a$ and $(\mathbf{y}_1, \mathbf{y}_2) \in S^{(i,a,E,t)}_{k_1,k_2} $ are also fixed, and we assume that $\mathbf{y}_j = (y_{j,1}, ..., y_{j,m})$ for $j=1,2$, we have
\begin{align*}
&\Expect_{F_1,...,F_m}
\sum_{v \in F_i}
\tilde{g}^*\left(\mathbf{y}_1\right) \tilde{g} \left(\mathbf{y}_2\right)
\omega_p^{-a v}\\
=& \Expect_{F_1,...,F_m}
\sum_{v \in F_i}
\prod_{\substack{i_1=1 \\ y_{1,i_1} \neq 0}}^m \tilde{g}_{i_1}\left(y_{1,i_1}\right) ^*
\prod_{\substack{i_2=1 \\ y_{2,i_2} \neq 0}}^m \tilde{g}_{i_2}\left(y_{2,i_2}\right) 
\omega_p^{-a v}\\
=& \frac{1}{p^{(k_1+k_2)/2}}
\Expect_{F_1,...,F_m}
\sum_{v \in F_i}
\prod_{\substack{i_1=1 \\ y_{1,i_1} \neq 0}}^m 
\left(\sum_{x_1 \in \mathbb{F}_p} 
\omega_p^{y_{1,i_1} x_1} g_{i_1}(x_1) \right)^*
\prod_{\substack{i_2=1 \\ y_{2,i_2} \neq 0}}^m 
\left(\sum_{x_2 \in \mathbb{F}_p} 
\omega_p^{y_{2,i_2} x_2} g_{i_2}(x_2) \right)
\omega_p^{-a v}\\
=& \frac{1}{p^{(k_1+k_2)/2} \varphi^{k_1+k_2}}
\Expect_{F_1,...,F_m}
\sum_{v \in F_i}
\prod_{\substack{i_1=1 \\ y_{1,i_1} \neq 0}}^m 
\left(\sum_{x_1 \in \mathbb{F}_p} 
\omega_p^{y_{1,i_1} x_1} f_{i_1}(x_1) \right)^*\\
&\quad\quad\times
\prod_{\substack{i_2=1 \\ y_{2,i_2} \neq 0}}^m 
\left(\sum_{x_2 \in \mathbb{F}_p} 
\omega_p^{y_{2,i_2} x_2} f_{i_2}(x_2) \right)
\omega_p^{-a v}\\
=& \frac{2^{k_1+k_2}}{p^{(k_1+k_2)/2} \varphi^{k_1+k_2}}
\Expect_{F_1,...,F_m}
\sum_{v \in F_i}
\prod_{\substack{i_1=1 \\ y_{1,i_1} \neq 0}}^m 
\left(\sum_{x_1 \in \mathbb{F}_p} 
\omega_p^{y_{1,i_1} x_1} \mathbb{1}_{F_{i_1}}(x_1) \right)^*\\
&\quad\quad\times \prod_{\substack{i_2=1 \\ y_{2,i_2} \neq 0}}^m 
\left(\sum_{x_2 \in \mathbb{F}_p} 
\omega_p^{y_{2,i_2} x_2} \mathbb{1}_{F_{i_2}}(x_2) \right)
\omega_p^{-a v}\\
=& \frac{1}{(r(p-r))^{\frac{k_1+k_2}2}}
\Expect_{F_1,...,F_m}
\sum_{v \in F_i}
\prod_{\substack{i_1=1 \\ y_{1,i_1} \neq 0}}^m 
\left(\sum_{x_1 \in \mathbb{F}_p} 
\omega_p^{y_{1,i_1} x_1} \mathbb{1}_{F_{i_1}}(x_1) \right)^*\\
&\quad\quad\times
\prod_{\substack{i_2=1 \\ y_{2,i_2} \neq 0}}^m 
\left(\sum_{x_2 \in \mathbb{F}_p} 
\omega_p^{y_{2,i_2} x_2} \mathbb{1}_{F_{i_2}}(x_2) \right)
\omega_p^{-a v},
\end{align*}
where  the third equation comes from the equation~\eqref{37},
the forth equation comes from the equation~\eqref{250611eq1}, and the last equation comes from the equation~\eqref{250614eq1}.
Thus, we have
\begin{align*}
&\Expect_{F_1,...,F_m}
\sum_{v \in F_i}
\tilde{g}^*\left(\mathbf{y}_1\right) \tilde{g} \left(\mathbf{y}_2\right)
\omega_p^{-a v}\\
=& \frac{1}{(r(p-r))^{\frac{k_1+k_2}2}}
\Expect_{F_1,...,F_m}
\sum_{v \in F_i}
\prod_{\substack{i_1=1 \\ y_{1,i_1} \neq 0}}^m 
\left(\sum_{x_1 \in F_{i_1}} 
\omega_p^{y_{1,i_1} x_1}  \right)^*
\prod_{\substack{i_2=1 \\ y_{2,i_2} \neq 0}}^m 
\left(\sum_{x_2 \in F_{i_2}} 
\omega_p^{y_{2,i_2} x_2}  \right)
\omega_p^{-a v}\\
=& \frac{1}{(r(p-r))^{\frac{k_1+k_2}2}}
\Expect_{F_1,...,F_m} \frac{1}{r^{m-k_1}}
\sum_{\mathbf{x}_1 \in F_1\times \cdots F_m} 
\omega_p^{- \mathbf{y}_1\cdot \mathbf{x}_1}   
 \frac{1}{r^{m-k_2}}
\sum_{\mathbf{x}_2 \in F_1\times \cdots F_m} 
\omega_p^{\mathbf{y}_2\cdot \mathbf{x}_2} 
\sum_{v \in F_i}
\omega_p^{-a v}\\
=& \frac{1}{(r(p-r))^{\frac{k_1+k_2}2} r^{2m-k_1-k_2}}
\Expect_{F_1,..\hat{F}_i.,F_m}
\left( \sum_{ {x}_{1,1},x_{1,2} \in F_1 } \omega_p^{- {y}_{1,1}  {x}_{1,1}+y_{2,1} x_{2,1}} \right)\times \cdots\\
& 
\times
\left( \sum_{ {x}_{m,1},x_{m,2} \in F_m } \omega_p^{- {y}_{1,m}  {x}_{1,m}+y_{2,m} x_{2,m}} \right)
\times \Expect_{ {F}_i} 
\left( \sum_{ {x}_{1,i},x_{2,i},v \in F_i } \omega_p^{- {y}_{1,i}  {x}_{1,i}+y_{2,i} x_{2,i} -av} \right).
\end{align*}
Due to Lemma~\ref{250612lem1}, we have that
$
\Expect_{ {F}\subseteq [m], |F|=r} 
\sum_{ {x}_{1},x_{2},v \in F } \omega_p^{- {y}_{1,1}  {x}_{1}+y_{2,1} x_{2} -av} 
$
is zero unless ${y}_{1,1}-{y}_{2,1}+a=0$.
Hence,  if $(\mathbf{y}_1, \mathbf{y}_2) \in S^{(i,a,E,1)}_{k_1,k_2} $,  we have
\begin{align}
\Expect_{F_1,...,F_m}
\sum_{v \in F_i}
\tilde{g}^*\left(\mathbf{y}_1\right) \tilde{g} \left(\mathbf{y}_2\right)
\omega_p^{-a v} =0.
\end{align}
In addition,
by Lemma \ref{250613lem2},
when $(\mathbf{y}_1, \mathbf{y}_2) \in S^{(i,a,E,0)}_{k_1,k_2} $ with $ \mathbf{y}_1= \mathbf{y}_2$ (i.e., $a=0$ and $k_1=k_2$),
we have
\begin{align*}
&\Expect_{F_1,...,F_m}
\sum_{v \in F_i}
\tilde{g}^*\left(\mathbf{y}_1\right) \tilde{g} \left(\mathbf{y}_2\right)
\omega_p^{-a v} \\
= & \frac{1}{(r(p-r))^{\frac{k_1+k_2}2}r^{2m-k_1-k_2}}
\Expect_{F_1,..\hat{F}_i.,F_m}
\left( \sum_{ {x}_{1,1},x_{1,2} \in F_1 } \omega_p^{- {y}_{1,1}  {x}_{1,1}+y_{2,1} x_{2,1}} \right)\times\cdots\\
& \times
\left( \sum_{ {x}_{m,1},x_{m,2} \in F_m } \omega_p^{- {y}_{1,m}  {x}_{1,m}+y_{2,m} x_{2,m}} \right)
\times \Expect_{ {F}_i} 
\left( \sum_{ {x}_{1,i},x_{2,i},v \in F_i } \omega_p^{- {y}_{1,i}  {x}_{1,i}+y_{2,i} x_{2,i} -av} \right)\\
=& \frac{1}{(r(p-r))^{k_1} r^{2m-2k_1}} 
\left( r-\frac{r(r-1)}{p-1} \right)^{k_1} (r^2)^{m-k_1} r \\
=& \frac{r}{(p-1)^{k_1}}.
\end{align*}

Next,
we assume $(\mathbf{y}_1, \mathbf{y}_2) \in S^{(i,a,E,0)}_{k_1,k_2} $ with $ \mathbf{y}_1\neq \mathbf{y}_2$ (i.e., 
$\mathbf{y}_1 - \mathbf{y}_2 + a\mathbf{e}_i=\mathbf{0} $ an $a\neq 0$).  Then we have $k_1=k_2 \pm 1$ or $k_1=k_2$.
If $k_1=k_2 - 1$, then$ y_{2,i} =a$ and $y_{1,i}=0$. Again, by Lemma \ref{250613lem2},
we have
\begin{align*}
&\Expect_{F_1,...,F_m}
\sum_{v \in F_i}
\tilde{g}^*\left(\mathbf{y}_1\right) \tilde{g} \left(\mathbf{y}_2\right)
\omega_p^{-a v} \\
= & \frac{1}{(r(p-r))^{\frac{k_1+k_2}2} r^{2m-k_1-k_2}}
\Expect_{F_1,..\hat{F}_i.,F_m}
\left( \sum_{ {x}_{1,1},x_{1,2} \in F_1 } \omega_p^{- {y}_{1,1}  {x}_{1,1}+y_{2,1} x_{2,1}} \right) \times\cdots\\
&\times
\left( \sum_{ {x}_{m,1},x_{m,2} \in F_m } \omega_p^{- {y}_{1,m}  {x}_{1,m}+y_{2,m} x_{2,m}} \right)
\times \Expect_{ {F}_i} 
\left( \sum_{ {x}_{1,i},x_{2,i},v \in F_i } \omega_p^{- {y}_{1,i}  {x}_{1,i}+y_{2,i} x_{2,i} -av} \right)\\
=& \frac{1}{(r(p-r))^{\frac{k_1+k_2}2} r^{2m-k_1-k_2}} 
\left( r-\frac{r(r-1)}{p-1} \right)^{k_1+1} r (r^2)^{m-k_1-1} \\
=& \frac{\sqrt{r(p-r)}}{(p-1)^{k_1+1}}.
\end{align*}
Similarly,
if $k_2=k_1 - 1$,
\begin{align*}
\Expect_{F_1,...,F_m}
\sum_{v \in F_i}
\tilde{g}^*\left(\mathbf{y}_1\right) \tilde{g} \left(\mathbf{y}_2\right)
\omega_p^{-a v} 
=& \frac{\sqrt{r(p-r)}}{(p-1)^{k_2+1}}.
\end{align*}
Finally,
if $k_1=k_2$,
then $ y_{1,i} ,y_{2,i}\neq 0$ and $ y_{1,i} -y_{2,i}+a=0$. By Lemmas~\ref{250613lem2} and~\ref{250613lem3}, 
we have
\begin{align*}
&\Expect_{F_1,...,F_m}
\sum_{v \in F_i}
\tilde{g}^*\left(\mathbf{y}_1\right) \tilde{g} \left(\mathbf{y}_2\right)
\omega_p^{-a v} \\
= & \frac{1}{(r(p-r))^{\frac{k_1+k_2}2} r^{2m-k_1-k_2}}
\Expect_{F_1,..\hat{F}_i.,F_m}
\left( \sum_{ {x}_{1,1},x_{1,2} \in F_1 } \omega_p^{- {y}_{1,1}  {x}_{1,1}+y_{2,1} x_{2,1}} \right)\times\cdots\\
&\times
\left( \sum_{ {x}_{m,1},x_{m,2} \in F_m } \omega_p^{- {y}_{1,m}  {x}_{1,m}+y_{2,m} x_{2,m}} \right)
\times \Expect_{ {F}_i} 
\left( \sum_{ {x}_{1,i},x_{2,i},v \in F_i } \omega_p^{- {y}_{1,i}  {x}_{1,i}+y_{2,i} x_{2,i} -av} \right)\\
=& \frac{1}{(r(p-r))^{k_1} r^{2m-2k_1}} 
\left( r-\frac{r(r-1)}{p-1} \right)^{k_1-1} (r^2)^{m-k_1}
\frac{r(p-r)(p-2r)}{(p-1)(p-2)}\\
=& \frac{p-2r}{(p-1)^{k_1} (p-2)} .
\end{align*}

Therefore,
when $k_1=k_2$, we have
\begin{align*}
&\Expect_{F_1,...,F_m}\bar{A}^{(m, l,E)}_{k_1,k_2} \\
= &   \frac{1}{\sqrt{\binom{m}{k_1}\binom{m}{k_2}}}
\sum_{\vec\alpha\in \BFF_p^n} 
\left(\varepsilon/p\right)^{|\vec\alpha|} \left(1-(p-1)\varepsilon/p\right)^{1- |\vec\alpha|}   \frac{1}{p} 
\sum_{i=1}^m 
\sum_{a \in \mathbb{F}_p}
\omega_p^{a \ep{\mathbf{b}_i, \vec \alpha}}
  \\
& \quad\times
\Expect_{F_1,...,F_m}
\sum_{v \in F_i} 
\sum_{\substack{(\mathbf{y}_1, \mathbf{y}_2) \in S^{(i,a,E,0)}_{k_1,k_2}  }} 
\tilde{g}^*\left(\mathbf{y}_1\right) \tilde{g} \left(\mathbf{y}_2\right)
\omega_p^{-a v}\\
=& \frac{1}{\sqrt{\binom{m}{k_1}\binom{m}{k_2}}}
\sum_{\vec\alpha\in \BFF_p^n} 
\left(\varepsilon/p\right)^{|\vec\alpha|} \left(1-(p-1)\varepsilon/p\right)^{1- |\vec\alpha|}   \frac{1}{p} 
\sum_{i=1}^m 
\sum_{a \in \mathbb{F}_p}
\omega_p^{a \ep{\mathbf{b}_i, \vec \alpha}}
  \\
& \qquad \times
\Expect_{F_1,...,F_m}
\sum_{v \in F_i} 
\sum_{\substack{(\mathbf{y}_1, \mathbf{y}_2) \in E_{k_1} \times E_{k_2}\\ \mathbf{y}_1 - \mathbf{y}_2 + a\mathbf{e}_i = \mathbf{0} }} 
\tilde{g}^*\left(\mathbf{y}_1\right) \tilde{g} \left(\mathbf{y}_2\right)
\omega_p^{-a v}\\
=&\frac{1}{ \binom{m}{k_1} }
\sum_{\vec\alpha\in \BFF_p^n} 
\left(\varepsilon/p\right)^{|\vec\alpha|} \left(1-(p-1)\varepsilon/p\right)^{1- |\vec\alpha|}   
\frac{1}{p} 
\sum_{i=1}^m \\
& 
\left(
\sum_{\substack{(\mathbf{y}_1, \mathbf{y}_2) \in E_{k_1} \times E_{k_2}\\ \mathbf{y}_1 = \mathbf{y}_2  }}  
\frac{r}{(p-1)^{k_1}} 
 +  \sum_{a \in \mathbb{F}_p^*} \sum_{\substack{(\mathbf{y}_1, \mathbf{y}_2) \in E_{k_1} \times E_{k_2}\\ \mathbf{y}_1 - \mathbf{y}_2 + a\mathbf{e}_i = \mathbf{0} }} 
\omega_p^{a \ep{\mathbf{b}_i, \vec \alpha}}
 \frac{p-2r}{(p-1)^{k_1} (p-2)}
\right)\\
=&  \frac{1}{ \binom{m}{k_1} } \frac{m}{p} \binom{m}{k_1} (p-1)^{k_1}\frac{r}{(p-1)^{k_1}}  +  \frac{p-2r}{(p-1)^{k_1} (p-2)}
\\
& \quad \times
\frac{m}{p}  
\binom{m-1}{k_1-1}(p-1)^{k_1-1}(p-2)
\frac{1}{ \binom{m}{k_1} } 
\sum_{\vec\alpha\in \BFF_p^n} 
\left(\varepsilon/p\right)^{|\vec\alpha|} \left(1-(p-1)\varepsilon/p\right)^{1- |\vec\alpha|}   
\Expect_{i\in [m]}
\sum_{a \in \mathbb{F}_p^*} 
\omega_p^{a \ep{\mathbf{b}_i, \vec \alpha}}.
\end{align*}

Recall that we use $Q(t,i)$ to denote the probability of $\ep{\Buu, \mathbf{b}_i}=0$ when $\Buu $ is uniformly chosen from the set $\{\Buu \in \BFF_p^n: |\Buu| =t\}$, hence
\begin{align*}
&\Expect_{F_1,...,F_m}\bar{A}^{(m, l,E)}_{k_1,k_1} \\
=&   \frac{mr}{p}    + \frac{(p-2r)k_1 }{p (p-1)}  
\sum_{\vec\alpha\in \BFF_p^n} 
\left(\varepsilon/p\right)^{|\vec\alpha|} \left(1-(p-1)\varepsilon/p\right)^{1- |\vec\alpha|} 
\Expect_{i\in [m]} (p-1)
\left( Q(|\vec\alpha|, i) - (1- Q(|\vec\alpha|, i) )\frac 1{p-1} )
\right) \\
=& \frac{mr}{p} + \frac{(p-2r)k_1 }{p } \tau_1(B,\varepsilon),
\end{align*}
where the last line comes from Lemma~\ref{250627lem2}.

In addition, when $k_1=k_2-1$, we have
\begin{align*}
&\Expect_{F_1,...,F_m}\bar{A}^{(m, l,E)}_{k_1,k_2} \\
= &   \frac{1}{\sqrt{\binom{m}{k_1}\binom{m}{k_2}}}
\sum_{\vec\alpha\in \BFF_p^n} 
\left(\varepsilon/p\right)^{|\vec\alpha|} \left(1-(p-1)\varepsilon/p\right)^{1- |\vec\alpha|}   \frac{1}{p} 
\sum_{i=1}^m 
\sum_{a \in \mathbb{F}_p}
\omega_p^{a \ep{\mathbf{b}_i, \vec \alpha}}
  \\
& \qquad\times
\Expect_{F_1,...,F_m}
\sum_{v \in F_i} 
\sum_{\substack{(\mathbf{y}_1, \mathbf{y}_2) \in S^{(i,a,E,0)}_{k_1,k_2}  }} 
\tilde{g}^*\left(\mathbf{y}_1\right) \tilde{g} \left(\mathbf{y}_2\right)
\omega_p^{-a v} \\
= &   \frac{1}{\sqrt{\binom{m}{k_1}\binom{m}{k_2}}}
\sum_{\vec\alpha\in \BFF_p^n} 
\left(\varepsilon/p\right)^{|\vec\alpha|} \left(1-(p-1)\varepsilon/p\right)^{1- |\vec\alpha|}   
\frac{1}{p} 
\sum_{i=1}^m 
\sum_{a \in \mathbb{F}_p^*}
\omega_p^{a \ep{\mathbf{b}_i, \vec \alpha}}
\sum_{\substack{(\mathbf{y}_1, \mathbf{y}_2) \in E_{k_1} \times E_{k_2}\\ \mathbf{y}_1 - \mathbf{y}_2 + a\mathbf{e}_i = \mathbf{0} }}
\frac{\sqrt{r(p-r)}}{(p-1)^{k_1+1}}\\
=& \frac{1}{\sqrt{\binom{m}{k_1}\binom{m}{k_2}}}
\frac{m\sqrt{r(p-r)}}{p(p-1)^{k_1+1}}
\binom{m-1}{k_1}(p-1)^{k_1}
\sum_{\vec\alpha\in \BFF_p^n} 
\left(\varepsilon/p\right)^{|\vec\alpha|} \left(1-(p-1)\varepsilon/p\right)^{1- |\vec\alpha|} 
\Expect_{i\in [m]} 
\sum_{a \in \mathbb{F}_p^*}
\omega_p^{a \ep{\mathbf{b}_i, \vec \alpha}}\\
=& \sqrt{(k_1+1)(m-k_1)} 
\frac{\sqrt{r(p-r)}}{p} 
\tau_1(B,\varepsilon).
\end{align*}

Similarly, when $k_1=k_2+1$, we also have 
\begin{align*}
&\Expect_{F_1,...,F_m}\bar{A}^{(m, l,E)}_{k_1,k_2} 
=\sqrt{(k_2+1)(m-k_2)} 
\frac{\sqrt{r(p-r)}}{p} 
\tau_1(B,\varepsilon).
\end{align*}
Therefore, the equation~\eqref{250617eq2} holds.
\end{proof}

In the following we will compute the expectation of $ \bar{A}^{(m, l,\CDD)}$ when the sets $F_1,...,F_m$ are independently uniformly chosen from all possible subsets of $\{1,2,...,p\}$ of size $r$.
We first deal with the state $\ket{P(f)}$ instead of $\ket{P_\CDD(f)}$.
Similar to Theorem \ref{thm:main_2},
we have the following result.

\begin{lem}\label{lem10.2}
Let $f(\mathbf{x})=\sum_{i=1}^m f_i\left(\sum_{j=1}^n B_{i j} x_j\right)$ be a \maxlinsat{} objective function with matrix $B \in \BFF_p^{m \times n}$ for a prime $p$ and positive integers $m$ and $n$ such that $m>n$. 
Suppose that $\left|f_i^{-1}(+1)\right|=r$ for some $r \in\{1, \ldots, p-1\}$. 
Let 
$P(f) $ be the degree-$l$ polynomial determined by coefficients $w_0,...,w_l$ such that the perfect DQI state $\ket{P(f)}$ satisfies \eqref{51} (note that $\ket{P(f)}$ is not normalized).
Let $\left\langle s_E^{(m, l)}\right\rangle$ be the expected number of satisfied constraints for the symbol string obtained upon measuring the errored imperfect DQI state $\CEE^{\otimes n} (\ket{P(f)}\bra{P(f)})$ in the computational basis. 
Then
$$
\left\langle s_E^{(m, l)}\right\rangle=  \frac{\mathbf{w}^{\dagger} \bar{A}^{(m, l,E)} \mathbf{w}}{\ep{P(f)\big|P(f)}} ,
$$
where $\bar{A}^{(m, l,E)}$ is the $(l+1) \times(l+1)$ symmetric matrix defined by
\begin{align}\label{250617eq1}
\begin{aligned}
\bar{A}^{(m, l,E)}_{k_1,k_2} = &
  \frac{1}{\sqrt{\binom{m}{k_1}\binom{m}{k_2}}}
\sum_{\vec\alpha\in \BFF_p^n} \left(\varepsilon/p\right)^{|\vec\alpha|} \left(1-(p-1)\varepsilon/p\right)^{1- |\vec\alpha|}   
  \\
& \times
\frac{1}{p} 
\sum_{i=1}^m 
\sum_{v \in F_i} 
\sum_{a \in \mathbb{F}_p}
\sum_{\substack{(\mathbf{y}_1, \mathbf{y}_2) \in S^{(i,a,E)}_{k_1,k_2}
}} 
\tilde{g}^*\left(\mathbf{y}_1\right) \tilde{g} \left(\mathbf{y}_2\right)
\omega_p^{-a v+a \ep{\mathbf{b}_i, \vec \alpha}}
\end{aligned}
\end{align} 
for $0\le k_1,k_2\le l$, and 
\begin{align}
S^{(i,a,E)}_{k_1,k_2} = \{(\mathbf{y}_1, \mathbf{y}_2) \in E_{k_1} \times E_{k_2} : 
B^T(\mathbf{y}_1 - \mathbf{y}_2 + a\mathbf{e}_i)=\mathbf{0} \}.
\end{align}
\end{lem}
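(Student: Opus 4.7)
The plan is to closely mirror the proof of Theorem~\ref{thm:main_1} given in Appendix~\ref{appen:A}, with one essential modification: I drop the step that uses the minimum-distance assumption $2l+1<d^\perp$. In that proof, the assumption forced the inner product $\langle B^T\mathbf{y}_1\,|\,B^T(\mathbf{y}_2-a\mathbf{e}_i)\rangle$ to collapse to a handful of easily classified configurations (Cases (I)--(IV) there); in the present setting I simply keep the unrestricted sum, which is exactly what the matrix $\bar A^{(m,l,E)}$ encodes.

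First, since $\ket{P(f)}$ is not normalized in general, I write
\[
\left\langle s_E^{(m,l)}\right\rangle \;=\; \frac{\trace\!\Pa{S_f\,\CEE^{\otimes n}\!\pa{\ket{P(f)}\!\bra{P(f)}}}}{\ep{P(f)|P(f)}},
\qquad S_f=\sum_{\mathbf{x}\in \BFF_p^n} s(\mathbf{x})\ket{\mathbf{x}}\bra{\mathbf{x}},
\]
and pass to the Heisenberg picture. Because $S_f$ is diagonal in the computational basis, the adjoint of $\CEE^{\otimes n}$ acting on it coincides with that of the $X$-type depolarizing channel $\CEE_1(\rho)=(1-\varepsilon)\rho+(\varepsilon/p)\sum_{\alpha\in\BFF_p}X^\alpha\rho X^{-\alpha}$, exactly as in the proof of Theorem~\ref{thm:main_1}. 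This produces
\[
\ep{P(f)|P(f)}\left\langle s_E^{(m,l)}\right\rangle \;=\; \sum_{\vec\alpha\in\BFF_p^n} (\varepsilon/p)^{|\vec\alpha|}\bigl(1-(p-1)\varepsilon/p\bigr)^{1-|\vec\alpha|}\, \bra{P(f)} X^{\vec\alpha} S_f X^{-\vec\alpha}\ket{P(f)}.
\]

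Next, I substitute the Pauli expansion of $S_f$ from \eqref{80}, commute $X^{\vec\alpha}$ through $Z^{a\mathbf{b}_i}$ (which produces the phase $\omega_p^{a\langle\mathbf{b}_i,\vec\alpha\rangle}$), and apply $F^{\otimes n}$ using $FZF^\dagger=X^{-1}$ to pass from $\ket{P(f)}$ to $\ket{\widetilde P(f)}$. Plugging in the explicit form \eqref{66} of $\ket{\widetilde P(f)}=\sum_{k=0}^l (w_k/\sqrt{\binom{m}{k}})\sum_{|\mathbf{y}|=k}\widetilde g(\mathbf{y})\ket{B^T\mathbf{y}}$ expresses the inner product as a bilinear form in $\mathbf{w}$, whose $(k_1,k_2)$-coefficient is
\[
\frac{1}{p\sqrt{\binom{m}{k_1}\binom{m}{k_2}}}\sum_{\vec\alpha,i,a,\,v\in F_i} (\varepsilon/p)^{|\vec\alpha|}\bigl(1-(p-1)\varepsilon/p\bigr)^{1-|\vec\alpha|}\,\omega_p^{-av+a\langle\mathbf{b}_i,\vec\alpha\rangle}\!\!\sum_{\substack{|\mathbf{y}_1|=k_1\\|\mathbf{y}_2|=k_2}}\widetilde g^*(\mathbf{y}_1)\widetilde g(\mathbf{y}_2)\,\langle B^T\mathbf{y}_1|B^T(\mathbf{y}_2-a\mathbf{e}_i)\rangle.
\]

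The critical point — and the only deviation from the Theorem~\ref{thm:main_1} proof — is that without the distance hypothesis, the overlap $\langle B^T\mathbf{y}_1|B^T(\mathbf{y}_2-a\mathbf{e}_i)\rangle$ equals $1$ precisely when $B^T(\mathbf{y}_1-\mathbf{y}_2+a\mathbf{e}_i)=\mathbf{0}$, and I can no longer conclude $\mathbf{y}_1=\mathbf{y}_2-a\mathbf{e}_i$. So the $(\mathbf{y}_1,\mathbf{y}_2)$-sum is simply restricted to the set $S^{(i,a,E)}_{k_1,k_2}$ appearing in the statement, and the coefficient is exactly $\bar A^{(m,l,E)}_{k_1,k_2}$ as defined in \eqref{250617eq1}. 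Assembling, $\trace[\CEE_1^{\otimes n}(S_f)\ket{P(f)}\bra{P(f)}]=\mathbf{w}^\dagger \bar A^{(m,l,E)}\mathbf{w}$, and dividing by $\ep{P(f)|P(f)}$ gives the claim. The main obstacle is purely notational bookkeeping: making sure the commutation of $X^{\vec\alpha}$ with the Pauli expansion of $S_f$, the conversion $FZF^\dagger=X^{-1}$, and the substitution of \eqref{66} are all combined in the correct order, and that the constraint $B^T(\mathbf{y}_1-\mathbf{y}_2+a\mathbf{e}_i)=\mathbf{0}$ is kept in its full generality throughout — no new analytic input beyond the proof of Theorem~\ref{thm:main_1} is needed.
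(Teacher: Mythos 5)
Your proposal is correct and follows essentially the same route as the paper: the paper proves the analogous statement for $\ket{P_\CDD(f)}$ (Lemma~\ref{thm:main_2}) by exactly this computation---passing the depolarizing channel onto the diagonal observable $S_f$, reducing it to the $X$-type channel, expanding via \eqref{80} and \eqref{66}, and keeping the overlap condition $B^T(\mathbf{y}_1-\mathbf{y}_2+a\mathbf{e}_i)=\mathbf{0}$ in full generality---and then notes that Lemma~\ref{lem10.2} follows in the same way with $E_k$ in place of $\CDD_k$. Your only deviation, dropping the minimum-distance collapse step from the proof of Theorem~\ref{thm:main_1}, is precisely what the paper's argument does as well.
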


Now, let us denote 
\begin{align}\label{250621eq1}
T^{(i,a,\CFF)}_{k_1, k_2} 
:= \left\{ (\mathbf{y}_1, \mathbf{y}_2)\in E_{k_1} \times \CFF_{k_2} \cup \CFF_{k_1}\times E_{k_2}: \mathbf{y}_1 - \mathbf{y}_2 + a\mathbf{e}_i=\mathbf{0} \right\}.
\end{align}

\begin{lem}\label{10.6}
Let $\bar{A}^{(m, l,\CDD)}$ be defined as in \eqref{177}.
If the sets $F_1,...,F_m$ are chosen independently uniformly at random from the set of all $r$-subsets of $\BFF_p$,
then we have
\begin{align}\label{250621eq3}
\Expect_{F_1,...,F_m} \bar{A}^{(m, l,\CDD)} = \Expect_{F_1,...,F_m} \bar{A}^{(m, l,E)} - D^{(m,l,\CFF)} ,
\end{align}
where $\bar{A}^{(m, l,E)}$ is defined as in \eqref{250617eq1},
and $D^{(m,l,\CFF)}$ is the $(l+1)\times (l+1)$ symmetric matrix whose $(k_1,k_2)$-entry $D^{(m,l,\CFF)}_{k_1,k_2}$ satisfies that,
$D^{(m,l,\CFF)}_{k_1,k_2}=0$ when $|k_1- k_2|\ge 2$,
\begin{align}
D^{(m,l,\CFF)}_{k_1,k_1+1} = \frac{1}{\sqrt{\binom{m}{k_1}\binom{m}{k_1+1}}}
\frac{\sqrt{r(p-r)}}{(p-1)^{k_1+1}}
\frac{1}{p} 
\sum_{i=1}^m
\sum_{a \in \mathbb{F}_p^*}
\left| T^{(i,a,\CFF)}_{k_1,k_1+1} \right|
\tau(B,\varepsilon,i),
\end{align}
when $k_1=k_2-1$,
and 
\begin{align*}
D^{(m,l,\CFF)}_{k_1,k_1} =\frac{mr}{p}
\gamma_{k_1}
+ 
\frac{1}{ \binom{m}{k_1} }
\frac{1}{p} \frac{p-2r}{(p-1)^{k_1} (p-2)}
\sum_{i=1}^m 
\sum_{a \in \mathbb{F}_p^*}
\left( \left|T^{(i,a,\CFF)}_{k_1,k_1}\right| - \left|\CFF_{k_1}\right|\right) \tau(B, \varepsilon,i)
\end{align*}
when $k_1=k_2$,
where $T^{(i,a,\CFF)}_{k_1, k_2}$ is defined as in \eqref{250621eq1}.
\end{lem}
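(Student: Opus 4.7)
The strategy is to exploit the disjoint decomposition $E_k = \CDD_k \sqcup \CFF_k$ to express the $\CDD$-restricted double sum in \eqref{177} in terms of the full $E$-sum appearing in Lemma~\ref{lem10.2}, minus a correction supported on the failure region, and then to evaluate this correction using the per-pair expectation formulas already established in the proof of Lemma~\ref{10.3}. Concretely, I would write
$$\chi_{\CDD_{k_1}}(\mathbf{y}_1)\,\chi_{\CDD_{k_2}}(\mathbf{y}_2) \;=\; \chi_{E_{k_1}}(\mathbf{y}_1)\,\chi_{E_{k_2}}(\mathbf{y}_2) \;-\; \chi_{\widetilde{T}^{(\CFF)}_{k_1,k_2}}(\mathbf{y}_1,\mathbf{y}_2),$$
where $\widetilde{T}^{(\CFF)}_{k_1,k_2} := (E_{k_1}\times \CFF_{k_2})\cup(\CFF_{k_1}\times E_{k_2})$. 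Substituting term-by-term into \eqref{177} gives $\bar{A}^{(m,l,\CDD)}_{k_1,k_2} = \bar{A}^{(m,l,E)}_{k_1,k_2} - R^{(m,l,\CFF)}_{k_1,k_2}$, where $R^{(m,l,\CFF)}_{k_1,k_2}$ has the same integrand as \eqref{250617eq1} but with the inner double sum restricted to $\widetilde{T}^{(\CFF)}_{k_1,k_2}$. Taking the expectation over the $F_j$'s and defining $D^{(m,l,\CFF)} := \Expect_{F_1,\ldots,F_m} R^{(m,l,\CFF)}$ then immediately yields \eqref{250621eq3}; what remains is to identify the entries of $D^{(m,l,\CFF)}$ in the claimed form.

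Next, I would evaluate $D^{(m,l,\CFF)}$ entry-by-entry by re-using the pointwise expectation computations in the proof of Lemma~\ref{10.3}. The key input is that $\Expect_{F_1,\ldots,F_m}\sum_{v\in F_i}\tilde{g}^*(\mathbf{y}_1)\tilde{g}(\mathbf{y}_2)\omega_p^{-av}$ vanishes unless $\mathbf{y}_1-\mathbf{y}_2+a\mathbf{e}_i=\mathbf{0}$, so the correction set collapses from $\widetilde{T}^{(\CFF)}_{k_1,k_2}$ down to $T^{(i,a,\CFF)}_{k_1,k_2}$ as defined in \eqref{250621eq1}. For $|k_1-k_2|\ge 2$ this constraint is infeasible and $D^{(m,l,\CFF)}_{k_1,k_2}=0$. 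For $k_2=k_1+1$, only $a\neq 0$ contributes, the per-pair expectation from Lemma~\ref{10.3} is $\sqrt{r(p-r)}/(p-1)^{k_1+1}$, and the $\vec{\alpha}$-sum simplifies via the identity $\sum_{\vec{\alpha}}(\varepsilon/p)^{|\vec{\alpha}|}(1-(p-1)\varepsilon/p)^{n-|\vec{\alpha}|}\,\omega_p^{a\langle\mathbf{b}_i,\vec{\alpha}\rangle} = \tau(B,\varepsilon,i)$ valid for every $a\neq 0$ (provable directly by factorizing the sum over coordinates of $\vec{\alpha}$, equivalently from Lemma~\ref{250627lem2}). Assembling these factors produces the claimed off-diagonal formula for $D^{(m,l,\CFF)}_{k_1,k_1+1}$.

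The diagonal entry $D^{(m,l,\CFF)}_{k_1,k_1}$ is the delicate case and, in my view, is where the main obstacle lies. Here I would split the $a\in\mathbb{F}_p$ summation into the $a=0$ and $a\in\mathbb{F}_p^*$ pieces. The $a=0$ piece only sees diagonal pairs $(\mathbf{y},\mathbf{y})$ with $\mathbf{y}\in\CFF_{k_1}$: per-pair weight $r/(p-1)^{k_1}$ from Lemma~\ref{10.3}, count $|\CFF_{k_1}|$, and trivial $\vec{\alpha}$-sum equal to $1$; applying $|\CFF_{k_1}|=\gamma_{k_1}(p-1)^{k_1}\binom{m}{k_1}$ yields the leading term $mr\gamma_{k_1}/p$. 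The $a\neq 0$ piece uses the per-pair weight $(p-2r)/\!\bigl((p-1)^{k_1}(p-2)\bigr)$ on $|T^{(i,a,\CFF)}_{k_1,k_1}|$ many pairs, each multiplied by $\tau(B,\varepsilon,i)$ from the $\vec{\alpha}$-sum. Matching this raw expression to the stated form of $D^{(m,l,\CFF)}_{k_1,k_1}$ requires careful bookkeeping of the $\sum_{a\in\mathbb{F}_p}$ versus $\sum_{a\in\mathbb{F}_p^*}$ indexing: exploiting the identification $|T^{(i,0,\CFF)}_{k_1,k_1}|=|\CFF_{k_1}|$ to rewrite $\sum_{a\in\mathbb{F}_p^*}|T^{(i,a,\CFF)}_{k_1,k_1}| = \sum_{a\in\mathbb{F}_p}|T^{(i,a,\CFF)}_{k_1,k_1}|-|\CFF_{k_1}|$ is what ultimately produces the $-|\CFF_{k_1}|$ correction appearing inside the inner sum. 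This combinatorial reconciliation—ensuring that the split of $a=0$ versus $a\neq 0$ contributions re-assembles consistently with the definition of $T^{(i,a,\CFF)}_{k_1,k_1}$ extended over all of $\mathbb{F}_p$—is the principal step that distinguishes this lemma from its full-$E$ counterpart Lemma~\ref{10.3}.
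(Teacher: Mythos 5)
Your overall route coincides with the paper's: you split $E_k=\CDD_k\sqcup\CFF_k$ (equivalently $S^{(i,a,E,0)}_{k_1,k_2}=S^{(i,a,\CDD,0)}_{k_1,k_2}\cup T^{(i,a,\CFF)}_{k_1,k_2}$), use the vanishing of $\Expect_{F_1,\ldots,F_m}\sum_{v\in F_i}\tilde{g}^*(\mathbf{y}_1)\tilde{g}(\mathbf{y}_2)\omega_p^{-av}$ off the set $\mathbf{y}_1-\mathbf{y}_2+a\mathbf{e}_i=\mathbf{0}$, and recycle the per-pair expectations from the proof of Lemma~\ref{10.3} together with the factorized identity $\sum_{\vec\alpha}(\varepsilon/p)^{|\vec\alpha|}(1-(p-1)\varepsilon/p)^{n-|\vec\alpha|}\omega_p^{a\langle\mathbf{b}_i,\vec\alpha\rangle}=\tau(B,\varepsilon,i)$ for $a\neq 0$. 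The vanishing for $|k_1-k_2|\ge 2$, the off-diagonal entry $D^{(m,l,\CFF)}_{k_1,k_1+1}$, and the $a=0$ diagonal contribution $\frac{mr}{p}\gamma_{k_1}$ are all handled exactly as in the paper and are correct.

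The gap is precisely at the step you flag as delicate, and your proposed fix does not close it. Your raw $a\in\BFF_p^*$ diagonal term is $\frac{1}{\binom{m}{k_1}}\frac1p\frac{p-2r}{(p-1)^{k_1}(p-2)}\sum_{i}\sum_{a\in\BFF_p^*}\bigl|T^{(i,a,\CFF)}_{k_1,k_1}\bigr|\,\tau(B,\varepsilon,i)$, while the lemma asserts $\sum_{a\in\BFF_p^*}\bigl(\bigl|T^{(i,a,\CFF)}_{k_1,k_1}\bigr|-|\CFF_{k_1}|\bigr)\tau(B,\varepsilon,i)$, i.e.\ the quantity $|\CFF_{k_1}|$ is subtracted once for every nonzero $a$ (so $(p-1)|\CFF_{k_1}|$ per $i$). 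The identity you invoke, $\sum_{a\in\BFF_p^*}|T^{(i,a,\CFF)}_{k_1,k_1}|=\sum_{a\in\BFF_p}|T^{(i,a,\CFF)}_{k_1,k_1}|-|\CFF_{k_1}|$, subtracts $|\CFF_{k_1}|$ only once and changes the index set to all of $\BFF_p$, so it does not produce the stated entry; as written, the diagonal formula is not derived. For comparison, the paper obtains the $-|\CFF_{k_1}|$ differently: after isolating the $\mathbf{y}_1=\mathbf{y}_2$ pairs (which give $\frac{mr}{p}\gamma_{k_1}$), it replaces, for each $a\neq 0$, the count of pairs in $T^{(i,a,\CFF)}_{k_1,k_1}$ with $\mathbf{y}_1\neq\mathbf{y}_2$ by $|T^{(i,a,\CFF)}_{k_1,k_1}|-|\CFF_{k_1}|$, which is where the subtraction inside the $a$-sum enters. (Note the tension: under the definition \eqref{250621eq1}, the constraint $\mathbf{y}_1-\mathbf{y}_2+a\mathbf{e}_i=\mathbf{0}$ with $a\neq 0$ already forces $\mathbf{y}_1\neq\mathbf{y}_2$, so a literal computation yields your raw expression without the subtraction; in the paper's downstream use only $p=2$, $r=1$ matters, where $p-2r=0$ and the discrepancy is immaterial.) Either way, your bookkeeping neither reproduces the stated entry nor resolves this discrepancy, so the diagonal case of the lemma is left unproven in your proposal.
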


\begin{proof}
For $0\le k_1,k_2\le l$,
we define
\begin{align}
S^{(i,a,\CDD,0)}_{k_1,k_2} := &\{(\mathbf{y}_1, \mathbf{y}_2) \in S^{(i,a,\CDD)}_{k_1,k_2}: 
\mathbf{y}_1 - \mathbf{y}_2 + a\mathbf{e}_i=\mathbf{0} \}\\
S^{(i,a,\CDD,1)}_{k_1,k_2} := &\{(\mathbf{y}_1, \mathbf{y}_2) \in S^{(i,a,\CDD)}_{k_1,k_2} : 
\mathbf{y}_1 - \mathbf{y}_2 + a\mathbf{e}_i \neq \mathbf{0} \}.
\end{align}
Hence, $S^{(i,a,\CDD)}_{k_1,k_2} = S^{(i,a,\CDD,0)}_{k_1,k_2} \cup S^{(i,a,\CDD,1)}_{k_1,k_2}$.
Similar to the proof of Lemma \ref{10.3}, 
if $(\mathbf{y}_1, \mathbf{y}_2) \in S^{(i,a,\CDD,1)}_{k_1,k_2} $, we have
\begin{align}
\Expect_{F_1,...,F_m}
\sum_{v \in F_i}
\tilde{g}^*\left(\mathbf{y}_1\right) \tilde{g} \left(\mathbf{y}_2\right)
\omega_p^{-a v} =0.
\end{align}
Thus,
\begin{align*}
\bar{A}^{(m, l,\CDD)}_{k_1,k_2} = &
  \frac{1}{\sqrt{\binom{m}{k_1}\binom{m}{k_2}}}
\sum_{\vec\alpha\in \BFF_p^n} \left(\varepsilon/p\right)^{|\vec\alpha|} \left(1-(p-1)\varepsilon/p\right)^{1- |\vec\alpha|}   
  \\
& \qquad \times
\frac{1}{p} 
\sum_{i=1}^m 
\sum_{v \in F_i} 
\sum_{a \in \mathbb{F}_p}
\sum_{\substack{(\mathbf{y}_1, \mathbf{y}_2) \in S^{(i,a,\CDD,0)}_{k_1,k_2}  }} 
\tilde{g}^*\left(\mathbf{y}_1\right) \tilde{g} \left(\mathbf{y}_2\right)
\omega_p^{-a v+a \ep{\mathbf{b}_i, \vec \alpha}},
\end{align*}
which will become $\bar{A}^{(m, l,E)}_{k_1,k_2}$ if we replace $S^{(i,a,\CDD,0)}_{k_1,k_2}$ by $S^{(i,a,E,0)}_{k_1,k_2}$ in the above equation.
Note that $S^{(i,a,E,0)}_{k_1,k_2} = S^{(i,a,\CDD,0)}_{k_1,k_2} \cup T^{(i,a,\CFF)}_{k_1,k_2}$,
hence
\begin{align*}
D^{(m, l,\CFF)}_{k_1,k_2} = &
\Expect_{F_1,...,F_m} \bar{A}^{(m, l,E)}_{k_1,k_2} - \Expect_{F_1,...,F_m} \bar{A}^{(m, l,\CDD)}_{k_1,k_2} \\
=& 
\frac{1}{\sqrt{\binom{m}{k_1}\binom{m}{k_2}}}
\sum_{\vec\alpha\in \BFF_p^n} \left(\varepsilon/p\right)^{|\vec\alpha|} \left(1-(p-1)\varepsilon/p\right)^{1- |\vec\alpha|}   
  \\
& \quad \times
\Expect_{F_1,...,F_m} 
\frac{1}{p} 
\sum_{i=1}^m 
\sum_{v \in F_i} 
\sum_{a \in \mathbb{F}_p}
\sum_{\substack{(\mathbf{y}_1, \mathbf{y}_2) \in T^{(i,a,\CFF)}_{k_1,k_2}  }} 
\tilde{g}^*\left(\mathbf{y}_1\right) \tilde{g} \left(\mathbf{y}_2\right)
\omega_p^{-a v+a \ep{\mathbf{b}_i, \vec \alpha}}.
\end{align*}
Similar to the proof of Lemma \ref{10.3},
when $(\mathbf{y}_1, \mathbf{y}_2) \in T^{(i,a,\CFF)}_{k_1,k_2} $ with $ \mathbf{y}_1= \mathbf{y}_2$ (i.e., $a=0$ and $k_1=k_2$),
we have
\begin{align*}
\Expect_{F_1,...,F_m}
\sum_{v \in F_i}
\tilde{g}^*\left(\mathbf{y}_1\right) \tilde{g} \left(\mathbf{y}_2\right)
\omega_p^{-a v}  
=& \frac{r}{(p-1)^{k_1}}.
\end{align*}
When $(\mathbf{y}_1, \mathbf{y}_2) \in T^{(i,a,\CFF)}_{k_1,k_2} $ with $ \mathbf{y}_1\neq \mathbf{y}_2$ (i.e., 
$\mathbf{y}_1 - \mathbf{y}_2 + a\mathbf{e}_i=\mathbf{0} $ an $a\neq 0$), 
there are two possibilities: $k_1=k_2 \pm 1$ or $k_1=k_2$.
If $k_1=k_2 - 1$, we must have $ y_{2,i} =a$ and $y_{1,i}=0$, and  then
\begin{align*}
\Expect_{F_1,...,F_m}
\sum_{v \in F_i}
\tilde{g}^*\left(\mathbf{y}_1\right) \tilde{g} \left(\mathbf{y}_2\right)
\omega_p^{-a v}
=& \frac{\sqrt{r(p-r)}}{(p-1)^{k_1+1}}.
\end{align*}
If $k_1=k_2 + 1$, we must have $ y_{1,i} =-a$ and $y_{2,i}=0$, and then
\begin{align*}
\Expect_{F_1,...,F_m}
\sum_{v \in F_i}
\tilde{g}^*\left(\mathbf{y}_1\right) \tilde{g} \left(\mathbf{y}_2\right)
\omega_p^{-a v}
=& \frac{\sqrt{r(p-r)}}{(p-1)^{k_1}}.
\end{align*}
If $k_1=k_2 $, we must have $ y_{1,i} -y_{2,i}+a=0$,
and then
\begin{align*}
\Expect_{F_1,...,F_m}
\sum_{v \in F_i}
\tilde{g}^*\left(\mathbf{y}_1\right) \tilde{g} \left(\mathbf{y}_2\right)
\omega_p^{-a v}  
=& \frac{p-2r}{(p-1)^{k_1} (p-2)}.
\end{align*}
Therefore,
when $k_1=k_2$, we have
\begin{align*}
&D^{(m, l,\CFF)}_{k_1,k_2}\\
=&\frac{1}{\sqrt{\binom{m}{k_1}\binom{m}{k_2}}}
\sum_{\vec\alpha\in \BFF_p^n} \left(\varepsilon/p\right)^{|\vec\alpha|} \left(1-(p-1)\varepsilon/p\right)^{1- |\vec\alpha|}   
\Expect_{F_1,...,F_m} 
\frac{1}{p} 
\sum_{i=1}^m 
\sum_{v \in F_i} 
\sum_{a \in \mathbb{F}_p}\\
& \quad
\sum_{\substack{(\mathbf{y}_1, \mathbf{y}_2) \in T^{(i,a,\CFF)}_{k_1,k_2}  }} 
\tilde{g}^*\left(\mathbf{y}_1\right) \tilde{g} \left(\mathbf{y}_2\right)
\omega_p^{-a v+a \ep{\mathbf{b}_i, \vec \alpha}}\\
=& \frac{1}{ \binom{m}{k_1} }
\sum_{\vec\alpha\in \BFF_p^n} \left(\varepsilon/p\right)^{|\vec\alpha|} \left(1-(p-1)\varepsilon/p\right)^{1- |\vec\alpha|}   
\Expect_{F_1,...,F_m} 
\frac{1}{p} 
\sum_{i=1}^m 
\sum_{v \in F_i} 
\sum_{a \in \mathbb{F}_p} \\
& \quad
\left(
\sum_{\substack{(\mathbf{y}_1, \mathbf{y}_2) \in T^{(i,a,\CFF)}_{k_1,k_1} \\ \mathbf{y}_1=  \mathbf{y}_2  }} 
\tilde{g}^*\left(\mathbf{y}_1\right) \tilde{g} \left(\mathbf{y}_2\right)
\omega_p^{-a v+a \ep{\mathbf{b}_i, \vec \alpha}} +
\sum_{\substack{(\mathbf{y}_1, \mathbf{y}_2) \in T^{(i,a,\CFF)}_{k_1,k_1} \\ \mathbf{y}_1\neq  \mathbf{y}_2  }} 
\tilde{g}^*\left(\mathbf{y}_1\right) \tilde{g} \left(\mathbf{y}_2\right)
\omega_p^{-a v+a \ep{\mathbf{b}_i, \vec \alpha}}\right)\\
=& 
\frac{1}{ \binom{m}{k_1} }
\sum_{\vec\alpha\in \BFF_p^n} \left(\varepsilon/p\right)^{|\vec\alpha|} \left(1-(p-1)\varepsilon/p\right)^{1- |\vec\alpha|}  
\frac{1}{p} 
\sum_{i=1}^m
\sum_{\substack{ \mathbf{y}_1  \in \CFF_{k_1}   }}
\frac{r}{(p-1)^{k_1}} \\
& + \frac{1}{ \binom{m}{k_1} }
\sum_{\vec\alpha\in \BFF_p^n} \left(\varepsilon/p\right)^{|\vec\alpha|} \left(1-(p-1)\varepsilon/p\right)^{1- |\vec\alpha|}   
\frac{1}{p} 
\sum_{i=1}^m 
\sum_{a \in \mathbb{F}_p^*}
\omega_p^{ a \ep{\mathbf{b}_i, \vec \alpha}} 
\sum_{\substack{(\mathbf{y}_1, \mathbf{y}_2) \in T^{(i,a,\CFF)}_{k_1,k_1} \\ \mathbf{y}_1\neq  \mathbf{y}_2  }} 
\frac{p-2r}{(p-1)^{k_1} (p-2)}\\
=& \frac{1}{ \binom{m}{k_1} } \frac{mr}{p(p-1)^{k_1}}
\sum_{\substack{ \mathbf{y}_1  \in \CFF_{k_1}   }} 1
+ \frac{1}{ \binom{m}{k_1} }
\frac{1}{p} \frac{p-2r}{(p-1)^{k_1} (p-2)}
\sum_{i=1}^m 
\sum_{a \in \mathbb{F}_p^*}
\sum_{\substack{(\mathbf{y}_1, \mathbf{y}_2) \in T^{(i,a,\CFF)}_{k_1,k_1} \\ \mathbf{y}_1\neq  \mathbf{y}_2  }}
\\
& \quad 
\sum_{\vec\alpha\in \BFF_p^n} \left(\varepsilon/p\right)^{|\vec\alpha|} \left(1-(p-1)\varepsilon/p\right)^{1- |\vec\alpha|}
\omega_p^{ a \ep{\mathbf{b}_i, \vec \alpha}}\\
=& \frac{1}{ \binom{m}{k_1} } \frac{mr}{p(p-1)^{k_1}}
\sum_{\substack{ \mathbf{y}_1  \in \CFF_{k_1}   }} 1
+ 
\frac{1}{ \binom{m}{k_1} }
\frac{1}{p} \frac{p-2r}{(p-1)^{k_1} (p-2)}
\sum_{i=1}^m 
\sum_{a \in \mathbb{F}_p^*}
\sum_{\substack{(\mathbf{y}_1, \mathbf{y}_2) \in T^{(i,a,\CFF)}_{k_1,k_1} \\ \mathbf{y}_1\neq  \mathbf{y}_2  }}
\\
& \quad 
\sum_{\vec\alpha\in \BFF_p^n} \left(\frac{\varepsilon}{p}\right)^{|\vec\alpha|} \left(1-\frac{p-1}{p}\varepsilon\right)^{1- |\vec\alpha|}
\left(
Q(|\vec\alpha|, i) - (1-Q(|\vec\alpha|, i)) \frac{1}{p-1}
\right) \\
=& \frac{1}{ \binom{m}{k_1} } \frac{mr}{p(p-1)^{k_1}}
\left|  \CFF_{k_1} \right|
+ 
\frac{1}{ \binom{m}{k_1} }
\frac{1}{p} \frac{p-2r}{(p-1)^{k_1} (p-2)}
\sum_{i=1}^m 
\sum_{a \in \mathbb{F}_p^*}
\left( \left|T^{(i,a,\CFF)}_{k_1,k_1}\right| - \left|\CFF_{k_1}\right|\right)
\\
& \quad 
\sum_{\vec\alpha\in \BFF_p^n} \left(\varepsilon/p\right)^{|\vec\alpha|} \left(1-(p-1)\varepsilon/p\right)^{1- |\vec\alpha|}
\left( \frac{p}{p-1}
Q(|\vec\alpha|, i) -  \frac{1}{p-1}
\right)\\
=&  \frac{mr}{p}
\gamma_{k_1}
+ 
\frac{1}{ \binom{m}{k_1} }
\frac{1}{p} \frac{p-2r}{(p-1)^{k_1} (p-2)}
\sum_{i=1}^m 
\sum_{a \in \mathbb{F}_p^*}
\left( \left|T^{(i,a,\CFF)}_{k_1,k_1}\right| - \left|\CFF_{k_1}\right|\right) \tau(B, \varepsilon,i).
\end{align*}
In addition, when $k_1=k_2-1$,
\begin{align*}
&D^{(m, l,\CFF)}_{k_1,k_2}\\
=&\frac{1}{\sqrt{\binom{m}{k_1}\binom{m}{k_2}}}
\sum_{\vec\alpha\in \BFF_p^n} \left(\varepsilon/p\right)^{|\vec\alpha|} \left(1-(p-1)\varepsilon/p\right)^{1- |\vec\alpha|}   
\Expect_{F_1,...,F_m} 
\frac{1}{p} 
\sum_{i=1}^m 
\sum_{v \in F_i} 
\sum_{a \in \mathbb{F}_p}\\
& \quad
\sum_{\substack{(\mathbf{y}_1, \mathbf{y}_2) \in T^{(i,a,\CFF)}_{k_1,k_2}  }} 
\tilde{g}^*\left(\mathbf{y}_1\right) \tilde{g} \left(\mathbf{y}_2\right)
\omega_p^{-a v+a \ep{\mathbf{b}_i, \vec \alpha}}\\
=& \frac{1}{\sqrt{\binom{m}{k_1}\binom{m}{k_2}}}
\sum_{\vec\alpha\in \BFF_p^n} \left(\varepsilon/p\right)^{|\vec\alpha|} \left(1-(p-1)\varepsilon/p\right)^{1- |\vec\alpha|}
\frac{1}{p} 
\sum_{i=1}^m
\sum_{a \in \mathbb{F}_p^*}
\sum_{\substack{(\mathbf{y}_1, \mathbf{y}_2) \in T^{(i,a,\CFF)}_{k_1,k_2}  }}
\frac{\sqrt{r(p-r)}}{(p-1)^{k_1+1}}
\omega_p^{a \ep{\mathbf{b}_i, \vec \alpha}}\\
=& \frac{1}{\sqrt{\binom{m}{k_1}\binom{m}{k_1+1}}}
\frac{\sqrt{r(p-r)}}{(p-1)^{k_1+1}}
\frac{1}{p} 
\sum_{i=1}^m
\sum_{a \in \mathbb{F}_p^*}
\left| T^{(i,a,\CFF)}_{k_1,k_1+1} \right|
\sum_{\vec\alpha\in \BFF_p^n} \left(\varepsilon/p\right)^{|\vec\alpha|} \left(1-(p-1)\varepsilon/p\right)^{1- |\vec\alpha|} \omega_p^{a \ep{\mathbf{b}_i, \vec \alpha}}\\
=& \frac{1}{\sqrt{\binom{m}{k_1}\binom{m}{k_1+1}}}
\frac{\sqrt{r(p-r)}}{(p-1)^{k_1+1}}
\frac{1}{p} 
\sum_{i=1}^m
\sum_{a \in \mathbb{F}_p^*}
\left| T^{(i,a,\CFF)}_{k_1,k_1+1} \right|
\tau(B,\varepsilon,i).
\end{align*}
The case where $k_1=k_2+1$ can be handled similarly.
When $|k_1-k_2| \ge 2$, $T^{(i,a,\CFF)}_{k_1,k_1+1} = \emptyset$,
which implies that  $D_{k_1,k_2}^{(m,l,\CFF)} =0$.
\end{proof}

Now, let us consider the case where $p=2$ and $r=1$.
In this case, we have
$g_i(y) = \pm \frac{1}{\sqrt 2}$ for all $i$ and $y$ ,
and thus $\tilde g(\mathbf{y})=\pm 1$ for every $\mathbf{y}$.
The squared norm of $ \ket{ {P}_\CDD (f)}$ will become
\begin{align}\label{250623eq1}
\ep{ {P}_\CDD (f) | {P}_\CDD (f)} =
\sum_{k=0}^{l} \frac{|w_k|^2}{ {\binom{m}{k}}}  
\sum_{\substack{\mathbf{y} \in \CDD_k}}
|\widetilde{g}(\mathbf{y})|^2
= \sum_{k=0}^l |w_k|^2(1-\gamma_k),
\end{align}
and the entries of matrix $D^{(m,l,\CFF)}$ in Lemma \ref{10.6} will be
\begin{align}
D^{(m,l,\CFF)}_{k,k} =  \frac{mr}{2} \gamma_k,
\end{align}
and
\begin{align}\label{181}
D^{(m,l,\CFF)}_{k,k+1} = \frac{1}{2\sqrt{\binom{m}{k}\binom{m}{k+1}}} 
\sum_{i=1}^m 
\left| T^{(i,1,\CFF)}_{k,k+1} \right|
\tau(B,\varepsilon,i).
\end{align}

\begin{lem}\label{10.7}
For $p=2$ and $r=1$, we have 
\begin{align}
    \left\|D^{(m,l,\CFF)} - \frac{mr}{2}\mathrm{diag}(\gamma_0, \gamma_1,...,\gamma_l) \right\| \le  \tau_\infty(B,\varepsilon)(m+1)  \gamma_{\max} ,
\end{align}
where $\mathrm{diag}(\gamma_0, \gamma_1,...,\gamma_k)$ is the diagonal $(l+1) \times (l+1)$-matrix.
\end{lem}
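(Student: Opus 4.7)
The matrix $M := D^{(m,l,\CFF)} - \frac{mr}{2}\diag(\gamma_0, \ldots, \gamma_l)$ is symmetric and tridiagonal with vanishing diagonal, since the subtraction exactly cancels $D^{(m,l,\CFF)}_{k,k} = \frac{mr}{2}\gamma_k$. My plan is first to reduce the operator-norm bound to a single off-diagonal entry bound, and then control that entry by a combinatorial argument specific to $p=2$. By Gershgorin's disc theorem (or the maximum-absolute-row-sum bound), for a symmetric tridiagonal matrix with zero diagonal one has $\|M\| \leq \max_k \bigl(|M_{k,k-1}| + |M_{k,k+1}|\bigr) \leq 2\max_{0 \le k \le l-1} |D^{(m,l,\CFF)}_{k,k+1}|$. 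It therefore suffices to prove $|D^{(m,l,\CFF)}_{k,k+1}| \leq \tfrac{m+1}{2}\tau_\infty(B,\varepsilon)\gamma_{\max}$ for each $k$.

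Next, starting from \eqref{181}, the main task is to bound $\sum_{i=1}^{m} |T^{(i,1,\CFF)}_{k,k+1}|$. Since $p=2$, the defining relation $\mathbf{y}_1 + \mathbf{y}_2 = \mathbf{e}_i$ forces $\mathbf{y}_2 = \mathbf{y}_1 + \mathbf{e}_i$; such a pair belongs to $E_k \times E_{k+1}$ exactly when $(\mathbf{y}_1)_i = 0$, and it lies in $T^{(i,1,\CFF)}_{k,k+1}$ precisely when at least one of $\mathbf{y}_1 \in \CFF_k$ or $\mathbf{y}_2 \in \CFF_{k+1}$ holds. A simple union bound then gives
\[
|T^{(i,1,\CFF)}_{k,k+1}| \leq \bigl|\{\mathbf{y} \in \CFF_k : y_i = 0\}\bigr| + \bigl|\{\mathbf{y} \in \CFF_{k+1} : y_i = 1\}\bigr|.
\]
Swapping the order of summation over $i$ counts, for each $\mathbf{y} \in \CFF_k$, the $m-k$ indices with $y_i = 0$, and for each $\mathbf{y} \in \CFF_{k+1}$ the $k+1$ indices with $y_i = 1$. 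This yields $\sum_{i} |T^{(i,1,\CFF)}_{k,k+1}| \leq (m-k)|\CFF_k| + (k+1)|\CFF_{k+1}|$.

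The crux of the argument is the combinatorial identity $(m-k)\binom{m}{k} = (k+1)\binom{m}{k+1} = \sqrt{(m-k)(k+1)\binom{m}{k}\binom{m}{k+1}}$, which symmetrizes the two summands so that the $\sqrt{\binom{m}{k}\binom{m}{k+1}}$ normalization in \eqref{181} cancels exactly. Specifically, $\sum_i |T^{(i,1,\CFF)}_{k,k+1}| \leq 2\gamma_{\max}\sqrt{(m-k)(k+1)\binom{m}{k}\binom{m}{k+1}}$, and bounding $\tau(B,\varepsilon,i) \leq \tau_\infty(B,\varepsilon)$ gives $|D^{(m,l,\CFF)}_{k,k+1}| \leq \sqrt{(m-k)(k+1)}\,\gamma_{\max}\,\tau_\infty(B,\varepsilon)$. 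Applying AM--GM, $\sqrt{(m-k)(k+1)} \leq \tfrac{(m-k)+(k+1)}{2} = \tfrac{m+1}{2}$, and absorbing the factor of $2$ from the Gershgorin step produces the claimed bound $(m+1)\gamma_{\max}\tau_\infty(B,\varepsilon)$.

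The principal obstacle is obtaining a bound of the correct order $O(m)$ rather than a loose $O(m^{3/2})$. The naive estimate $|T^{(i,1,\CFF)}_{k,k+1}| \leq |\CFF_k| + |\CFF_{k+1}|$ (without restricting the value of $y_i$) contributes a full factor of $m$ when summed over $i$ and produces a ratio of binomials that fails to cancel against $\sqrt{\binom{m}{k}\binom{m}{k+1}}$; it is the coordinate-wise restriction $y_i \in \{0\} \cup \{1\}$, combined with the identity $(m-k)\binom{m}{k} = (k+1)\binom{m}{k+1}$, that yields the precise geometric-mean structure needed to deliver the sharp $(m+1)/2$ constant.
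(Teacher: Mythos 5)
Your proof is correct and follows essentially the same route as the paper: bound each off-diagonal entry via \eqref{181}, use the counting estimate $\sum_{i=1}^m |T^{(i,1,\CFF)}_{k,k+1}| \le (m-k)|\CFF_k| + (k+1)|\CFF_{k+1}|$, the identity $(m-k)\binom{m}{k}=(k+1)\binom{m}{k+1}$ together with AM--GM to get $|D^{(m,l,\CFF)}_{k,k+1}|\le \frac{m+1}{2}\gamma_{\max}\tau_\infty(B,\varepsilon)$, and finish with the row-sum (Gershgorin) bound for the zero-diagonal symmetric tridiagonal matrix. The only differences are cosmetic: you re-derive the counting estimate from the $p=2$ structure where the paper cites it from Lemma 10.7 of Jordan et al., and you make explicit the final factor-of-two norm step that the paper leaves implicit.
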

\begin{proof}
$D^{(m,l,\CFF)} - \frac{mr}{2}\mathrm{diag}(\gamma_0, \gamma_1,...,\gamma_k)$ is a symmetric matrix whose $(k_1,k_2)$-entry is zero unless $k_1=k_2\pm 1$.
By the equation~\eqref{181}, we have 
\begin{align*}
0\le D^{(m,l,\CFF)}_{k,k+1} =& \frac{1}{2\sqrt{\binom{m}{k}\binom{m}{k+1}}} 
\sum_{i=1}^m 
\left| T^{(i,1,\CFF)}_{k,k+1} \right|
\tau(B,\varepsilon,i) \\
\le &
\frac{1}{2\sqrt{\binom{m}{k}\binom{m}{k+1}}} 
\tau_\infty(B,\varepsilon)
\sum_{i=1}^m 
\left| T^{(i,1,\CFF)}_{k,k+1} \right|.
\end{align*}
Due to the fact that  $\sum_{i=1}^m 
\left| T^{(i,1,\CFF)}_{k,k+1} \right| \le (m-k)|\CFF_k| + (k+1)|\CFF_{k+1}|$ from  the Lemma 10.7 in \cite{jordan2024optimization},
we have
\begin{align*}
D^{(m,l,\CFF)}_{k,k+1} 
\le &
\frac{1}{2\sqrt{\binom{m}{k}\binom{m}{k+1}}} 
 \tau_\infty(B,\varepsilon)
\left(
(m-k)|\CFF_k| + (k+1)|\CFF_{k+1}|
\right)\\
= & \frac{1}{2\sqrt{\binom{m}{k}\binom{m}{k+1}}} 
 \tau_\infty(B,\varepsilon)
\left(
(m-k)\gamma_k \binom{m}{k}  + (k+1)\gamma_{k+1}\binom{m}{k+1}
\right)\\
\le & \frac{1}{2} 
(\gamma_k+ \gamma_{k+1})
 \tau_\infty(B,\varepsilon) \sqrt{(k+1)(m-k)}\\
\le & \frac 12 \gamma_{\max} (m+1)  \tau_\infty(B,\varepsilon).
\end{align*}
Therefore,
we have 
\begin{align*}
    \left\|D^{(m,l,\CFF)} - \frac{mr}{2}\mathrm{diag}(\gamma_0, \gamma_1,...,\gamma_l) \right\| \le  \tau_\infty(B,\varepsilon)(m+1)  \gamma_{\max},
\end{align*}
which completes the proof.
\end{proof}

Now, we are ready to prove
Theorem~\ref{thm:main_3}.
\begin{proof}[Proof of Theorem~\ref{thm:main_3}]
Due to Theorem \ref{thm:main_2} and equation \eqref{250623eq1}, we have
\begin{align}
\Expect_{F_1,...,F_m} \left\langle s_D^{(m, l)}\right\rangle= \Expect_{F_1,...,F_m} \frac{\mathbf{w}^{\dagger} \bar{A}^{(m, l,\CDD)} \mathbf{w}}{\ep{P_\CDD(f) \big| P_\CDD(f)}} 
= \frac{\Expect_{F_1,...,F_m} \mathbf{w}^{\dagger} \bar{A}^{(m, l,\CDD)} \mathbf{w}}{\sum_{k=0}^l |w_k|^2(1-\gamma_k)}.
\end{align}
By  Lemmas \ref{10.3} and \ref{10.6}, we have
\begin{align*}
\Expect_{F_1,...,F_m}   \bar{A}^{(m, l,\CDD) }   = \frac{m }{2} I+\tau_1(B,\varepsilon) \frac{1}{2}  A^{(m, l, 0)} -  D^{(m,l,\CFF)},
\end{align*}
where $d=p-2r =0$.
By Lemma \ref{10.7}, we have
\begin{align*}
    \left\|D^{(m,l,\CFF)} - \frac{m}{2}\mathrm{diag}(\gamma_0, \gamma_1,...,\gamma_l) \right\| \le  \tau_\infty(B,\varepsilon)(m+1) \gamma_{\max}.
\end{align*}
Thus, 
\begin{align*}
\Expect_{F_1,...,F_m} \mathbf{w}^{\dagger} \bar{A}^{(m, l,\CDD)} \mathbf{w} 
\ge&  \frac{m}{2} \sum_{k=0}^l |w_k|^2(1-\gamma_k) + \frac12\tau_1(B,\varepsilon) \mathbf{w}^{\dagger} {A}^{(m, l,0)} \mathbf{w} \\
&- \tau_\infty(B,\varepsilon)(m+1) \gamma_{\max} \|\mathbf{w}\|^2. 
\end{align*}
Therefore, we have
\begin{align*}
\Expect_{F_1,...,F_m} \left\langle s_D^{(m, l)}\right\rangle 
\ge& 
\frac m2 + \frac12\tau_1(B,\varepsilon) 
\frac{\mathbf{w}^{\dagger} {A}^{(m, l,0)} \mathbf{w}}{\|  \mathbf{w}\|^2} - 
\tau_\infty(B,\varepsilon)
\frac{ (m+1)  \gamma_{\max} \|\mathbf{w}\|^2}{\sum_{k=0}^l |w_k|^2(1-\gamma_k) }
 \\
\ge & 
\frac m2 + \frac12\tau_1(B,\varepsilon) 
\frac{\mathbf{w}^{\dagger} {A}^{(m, l,0)} \mathbf{w}}{\|  \mathbf{w}\|^2} - \tau_\infty(B,\varepsilon)
\frac{ (m+1)^2 \gamma_{\max} }{1-\gamma_{\max} }.
\end{align*}
\end{proof}

\section{Several technical lemmas on  Fourier transforms}

\begin{lem}\label{250612lem1}
Let $\mathbf{y} = (y_1,...,y_k)\in \BFF_p^k$.
If $F$ is chosen uniformly randomly from all $r$-subsets of $\BFF_p$, and $y_1+\cdots + y_k \neq 0$, then we have
\begin{align}\label{250613eq1}
    \Expect_{ {F}\subseteq \BFF_p, |F|=r} 
\sum_{ \mathbf{x} = (x_1,...,x_k)\in F^k } \omega_p^{\mathbf{x} \cdot \mathbf{y} } = 
0 .
\end{align}
\end{lem}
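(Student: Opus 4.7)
The plan is to exploit the translation invariance of the uniform distribution over $r$-subsets of $\BFF_p$. Concretely, I would first factor the inner sum as a product: since $\mathbf{x}\cdot \mathbf{y} = \sum_{j=1}^k x_j y_j$ and the coordinates $x_j$ range independently over $F$,
\begin{align*}
\sum_{\mathbf{x}\in F^k} \omega_p^{\mathbf{x}\cdot \mathbf{y}} \;=\; \prod_{j=1}^k S_F(y_j), \qquad \text{where } S_F(y) := \sum_{x\in F} \omega_p^{xy}.
\end{align*}
So the task becomes showing that $\Expect_F \prod_{j=1}^k S_F(y_j) = 0$ whenever $y_1+\cdots+y_k \not\equiv 0 \pmod p$.

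Next, I would use the following observation: for any fixed $g\in \BFF_p$, the map $F \mapsto F+g := \{x+g : x\in F\}$ is a bijection on the collection of $r$-subsets of $\BFF_p$. Since the uniform distribution is preserved under any bijection of its sample space, $F$ and $F+g$ have the same law. Moreover, a direct computation gives $S_{F+g}(y) = \omega_p^{gy}\, S_F(y)$, and therefore
\begin{align*}
\prod_{j=1}^k S_{F+g}(y_j) \;=\; \omega_p^{g(y_1+\cdots+y_k)} \prod_{j=1}^k S_F(y_j).
\end{align*}
Taking expectations on both sides and using translation invariance yields
\begin{align*}
\Expect_F \prod_{j=1}^k S_F(y_j) \;=\; \omega_p^{g(y_1+\cdots+y_k)}\, \Expect_F \prod_{j=1}^k S_F(y_j).
\end{align*}

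Finally, I would choose $g=1$ (or any $g$ for which the phase is nontrivial). Under the hypothesis $y_1+\cdots+y_k \not\equiv 0\pmod p$, the factor $\omega_p^{y_1+\cdots+y_k}$ differs from $1$, so the only way the displayed identity can hold is if the expectation vanishes, which is exactly \eqref{250613eq1}. There is no real obstacle here; the only thing to check carefully is the translation-invariance step, and that is immediate because the uniform law on $r$-subsets of the additive group $\BFF_p$ is manifestly shift-invariant. A more computational alternative would be to expand the expectation by partitioning the index set $[k]$ according to which coordinates $x_i$ coincide and using $\Expect_F \prod_{j=1}^s \mathbf{1}_F(v_j) = \binom{p-s}{r-s}/\binom{p}{r}$ for distinct $v_j$, but the symmetry argument is considerably shorter and makes the ``$\sum y_j \neq 0$'' hypothesis appear in its most natural form.
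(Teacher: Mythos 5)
Your proof is correct, and it reaches the conclusion by a cleaner packaging of the same underlying mechanism the paper uses. Both arguments ultimately rest on translation invariance of the uniform law on $r$-subsets of $\BFF_p$ together with the vanishing of a nontrivial additive character sum; the difference is in how that invariance is deployed. The paper (after implicitly reducing to the case where all $y_j$ are nonzero) re-parametrizes the sum over $\mathbf{x}\in F^k$ by $x_1$ and the differences $w_j = x_j - x_1$, observes that the expected weight $\theta(w_2,\dots,w_k)$ depends only on the differences (this is where shift-invariance enters), and then extracts the explicit full character sum $\sum_{x_1\in\BFF_p}\omega_p^{x_1(y_1+\cdots+y_k)} = p\,\delta_{y_1+\cdots+y_k=0}$. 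You instead factor the inner sum as $\prod_j S_F(y_j)$, note the equivariance $S_{F+g}(y) = \omega_p^{gy}S_F(y)$, and conclude from $\Expect = \omega_p^{g\sum_j y_j}\,\Expect$ that the expectation vanishes; in fact the factorization is not even essential, since the whole sum $h(F) = \sum_{\mathbf{x}\in F^k}\omega_p^{\mathbf{x}\cdot\mathbf{y}}$ already satisfies $h(F+g) = \omega_p^{g\sum_j y_j}h(F)$. What your route buys is brevity and the elimination of bookkeeping: no change of variables, no reduction to nonzero $y_j$, and no need to introduce or describe the weight $\theta$; what the paper's route buys is an explicit summation format that matches the companion computations (Lemmas \ref{250613lem2} and \ref{250613lem3}), where the nonvanishing cases $\sum_j y_j = 0$ must actually be evaluated and a pure symmetry argument no longer suffices.
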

\begin{proof}

Without loss of generality, we assume $|\mathbf{y}| = k$.
We write $x_j = x_1+w_j$ for $j=2,...,k$,
then
\begin{align*}
 &   \Expect_{ {F}\subseteq \BFF_p, |F|=r} 
\sum_{ \mathbf{x} = (x_1,...,x_k)\in F^k } \omega_p^{\mathbf{x} \cdot \mathbf{y} } \\
=& \sum_{x_1, w_2,w_3,...,w_k\in \BFF_p} \theta(w_2,...,w_k)
\omega_p^{x_1 y_1 + (x_1+w_2)y_2+\cdots + (x_1+w_k)y_k }\\
=& \sum_{  w_2,w_3,...,w_k\in \BFF_p} \theta(w_2,...,w_k)
\omega_p^{w_2y_2+\cdots + w_ky_k }
\sum_{x_1\in \BFF_p} \omega_p^{x_1 (y_1  +y_2+\cdots + y_k) }\\
=& p \delta_{y_1+\cdots + y_k =0}  \sum_{  w_2,w_3,...,w_k\in \BFF_p} \theta(w_2,...,w_k)
\omega_p^{w_2y_2+\cdots + w_ky_k }.
\end{align*}
where $\theta(w_2,...,w_k)$ is a constant determined by $w_2,...,w_k$, $p$, and $r$.
Hence, we obtain the result.
\end{proof}

In the following lemma,  we study the equation \eqref{250613eq1} under the condition $y_1+\cdots + y_k=0$.

\begin{lem}\label{250613lem2}
Let $r\ge 2$ and $\mathbf{y} = (y_1,y_2)\in (\BFF_p^*)^2$ such that  $y_1+y_2 =0$.
If $F$ is chosen uniformly randomly from all $r$-subsets of $\BFF_p$, then we have
\begin{align}
    \Expect_{ {F}\subseteq \BFF_p, |F|=r} 
\sum_{ \mathbf{x} = (x_1,x_2)\in F^2 } \omega_p^{\mathbf{x} \cdot \mathbf{y} } = r-\frac{r(r-1)}{p-1}.
\end{align}
\end{lem}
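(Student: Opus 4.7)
The plan is to exploit the constraint $y_2 = -y_1$ and reduce the double sum to a standard character sum over $\BFF_p^*$, after interchanging the order of expectation and summation.

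First, I would rewrite $\mathbf{x}\cdot\mathbf{y} = y_1 x_1 + y_2 x_2 = y_1(x_1 - x_2)$, so that
\begin{align*}
\sum_{(x_1,x_2)\in F^2} \omega_p^{\mathbf{x}\cdot\mathbf{y}}
= \sum_{x\in F} 1 \;+\; \sum_{\substack{x_1,x_2\in F\\ x_1\ne x_2}} \omega_p^{y_1(x_1-x_2)}
= r + \sum_{\substack{x_1,x_2\in F\\ x_1\ne x_2}} \omega_p^{y_1(x_1-x_2)}.
\end{align*}
Taking expectations, the diagonal contribution is exactly $r$ since $|F|=r$ always.

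Next I swap expectation with summation on the off-diagonal part. For any fixed distinct pair $(x_1,x_2)\in \BFF_p^2$, the probability that both lie in a uniformly random $r$-subset $F$ of $\BFF_p$ is $\binom{p-2}{r-2}/\binom{p}{r} = \frac{r(r-1)}{p(p-1)}$, which is independent of the particular pair. Therefore
\begin{align*}
\Expect_{F}\sum_{\substack{x_1,x_2\in F\\ x_1\ne x_2}} \omega_p^{y_1(x_1-x_2)}
= \frac{r(r-1)}{p(p-1)} \sum_{\substack{x_1,x_2\in \BFF_p\\ x_1\ne x_2}} \omega_p^{y_1(x_1-x_2)}.
\end{align*}

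Then I evaluate the unrestricted character sum using $y_1\in\BFF_p^*$: since $\sum_{x\in\BFF_p}\omega_p^{y_1 x}=0$, we get
\begin{align*}
\sum_{x_1,x_2\in \BFF_p} \omega_p^{y_1(x_1-x_2)} = \Bigl(\sum_{x_1}\omega_p^{y_1 x_1}\Bigr)\Bigl(\sum_{x_2}\omega_p^{-y_1 x_2}\Bigr) = 0,
\end{align*}
so the off-diagonal part equals $0 - p = -p$. Combining, the off-diagonal expectation is $\frac{r(r-1)}{p(p-1)}\cdot(-p) = -\frac{r(r-1)}{p-1}$, and adding the diagonal term $r$ gives the claimed identity $r - \frac{r(r-1)}{p-1}$.

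There is no real obstacle here; the only thing to be careful about is the combinatorial probability $\Pr[x_1,x_2\in F]=\frac{r(r-1)}{p(p-1)}$ for distinct $x_1,x_2$, which requires $r\ge 2$ (explaining the hypothesis), and the observation that $y_1\ne 0$ is exactly what forces the full character sum to vanish. The separation into diagonal and off-diagonal parts is the key organizational step.
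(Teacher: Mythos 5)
Your proof is correct and follows essentially the same route as the paper's: both split the sum into the diagonal part $x_1=x_2$ (which contributes $r$ because $y_1+y_2=0$) and the off-diagonal part, and both reduce the latter to the identity $\sum_{w\in\BFF_p^*}\omega_p^{wy}=-1$ for $y\neq 0$. The only cosmetic difference is bookkeeping: you average via the pair-inclusion probability $\tfrac{r(r-1)}{p(p-1)}$ together with the unrestricted character sum, whereas the paper parametrizes a uniformly random ordered distinct pair as $(x_1,x_1+w)$ with $w$ uniform on $\BFF_p^*$; both give the off-diagonal contribution $-\tfrac{r(r-1)}{p-1}$.
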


\begin{proof}
First, we can rewrite the equation as follows
\begin{align}
\Expect_{ {F}\subseteq \BFF_p, |F|=r} 
\sum_{ \mathbf{x} = (x_1,x_2)\in F^2 } \omega_p^{\mathbf{x} \cdot \mathbf{y} } 
&=r^2 \Expect_{ |F|=r} \left( 
\frac{1}{r}\Expect_{\substack{x_1,x_2\in F\\x_1= x_2}}\omega_p^{x_1(y_1+y_2)}+
\frac{r-1}{r}\Expect_{\substack{x_1,x_2\in F\\x_1\neq x_2}}\omega_p^{x_1y_1+x_2y_2}
\right).
\end{align}

Since $y_2$ is nonzero, we have
\begin{align}
\Expect_{|F|=r}\Expect_{\substack{x_1,x_2\in F\\x_1\neq x_2}}\omega_p^{x_1y_1+x_2y_2}
= \Expect_{x_1\in\BFF_p}\Expect_{w\in\BFF_p^*}\omega_p^{x_1(y_1+y_2)+wy_2}
=\frac{-1}{p-1}.
\end{align}
Hence,
\begin{align}
    \Expect_{ {F}\subseteq \BFF_p, |F|=r} 
\sum_{ \mathbf{x} = (x_1,x_2)\in F^2 } \omega_p^{\mathbf{x} \cdot \mathbf{y} } = r-\frac{r(r-1)}{p-1}.
\end{align}
\end{proof}

\begin{lem}\label{250613lem3}
Let  $r\ge 2$ and $\mathbf{y} = (y_1,y_2,y_3)\in (\BFF_p^*)^3$ such that $y_1+y_2 +y_3=0$.
If  $F$ is chosen uniformly randomly from all $r$-subsets of $\BFF_p$, we have
\begin{align}
    \Expect_{ {F}\subseteq \BFF_p, |F|=r} 
\sum_{ \mathbf{x} = (x_1,x_2,x_3)\in F^3 } \omega_p^{\mathbf{x} \cdot \mathbf{y} } 
= \frac{r(p-r)(p-2r)}{(p-1)(p-2)}.
\end{align}
\end{lem}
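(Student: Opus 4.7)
The plan is to expand the expectation by partitioning the triple sum according to the multiplicity pattern of $(x_1,x_2,x_3)$. By linearity of expectation,
\begin{align*}
\Expect_{|F|=r}\sum_{\mathbf{x}\in F^3}\omega_p^{\mathbf{x}\cdot\mathbf{y}} \;=\; \sum_{\mathbf{x}\in \BFF_p^3} \omega_p^{\mathbf{x}\cdot\mathbf{y}}\, \mathbb{P}\bigl[\{x_1,x_2,x_3\}\subseteq F\bigr],
\end{align*}
and the probability on the right depends only on $k$, the number of distinct coordinates in $(x_1,x_2,x_3)$, taking the values $r/p$, $r(r-1)/(p(p-1))$, and $r(r-1)(r-2)/(p(p-1)(p-2))$ for $k=1,2,3$ respectively.

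Next I would split the unrestricted sum $\sum_{\mathbf{x}\in\BFF_p^3}\omega_p^{\mathbf{x}\cdot\mathbf{y}}$ into the partial sums $D_1,D_2,D_3$ matching these multiplicity patterns. Using the hypothesis $y_1+y_2+y_3=0$, the diagonal piece collapses to $D_1=\sum_{x\in\BFF_p}1=p$. For $D_2$ I would treat the three subcases separately: for example, $x_1=x_2\neq x_3$ contributes $\omega_p^{(x_3-x_1)y_3}$ after substituting $y_1+y_2=-y_3$, and the variable change $u=x_3-x_1\in\BFF_p^*$ gives $p\sum_{u\in\BFF_p^*}\omega_p^{uy_3}=-p$ since $y_3\neq 0$. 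The other two subcases are symmetric, so $D_2=-3p$.

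The only step that might look problematic is the direct evaluation of $D_3$, but this can be bypassed by a short trick: the unrestricted triple sum factorizes as $\prod_{i=1}^3\sum_{x\in\BFF_p}\omega_p^{xy_i}=0$ because each $y_i$ is nonzero, so $D_1+D_2+D_3=0$ forces $D_3=-D_1-D_2=2p$. Substituting these three values and the three probabilities into the expansion above then yields
\[
r - \frac{3r(r-1)}{p-1} + \frac{2r(r-1)(r-2)}{(p-1)(p-2)}.
\]
A routine simplification over the common denominator $(p-1)(p-2)$ identifies the numerator as $r\bigl[(p-1)(p-2)-3(r-1)(p-2)+2(r-1)(r-2)\bigr]=r(p^2-3rp+2r^2)=r(p-r)(p-2r)$, which gives the claimed formula.
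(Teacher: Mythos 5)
Your proof is correct and follows essentially the same route as the paper's: both partition the triple sum according to the coincidence pattern of $(x_1,x_2,x_3)$ (all equal, exactly two equal, all distinct) and arrive at the identical intermediate expression $r-\frac{3r(r-1)}{p-1}+\frac{2r(r-1)(r-2)}{(p-1)(p-2)}$ before the same algebraic simplification. The only difference is cosmetic bookkeeping: you pull the expectation inside as inclusion probabilities $\mathbb{P}[\{x_1,x_2,x_3\}\subseteq F]$ and get the all-distinct contribution indirectly from the vanishing of $\prod_{i=1}^{3}\sum_{x\in\mathbb{F}_p}\omega_p^{xy_i}$, whereas the paper conditions on $F$ and evaluates the corresponding distinct-pair character sum $\sum_{w_1\neq w_2\in\mathbb{F}_p^*}\omega_p^{w_1y_2+w_2y_3}=2$ directly.
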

\begin{proof}
There are three possible cases for $x_1$, $x_2$ and $x_3$: 
(a) all are identical; (b) all are pairwise distinct; (c) they take two distinct values.
Hence,
\begin{align*}
&\Expect_{ {F}\subseteq \BFF_p, |F|=r} 
\sum_{ \mathbf{x} = (x_1,x_2,x_3)\in F^3 } \omega_p^{\mathbf{x} \cdot \mathbf{y} } \\
&= r^3 \Expect_{ |F|=r} \left( 
\frac{1}{r^2}\Expect_{\substack{x_1,x_2,x_3\in F\\x_1= x_2=x_3}}\omega_p^{x_1(y_1+y_2+y_3)} +
\frac{3(r-1)}{r^2}\Expect_{\substack{x_1,x_2,x_3\in F\\x_1\neq x_2=x_3}}\omega_p^{x_1y_1+x_2y_2+x_3y_3}
\right.
\\ &\quad+
\left.
\frac{(r-1)(r-2)}{r^2}\Expect_{\substack{x_1,x_2,x_3\in F\\x_1, x_2,x_3\text{ distinct }}}\omega_p^{x_1y_1+x_2y_2+x_3y_3}
\right)\\
&= r^3  \left( 
\frac{1}{r^2} +
\frac{3(r-1)}{r^2} 
\Expect_{\substack{w\in \BFF_p^*}}
\omega_p^{ w (y_2+y_3)} 
+
\frac{(r-1)(r-2)}{r^2} 
\Expect_{\substack{w_1,w_2\in \BFF_p^*\\w_1\neq w_2}}
\omega_p^{ w_1 y_2+w_2y_3} 
\right).
\end{align*}
Since $y_1+y_2+y_3=0$ and $y_1\neq 0$, we  have $y_2+y_3\neq 0$.
Hence,
$$\Expect_{\substack{w\in \BFF_p^*}}
\omega_p^{w (y_2+y_3)}= \frac{-1}{p-1}.$$
In addition,
\begin{align*}
\sum_{\substack{w_1,w_2\in \BFF_p^*\\w_1\neq w_2}}
\omega_p^{ w_1 y_2+w_2y_3} &= 
\sum_{\substack{w_1,w_2\in \BFF_p^*}}
\omega_p^{ w_1 y_2+w_2y_3} 
- 
\sum_{\substack{w_1,w_2\in \BFF_p^*\\w_1= w_2}}
\omega_p^{ w_1 y_2+w_2y_3} \\
&= 
\left(\sum_{\substack{w_1\in \BFF_p^*}}
\omega_p^{ w_1 y_2} \right)
\left(\sum_{\substack{w_2\in \BFF_p^*}}
\omega_p^{ w_2 y_3}\right)
- 
\sum_{\substack{w_1\in \BFF_p^*}}
\omega_p^{ w_1( y_2+y_3)} \\
&= 2.
\end{align*}
Therefore,
\begin{align*}
\Expect_{ {F}\subseteq \BFF_p, |F|=r} 
\sum_{ \mathbf{x} = (x_1,x_2,x_3)\in F^3 } \omega_p^{\mathbf{x} \cdot \mathbf{y} }  
&= r^3  \left( 
\frac{1}{r^2} -
\frac{3(r-1)}{r^2}  \frac{1}{p-1} + \frac{(r-1)(r-2)}{r^2}\frac{2}{(p-1)(p-2)}
\right)\\
&= r-\frac{3r(r-1)}{(p-1)} + \frac{2r(r-1)(r-2)}{(p-1)(p-2)}\\
&= \frac{r(p-r)(p-2r)}{(p-1)(p-2)}.
\end{align*}

\end{proof}

\end{appendix}

\section*{Acknowledgments}
K. B. is partly supported by the JobsOhio GR138220, and ARO Grant W911NF19-1-0302 and the ARO MURI Grant W911NF-20-1-0082.
D.E.K.\ is supported by the Agency for Science, Technology and Research (A*STAR) under the Central Research Fund (CRF) Award for Use-Inspired Basic Research (UIBR) and the Quantum Innovation Centre (Q.InC) Strategic Research and Translational Thrust (SRTT).

\bibliographystyle{unsrturl}
\bibliography{reference}{}
\end{document}